\newtheorem{obs}{Observation}
\newtheorem{reducao}{Reduction Rule}
\newtheorem{branch}{Branching}
\newcommand{\RkSB}[1]{\textcolor{blue}{SB: #1}}
\title{Width Parameterizations for {Knot-free~Vertex~Deletion} on Digraphs} 
\titlerunning{KFVD}
\author{Stéphane Bessy}{Université de Montpellier - CNRS, LIRMM, Montpellier, France }{stephane.bessy@lirmm.fr}{}{}
\author{Marin Bougeret}{Université de Montpellier - CNRS, LIRMM, Montpellier, France  }{marin.bougeret@lirmm.fr}{}{}
\author{Alan D. A. Carneiro}{Universidade Federal Fluminense - Instituto de Computação, Niterói, Brazil }{aaurelio@ic.uff.br}{}{}
\author{Fábio Protti}{Universidade Federal Fluminense - Instituto de Computação, Niterói, Brazil}{fabio@ic.uff.br}{}{}
\author{Uéverton S. Souza\footnote{corresponding author}}{Universidade Federal Fluminense  - Instituto de Computação, Niterói, Brazil }{ueverton@ic.uff.br}{}{}
\authorrunning{Bessy et al.}
\keywords{Knot, deadlock, width measure, FPT, W[1]-hard, directed feedback vertex set}
\newcommand{\fixme}[1]{\textcolor{red} {#1}}
\begin{document}

\maketitle

\begin{abstract}
A knot in a directed graph $G$ is a strongly connected subgraph $Q$ of
$G$ with at least two vertices, such that no vertex in $V(Q)$ is an
in-neighbor of a vertex in $V(G)\setminus V(Q)$. Knots are important
graph structures, because they characterize the existence of deadlocks
in a classical distributed computation model, the so-called
OR-model. Deadlock detection is correlated with the recognition of
knot-free graphs as well as deadlock resolution is closely related to
the {\sc Knot-Free Vertex Deletion (KFVD)} problem, which consists of
determining whether an input graph $G$ has a subset $S \subseteq
V(G)$ of size at most $k$ such that $G[V\setminus S]$ contains no
knot. Because of natural applications in deadlock resolution, {\sc KFVD}
is closely related to {\sc Directed Feedback Vertex Set}.
In this paper we focus on graph width measure parameterizations for
{\sc KFVD}. First, we show that: (i) {\sc KFVD} parameterized by the
size of the solution $k$ is W[1]-hard even when $p$, the length of a longest directed path of the input graph, as well as $\kappa$, its Kenny-width, are bounded by constants, and we remark that {\sc KFVD} is para-NP-hard even considering many directed width measures as parameters, but in FPT when parameterized by clique-width; (ii) {\sc KFVD} can be solved in time $2^{O(tw)}\times n$, but assuming ETH it cannot be solved in $2^{o(tw)}\times n^{O(1)}$, where $tw$ is the treewidth of the underlying undirected graph.
Finally, since the size of a minimum directed feedback vertex set ($dfv$) is an upper bound for the size of a minimum knot-free vertex deletion set, we investigate parameterization by $dfv$ and we show that (iii) {\sc KFVD} can be solved in FPT-time
parameterized by either $dfv+\kappa$ or $dfv+p$; and it admits a Turing kernel by the distance to a DAG having an Hamiltonian path (another parameter larger than $dfv$). Results of $(iii)$ cannot be improved when replacing $dfv$ by $k$ due to $(i)$.
\end{abstract}
  
\section{Introduction}

The study of the {\sc Knot-Free Vertex Deletion} problem emerges from
its application in resolution of deadlocks, where a deadlock is
detected in a distributed system and then a minimum cost
deadlock-breaking set must be found and removed from the system. More
precisely, distributed computations are usually represented by
directed graphs called {\em wait-for graphs}. In a wait-for graph $G =
(V,E)$, the vertex set $V$ represents processes, and the set $E$ of
directed arcs represents wait conditions~\cite{barbosa1998}. An arc
exists in $E$ directed away from $v_i\in V$ towards $v_j\in V$ if
$v_i$ is blocked waiting for a signal from $v_j$. The graph $G$
changes dynamically according to a set of prescribed rules (the {\em
  deadlock model}), as the computation progresses. In essence, the
deadlock model governs how processes should behave throughout
computation, i.e., the deadlock model specifies rules for vertices
that are not \textit{sinks} (vertices with at least one out-neighbor) in $G$ to become
sinks~\cite{barbosa2002} (vertices without out-neighbors). The two main classic deadlock models are the
{\sc AND model}, in which a process $v_i$ can only become a sink when
it receives a signal from {\em all} the processes in $N^+(v_i)$, where
$N^+(v_i)$ stands for the set of out-neighbors of $v_i$ (a conjunction of
resources is needed); and the {\sc OR model}, in which it suffices for
a process $v_i$ to become a sink to receive a signal from {\em at
  least one} of the processes in $N^+(v_i)$ (a disjunction of resources is
sufficient).
Distributed computations are dynamic, however deadlock is a stable
property, in the sense that once it occurs in a consistent global
state of a distributed computation, it still holds for all the
subsequent states. Therefore, as is typical in deadlock studies,
$G$ represents a static wait-for
graph that corresponds to a {\it snapshot} of the distributed
computation in the usual sense of a consistent global
state~\cite{chandy1985}.
Thus, the motivation of our work comes from {\em deadlock resolution},
where deadlocks are detected into a consistent global state $G$, and must be solved through some external intervention such as
aborting one or more processes to break the circular wait condition
causing the deadlock.

Deadlock resolution problems differ according to the considered
deadlock model, i.e., according to the graph structure that
characterizes the deadlock situation.  In the {\sc AND-model}, the
occurrence of deadlocks is characterized by the existence of
cycles~\cite{barbosa2002,barbosa-et-al-16}. Therefore, deadlock
resolution by vertex deletion in the {\sc AND-model} corresponds precisely to the well-known
{\sc Directed Feedback Vertex Set (DFVS)} problem, proved to be NP-hard in the
seminal paper of Karp~\cite{karp1972}, and proved to be FPT
in~\cite{chen2008fixed}. On the other hand, the occurrence of
deadlocks in wait-for graphs $G$ working according to the OR-model are
characterized by the existence of {\em knots} in
$G$~\cite{barbosa-et-al-16,holt1972}. A knot in a directed graph $G$
is a strongly connected subgraph $Q$ of $G$ with at least two vertices
such that there is no arc $uv$ of $G$ with $u \in V(Q)$ and $v \notin
V(Q)$. Thus, deadlock resolution by vertex deletion in the {\sc
  OR-model} can be viewed as the following problem.
\smallskip

\noindent    \fbox{
        \parbox{.965\textwidth}{
\noindent
{\sc \textsc{Knot-Free Vertex Deletion (KFVD)}}

\noindent
\textbf{Instance}: A directed graph $G=(V,E)$; a positive integer $k$.


\noindent
\textbf{Question}: Determine if $G$ has a set $S \subset V(G)$ such that $|S| \leq k$ and $G[V\setminus S]$ is knot-free.
}
}
\smallskip

Notice that a digraph $G$ is knot-free if and only if for any vertex $v$ of $G$, $v$ has a path to a sink.

In~\cite{carneiro2019deadlock}, Carneiro, Souza, and Protti proved
that {\sc KFVD} is NP-complete; and, in~\cite{cocoon2018}, it was shown that {\sc KFVD} is W[1]-hard when parameterized by $k$.

 {\sc KFVD} is closely related to {\sc DFVS} not only because of their
 relation to deadlocks, but also some structural similarities between
 them: the goal of {\sc DFVS} is to obtain a direct acyclic graph
 (DAG) via vertex deletion (in such graphs \emph{all} maximal directed
 paths end at a sink); the goal of {\sc KFVD} is to obtain a knot-free
 graph, and in such graphs for every vertex $v$ there \emph{exists} at
 least one maximal path containing $v$ that ends at a sink. Finally,
 every directed feedback vertex set is a knot-free vertex deletion
 set; thus an optimum for {\sc DFVS} provides an upper bound for {\sc
   KFVD}.  Although {\sc Directed Feedback Vertex Set} is a well-known
 problem, this is not the case of {\sc Knot-Free Vertex Deletion},
 which we propose to analyze more deeply in this work.

Let $S$ be a solution for {\sc KFVD}, and let $Z$ be the set of sinks
in $G[V\setminus S]$. One can see that any $v \in V \setminus S$ has a
path (that does not use any vertex in $S$) to a vertex in $Z$. Thus,
{\sc KFVD} can be seen as the problem of creating a set $Z$ of sinks
(doing at most $k$ vertex removals) such that every remaining vertex
has a path (in $G[V \setminus S]$) to a vertex in $Z$. In this paper,
we denote the set of deleted vertices by $S$, and the set of sinks in
$G[V \setminus S]$ by $Z$.

To get intuition on {\sc KFVD}, note that the choice of the vertices
to be removed must be carefully done, since the removal of a subset of
vertices can turn some strongly connected components into new knots
that will need to be broken by the removal of some internal
vertices. Ideally, it is desirable to solve the current knots by
removing as few vertices as possible for each knot, without creating
new ones. Unfortunately, the generation of other knots can not
always be avoided.



In~\cite{cocoon2017,carneiro2019deadlock}, Carneiro, Souza, and Protti
present a polynomial-time algorithm for {\sc KFVD} in graphs with
maximum degree three. They also show that the problem is NP-complete
even restricted to planar bipartite graphs $G$ with maximum degree four. Later,
in~\cite{cocoon2018}, a parameterized analysis of {\sc KFVD} is
presented, where it was shown that: {\sc KFVD} is W[1]-hard when
parameterized by the size of the solution; and it can be solved in
$2^{k\log \varphi}n^{O(1)}$ time, but assuming SETH it cannot be
solved in $(2-\epsilon)^{k\log \varphi}n^{O(1)}$ time, where $\varphi$
is the size of the largest strongly connected subgraph.


Since the introduction of directed treewidth, much effort has been
devoted to identify algorithmically useful digraph width
measures~\cite{Oliveira2016}. Useful width measures imply
polynomial time tractability for many combinatorial problems on
digraphs of constant width. Since {\sc KFVD} is W[1]-hard when
parameterized by $k$, in this paper we investigate the ecology
of width measures in order to find useful parameters to solve {\sc KFVD} in
FPT time. First, taking $k$ as parameter, we show that {\sc KFVD}
remains W[1]-hard even on instances with both longest directed path
and K-width bounded by constants. From the same reduction, it
follows that {\sc KFVD} is para-NP-hard even considering many width
measures as parameters, such as directed treewidth and DAG-width.
Contrasting with the hardness of {\sc KFVD} on several directed width measure
parameterizations, we show that {\sc KFVD} is FPT when parameterized by
the clique-width of the underlying undirected graph; and it can be solved in $2^{O(tw)}\times n$ time,
but assuming ETH it cannot be solved in $2^{o(tw)}\times n^{O(1)}$ time,
where $tw$ is the treewidth of the underlying undirected graph.
After that, we consider the most natural width parameter related to {\sc KFVD},
the size of a minimum directed feedback vertex set ($dfv$). Such a
parameter is at the same time a measure of the distance from the input
graph to a DAG as well as an upper bound for the size of a
minimum knot-free vertex deletion set.  We show that {\sc KFVD} can be
solved in FPT time either parameterized by $dfv$ and K-width, or $dfv$ and the length of a
longest directed path.  The complexity of {\sc  KFVD} parameterized only by $dfv$ remains open. Finally, we present a polynomial Turing kernel when we are given a special directed feedback vertex set whose removal gives a DAG having a directed Hamiltonian path.

In the rest of this section we give necessary definitions and concepts used in this work. In Section~\ref{sec:redrules} we present some useful observations and preliminary results. In Section~\ref{sec:digmeasures} we discuss digraph width measures and show the W[1]-hardness. In Section~\ref{sec:tw} we discuss the consequences of treewidth parameterization. In Section~\ref{sec:feedback} we explore the directed
feedback vertex set number as a parameter. Finally, Section~\ref{sec:hamiltonian} considers the parameterization by distance to a DAG having a Hamiltonian path.
%


\noindent \textbf{Additional notation.}  We use standard
graph-theoretic and parameterized complexity notations and concepts,
and any undefined notation can be found
in~\cite{bondy1976,cygan2015parameterized}.
%
We consider here directed graphs. Given a vertex $v$ and a subset of vertices $Z$, we say that there is a path from $v$ to $Z$ iff there exists $z \in Z$ such that there is a
$vz$-(directed) path.  For $v \in V(G)$, let $D(v)$ denote the set of
descendants of $v$ in $G$ , i.e. nodes that are reachable from $v$
by a non-empty directed path. Given a set of vertices $C = \{v_1,v_2, \dots, v_p\}$ of $G$, we define $D(C)=\bigcup_{i=1}^{p} D(v_i)$.
Let $A(v_i)$ denote the set of ancestors of $v_i$ in $G$, i.e., nodes
that reach $v_i$ through a non-empty directed path. We also define
$A[v_i] = A(v_i) \cup \{v_i\}$, and given a set of vertices $C = \{v_1,
v_2, \dots, v_p\}$ of $G$, we define $A(C)=\bigcup_{i=1}^{p}
A(v_i)$. For a vertex $v$ of $G$, the out-neighborhood of $v$ is
denoted by $N^+(v)=\{u | vu \in E\}$, and given a set of vertices $C =
\{v_1, v_2, \dots, v_p\}$, we define $N^+(C)= \bigcup_{i=1}^p N^+(v_i)
\setminus C$.
We refer to a  Strongly Connected Component as an SCC. 
A knot in a directed graph $G$ is an SCC $Q$ of $G$ with at least two
vertices such that there is no arc $uv$ of $G$ with $u \in V(Q)$ and
$v \notin V(Q)$.  Finally, a sink (resp. a source) of $G$ is a vertex
with out-degree $0$ (resp. in-degree $0$).  Given a subset of vertices
$S$, we denote $G_S = G[S]$ and $\bar{S}=V \setminus S$. Thus,
$G_{\bar{S}}$ denote the graph obtained by removing $S$.

We denote by $dfv(G)$ the size of a minimum directed feedback vertex
set of $G$.  We generally use $F$ to denote a directed feedback vertex
set and by $R$ the remaining subset, i.e., $R=V \setminus F$.  The
length of a longest directed path of $G$ is denoted by $p(G)$.  The
Kenny-width~\cite{ganian2014} or K-width of $G$ is denoted by
$\kappa(G)$ and is the maximum number of distinct directed $st$-paths
in $G$ over all pairs of distinct vertices $s, t \in V(G)$, where two
$st$-paths are distinct iff they do not use the exact same set of
arcs.
For any function $g$ (like $dfv$, $\kappa$, $p$), $g(G)$ will be
denoted simply by $g$ when the considered graph $G$ can be deduced from the context.  In
what follows we denote by {\sc $g$-KFVD} the KFVD problem parameterized by
$g$ ($g = k$ denotes the parameterization by the solution size).

\if 10 

\medskip

\noindent \textbf{Parameterized Complexity.}  A parameterized problem
$\Pi$ is called Fixed Parameter Treatable (FPT) if there is an
algorithm $A$ (called FPT-algorithm) that computes every instance $I =
(\chi, k)$ correctly and decides if $I$ is a {\em yes}- or {\em no}-
instance in time $f(k).|\chi|^{O(1)}$ for some computable function
$f$. Thus, if $k$ is set to a small value, the growth of the function
in relation to $\chi$ is relatively small.

\RkSB{I think that the following paragraph is too detailled to define
  W[1], to shorten if you agree...} Within parameterized problems the
$W$ hierarchy ($FPT \subseteq W[1] \subseteq W[2] \subseteq \dots
\subseteq W[t] \subseteq W[P]$) is a collection of computational
complexity classes that accounts the level of parameterized
intractability. The weft of a circuit is the maximum number of large
nodes on a path from an input node to the output node. We denote by
$\mathcal{C}_{t,d}$ the class of circuits with weft at most $t$ and
depth at most $d$. For $t\geq 1$, a parameterized problem $\Pi$
belongs to the class $W[t]$, if there is a FPT-reduction from $\Pi$ to
{\sc Weighted Circuit Satisfiability} on $\mathcal{C}_{t,d}$, for some
$d\geq 1$.

The transformation called {\em FPT-reduction} is presented next:

\begin{definition}
\textbf{\cite{flum2006} FPT-reduction:} Let $\Pi(k)$ and $\Pi'(k')$ be
two parameterized problems where $k' \leq f(k)$ for some computable
function $f:\mathbb{N} \rightarrow \mathbb{N}$. A FPT-reduction of
$\Pi(k)$ to $\Pi'(k')$ is a reduction $R$ such that:
\begin{enumerate}
\item[i)] for all $x$, we have $x \in \Pi(k)$ if and only if $R(x) \in \Pi'(k')$;
\item[ii)] $R$ is computable in FPT time (in relation to $k$).
\end{enumerate}
\end{definition}

\RkSB{Maybe it could be good to say that pi FPT implies pi' FPT...}

\RkSB{$PH$ and $Sigma_p^3$ are unused..}\\

\noindent {\bf Additional notation.}  We use $PH$ to denote the
polynomial hierarchy, and $\Sigma_p^3$ to denote its third level.
\fi

\section{Preliminaries}
\label{sec:redrules}

In this section we present some useful remarks and reduction rules.
Remind that in the decision version of the problem we are given $G$
and a positive integer $k$.



\noindent
The first observation is immediate, as if we can make the graph
acyclic, then it will be knot-free.

\begin{obs}\label{viability}
If $k\ge dfv(G)$ then $G$ is a {\em yes}-instance.
\end{obs}

\noindent The two others observations are less obvious but rather natural.

\begin{obs}\label{lem:re-add}
  Let $S$ be a solution with set of sinks $Z$ in $G_{\bar{S}}$, and $s
  \in S$.  Let $S' = S \setminus \{s\}$ and $Z'$ be the set of sinks
  of $G_{\bar{S'}}$.  If there is a path from $s$ to $Z'$ in
  $G_{\bar{S'}}$ then $S'$ is also a solution.
\end{obs}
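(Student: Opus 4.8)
The plan is to show directly that $G_{\bar{S'}}$ is knot-free, i.e.\ that every vertex of $\bar{S'}$ has a directed path to some sink of $G_{\bar{S'}}$ — by definition, to a vertex of $Z'$. First I would record two easy facts about what happens when the single vertex $s$ is put back into the graph. Since $\bar S \subseteq \bar{S'}$, every arc of $G_{\bar S}$ is still an arc of $G_{\bar{S'}}$, so any directed path of $G_{\bar S}$ is still a directed path of $G_{\bar{S'}}$. Moreover, a sink $z$ of $G_{\bar S}$ either remains a sink of $G_{\bar{S'}}$ (so $z \in Z'$) or it acquires an out-neighbour when $s$ is re-added; since $z$ had no out-neighbour in $\bar S$, this new out-neighbour can only be $s$, i.e.\ $zs \in E$.

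With these two facts the argument splits into two cases according to which vertex $v \in \bar{S'}$ we consider. If $v = s$, the hypothesis already provides a path from $s$ to $Z'$ in $G_{\bar{S'}}$, and we are done. If $v \in \bar S$, then since $S$ is a solution, $v$ has a directed path $P$ to some $z \in Z$ inside $G_{\bar S}$, and $P$ survives in $G_{\bar{S'}}$. If $z \in Z'$ we are done immediately; otherwise, by the observation above $zs \in E$, so concatenating $P$, then the arc $zs$, then the path from $s$ to $Z'$ guaranteed by the hypothesis yields a walk from $v$ to $Z'$ in $G_{\bar{S'}}$, hence a directed path from $v$ to $Z'$. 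In either case $v$ reaches $Z'$, so $G_{\bar{S'}}$ is knot-free and $S'$ is a solution.

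The only genuine subtlety — and the reason the hypothesis ``there is a path from $s$ to $Z'$'' cannot be dropped — is that re-adding $s$ may turn some old sinks of $G_{\bar S}$ into non-sinks, so $Z'$ need not contain $Z$. The step that requires a moment's care is therefore the claim that every old sink destroyed by re-adding $s$ must point directly to $s$, which is what lets us reroute a path through $s$; everything else is routine bookkeeping about paths being preserved under vertex addition and about walks yielding paths.
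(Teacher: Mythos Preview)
Your proof is correct and follows essentially the same route as the paper's: both argue that a vertex $v\neq s$ reaches some $z\in Z$ along a path that survives in $G_{\bar{S'}}$, and if $z\notin Z'$ then necessarily $zs$ is an arc, so one reroutes through $s$ and invokes the hypothesis. Your explicit remark that a destroyed old sink must point directly to $s$ is exactly the key observation the paper uses (implicitly via ``there is $s\in N^+(z)$'').
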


\begin{proof}
 Let $u \in V(G_{\bar{S'}})$. Let us prove that $u$ has a path to $Z'$ in $G_{\bar{S'}}$.
 If $u=s$ then it is clear by assumption. Suppose now that $u \neq s$.
 As $S$ is a solution, let $P$ be a $uz$-path in $G_{\bar{S}}$ from $u$ to a sink $z \in Z$.
As $V(G_{\bar{S}}) \subseteq V(G_{\bar{S'}})$, $P$ still exists in $G_{\bar{S'}}$. Thus, if $z \in Z'$ we are done. Otherwise, it implies that there is $s \in N^+(z)$ such that $P'=(u,\dots,z,s)$ is a $us$-path in $G_{\bar{S'}}$. As $s$ has a path to $Z'$ in $G_{\bar{S'}}$, we obtain the desired result.
\end{proof}

\noindent  Informally, after deleting a vertex $s$, we can add $s$ back to
 the graph when it is certain that $s$ has a path to a sink in the
 current graph. This is detailed by the following lemma and its corollary.

\begin{lemma}\label{lem:addback}
Let $S$ be a solution with set of sinks $Z$ in $G_{\bar{S}}$.  If
there exists $s \in S$ with $s \notin N^+(Z)$, then $S' = S \setminus \{s\}$ is also a solution.
\end{lemma}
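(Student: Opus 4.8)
The plan is to deduce this from Observation~\ref{lem:re-add}: writing $Z'$ for the set of sinks of $G_{\bar{S'}}$, it is enough to exhibit a directed path from $s$ to $Z'$ in $G_{\bar{S'}}$ (and to treat separately the degenerate case where $s$ itself becomes a sink after re-insertion). Equivalently, one can argue directly that every vertex of $G_{\bar{S'}} = G[\bar{S} \cup \{s\}]$ reaches a sink of that graph.

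The single place where the hypothesis $s \notin N^+(Z)$ enters is the claim that re-inserting $s$ destroys no old sink, i.e. $Z \subseteq Z'$. Indeed, for $z \in Z$ the only arc leaving $z$ that $G_{\bar{S'}}$ could have and $G_{\bar{S}}$ did not is an arc $zs$; but $zs \in E$ would give $s \in N^+(z) \subseteq N^+(Z)$, contradicting the assumption. Hence each $z \in Z$ stays a sink in $G_{\bar{S'}}$, and every directed path ending in $Z$ inside $G_{\bar{S}}$ is still a path to a sink inside $G_{\bar{S'}}$.

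Given this, the rest is routine path-chasing. For a vertex $u \in \bar{S}$, the solution property of $S$ gives a $uz$-path $P$ with $z \in Z$ in $G_{\bar{S}}$; since $G_{\bar{S}}$ is an induced subgraph of $G_{\bar{S'}}$, the path $P$ survives, and by the previous paragraph $z \in Z'$. For $u = s$: if $s$ has no out-neighbour in $\bar{S}$, then $s$ is itself a sink of $G_{\bar{S'}}$ and there is nothing to prove; otherwise pick $w \in N^+(s) \cap \bar{S}$, take a $wz$-path to some $z \in Z$ furnished by the solution property of $S$, and prepend the arc $sw$ to obtain an $sz$-path in $G_{\bar{S'}}$ with $z \in Z'$ — which is exactly the path required in order to apply Observation~\ref{lem:re-add}. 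Either way, every vertex of $G_{\bar{S'}}$ reaches a sink, so $S'$ is a solution.

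I do not foresee a genuine obstacle: the whole argument hinges on the observation $Z \subseteq Z'$, which is immediate from $s \notin N^+(Z)$, and everything after that is bookkeeping with already-available paths plus at most one prepended arc. The only subtlety worth spelling out is the case where $s$ has become a sink after re-insertion, so that no nontrivial path out of $s$ is needed; one should either note that the reading of ``path from $s$ to $Z'$'' in Observation~\ref{lem:re-add} permits the trivial path, or dispatch that case by hand as above.
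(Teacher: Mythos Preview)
Your proof is correct and follows essentially the same route as the paper: invoke Observation~\ref{lem:re-add}, observe that $s\notin N^+(Z)$ forces $Z\subseteq Z'$, and then either $s$ is itself a sink in $G_{\bar{S'}}$ or one prepends the arc $sw$ to a $wz$-path with $z\in Z$ supplied by the solution property of $S$. Your explicit treatment of vertices $u\in\bar{S}$ is redundant (that is exactly what Observation~\ref{lem:re-add} already handles), but otherwise the argument matches the paper's.
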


\begin{proof}
Let $Z'$ be the set of sinks of $G_{\bar{S'}}$.
According to Observation~\ref{lem:re-add}, it suffices to prove that there is a path from $s$ to $Z'$ in $G_{\bar{S'}}$. If $s$ is a sink in $G_{\bar{S'}}$ we are done. Otherwise, there exists an arc $su$ in $G_{\bar{S'}}$, with $u \in V(G_{S})$. As $S$ is a solution, either $u$ is a sink and we are done, or,  there exists a $uz$-path $P$ in $G_{\bar{S}}$ with $z \in Z$. As $V(G_{\bar{S}}) \subseteq V(G_{\bar{S'}})$, $P$ still exists in $G_{\bar{S'}}$, 
and $s \notin N^+(Z)$, $z$ is still a sink in $G_{\bar{S'}}$.
\end{proof}

The following corollary is immediate.

\begin{corollary}\label{lem:NZ}
  In any optimal solution $S$ with set of sinks $Z$ in $G_{\bar{S}}$,
  we have $N^+(Z)=S$.
\end{corollary}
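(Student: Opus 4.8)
The plan is to prove the statement by establishing the two inclusions $S \subseteq N^+(Z)$ and $N^+(Z) \subseteq S$ separately; together they give $N^+(Z) = S$.

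The first inclusion is just the contrapositive of Lemma~\ref{lem:addback}. Indeed, if there were some $s \in S$ with $s \notin N^+(Z)$, then by that lemma $S' = S \setminus \{s\}$ would still be a solution, and $|S'| < |S|$ would contradict the optimality (minimality of size) of $S$. Hence every vertex of $S$ belongs to $N^+(Z)$. Note this is the only place where optimality is used, and only through minimality of $|S|$, so the conclusion in fact holds for any inclusion-wise minimal solution.

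For the reverse inclusion I would argue directly from the definition of a sink. Let $v \in N^+(Z)$; by definition of $N^+(\cdot)$ there is some $z \in Z$ with $zv \in E(G)$ and $v \notin Z$. Since $z$ is a sink of $G_{\bar{S}}$, in particular $z \in V(G_{\bar{S}})$, i.e.\ $z \notin S$. Suppose, for contradiction, that $v \notin S$ as well. Then $v \in V(G_{\bar{S}})$ too, so the arc $zv$ survives in $G_{\bar{S}}$, meaning $z$ has out-degree at least one in $G_{\bar{S}}$ — contradicting that $z$ is a sink there. Therefore $v \in S$, which gives $N^+(Z) \subseteq S$, and combining the two inclusions finishes the proof.

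I do not expect any genuine obstacle here: the argument is short, as the authors indicate. The only points requiring a little care are bookkeeping ones — remembering that $N^+(\cdot)$ is taken in $G$ rather than in $G_{\bar{S}}$ (per the notation fixed earlier), and that by definition $N^+(Z)$ already excludes $Z$ itself, which is exactly what makes the second inclusion go through.
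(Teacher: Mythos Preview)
Your proof is correct and is exactly the argument the paper has in mind: the authors simply declare the corollary ``immediate'' from Lemma~\ref{lem:addback}, and your two inclusions spell this out precisely --- the contrapositive of Lemma~\ref{lem:addback} plus optimality gives $S \subseteq N^+(Z)$, while $N^+(Z) \subseteq S$ is forced by the definition of a sink in $G_{\bar{S}}$. Your bookkeeping remarks about $N^+(\cdot)$ being taken in $G$ and excluding $Z$ are also on point.
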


\begin{obs}\label{obs:degree}
Let $S$ be a knot-free vertex deletion with set of sinks $Z$ in
$G_{\bar{S}}$. If $|S|\leq k$ then for any vertex $v$ with $d^+(v) >
k$ it holds that $v\notin Z$.
\end{obs}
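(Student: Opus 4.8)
The plan is to prove the contrapositive: I would assume that $v \in Z$, i.e., that $v$ is a sink of $G_{\bar{S}}$, and derive $d^+(v) \le k$. The single substantive step is to notice that being a sink in $G_{\bar{S}} = G[V \setminus S]$ forces every out-neighbor of $v$ (in the original graph $G$) to have been deleted: if some $u \in N^+(v)$ were not in $S$, then the arc $vu$ would survive in $G_{\bar{S}}$, contradicting $d^+_{G_{\bar{S}}}(v) = 0$. Hence $N^+(v) \subseteq S$.

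From there the bound is immediate: $d^+(v) = |N^+(v)| \le |S| \le k$, where the last inequality is precisely the hypothesis $|S| \le k$. Taking the contrapositive, any vertex $v$ with $d^+(v) > k$ cannot lie in $Z$, which is the claim. (One could alternatively phrase this as a one-line consequence of Corollary~\ref{lem:NZ} for optimal solutions, since $N^+(Z) = S$ implies $N^+(v) \subseteq S$ for each $v \in Z$, but the direct argument above does not even need optimality.)

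There is essentially no obstacle here; the only point requiring a moment of care is bookkeeping about the ambient graph — $d^+(v)$ is measured in $G$, whereas "sink" refers to $G_{\bar{S}}$ — and the inclusion $N^+_G(v) \subseteq S$ is exactly the bridge between the two. I would keep the proof to these two or three lines.
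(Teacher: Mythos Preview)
Your proof is correct and is exactly the intended argument; the paper in fact states this observation without proof, treating it as immediate for precisely the reason you give (namely, $v \in Z$ forces $N^+(v) \subseteq S$, hence $d^+(v) \le |S| \le k$).
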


\noindent
To complete the previous observations, we can design two general
reduction rules.

\begin{reducao}\label{sccsize1}
If $v \in V(G)$ is an SCC of size one then remove $A[v]$.
\end{reducao}


\begin{proof}
Let $G'$ be the graph obtained by removing $A[v]$.
Let of first show that $(G,k)$ is a {\em yes}-instance implies that $(G',k)$ is also a {\em yes}-instance.
Let $S$ be a solution of $G$ of size at most $k$ with set of sinks $Z$ in $G_{\bar{S}}$.
Let $S' = S \setminus A[v]$, and $Z'$ the set of sinks in $G'_{\bar{S'}}$. Let us prove that every $u \in V(G'_{\bar{S'}})$ has a path ot $Z'$ in $G'_{\bar{S'}}$. Let $u \in V(G'_{\bar{S'}})$. 
As $u$ is also in $V(G_{\bar{S}})$, there is a $uz$-path $P$ in $G_{\bar{S}}$ where $z \in Z$. As $u \notin A[v]$, $V(P) \cap A[v] = \emptyset$ and thus, the path $P$ still exists in $G'_{\bar{S'}}$. Moreover, $u \notin A[v]$ implies that $N^+(z) \cap A[v] = \emptyset$, and thus that $N^+(v) \subseteq S'$, implying that $z \in Z'$.

Let us now consider the reverse implication, and let $S'$ be a solution of $G'$ of size at most $k$ with set of sinks $Z'$ in $G'_{\bar{S'}}$ and prove that $S'$ is a solution of $G$. Let us start with $u \in V(G_{\bar{S'}}) \setminus A[v]$. As $S'$ is a solution of $G'$ and $u \in V(G'_{\bar{S'}})$, there is $uz'$-path $P'$ in $G'_{\bar{S'}}$ where $z' \in Z'$, and this path still exists in $G_{\bar{S'}}$.
As $N^+(z') \cap A[v] = \emptyset$, $z'$ is still a sink in $G_{\bar{S'}}$ and we are done.
Consider now a vertex $u \in V(G_{\bar{S'}}) \cap A[v]$. As $S' \cap A[v] = \emptyset$, there is $uv$-path $P$ in $G_{\bar{S'}}$. If $N^+(v) \subseteq S'$ then $v$ is a sink in $G_{\bar{S'}}$ and we are done. Otherwise, let $w \in N^+(v) \setminus S'$. As $v$ is a SCC of size $1$, $N^+(v) \cap A[v] = \emptyset$, implying that $w \in  V(G_{\bar{S'}}) \setminus A[v]$, and thus according to the previous case $w$ has a path to a sink in $G_{\bar{S'}}$.
\end{proof}

\noindent
The previous reduction rule removes in particular sources and sinks, as
they are SCC's of size one.

\begin{reducao}\label{redSCCG}
Let $U_i$ be a strongly connected component of $G$ with strictly more than $k$
out-neighbors in $G[V\setminus V(U_i)]$. Then we can safely remove
$A[U_i]$.
\end{reducao}

\begin{proof}
Let $G'$ be the graph obtained by removing $A[U_i]$.
Let us first show that $(G,k)$ is a {\em yes}-instance implies that $(G',k)$ is also a {\em yes}-instance.
Let $S$ be a solution of $G$ of size at most $k$ and $Z$ the set of sinks in $G_{\bar{S}}$.
Let $S' = S \setminus A[U_i]$, and $Z'$ the set of sinks in $G'_{\bar{S'}}$. Using the same argument (replacing $A[v]$ by $A[U_i]$) as in the first part of proof of Reduction~\ref{sccsize1}, we get that every $u \in V(G'_{\bar{S'}})$ has a path ot $Z'$ in $G'_{\bar{S'}}$. 

Let us now consider the reverse implication, and let $S'$ be a solution of $G'$ of size at most $k$ with set of sinks $Z'$ in $G'_{\bar{S'}}$ and prove that $S'$ is a solution of $G$. Let us start with $u \in V(G_{\bar{S'}}) \setminus A[v]$. As $S'$ is a solution of $G'$ there is $uz'$-path $P'$ in $G'_{\bar{S'}}$ where $z' \in Z'$, and this path still exists in $G_{\bar{S'}}$. As $N^+(z') \cap A[U_i] = \emptyset$, $z'$ is still a sink in $G_{\bar{S'}}$ and we are done.
Consider now a vertex $u \in V(G_{\bar{S'}}) \cap A[U_i]$. As $S' \cap A[U_i] = \emptyset$, there is $uU_i$-path $P$ in $G_{\bar{S'}}$. 
As $U_i$ has strictly more than $k$ out-neighbors in $G[V\setminus V(U_i)]$, 
there is arc from $U_i$ to $w\in V(G_{\bar{S'}})$ and thus according to the previous case $w$ has a path to a sink in $G_{\bar{S'}}$.
\end{proof}


\section{W[1]-hardness and directed width measures}~\label{sec:digmeasures}
{\sc $k$-KFVD} was shown to be W[1]-hard using a reduction from {\sc
  $k$-Multicolored Independent Set
  ($k$-MIS)}~\cite{cocoon2018}. However, the gadget used in this
reduction to encode each color class has a longest directed path of unbounded
length. First, we remark that it is possible to modify the reduction in
order to prove that {\sc $k$-KFVD} is W[1]-hard even if the input
graph $G$ has longest path length and K-width bounded by constants.

\begin{theorem}\label{reduction}
There is a polynomial-time reduction, preserving the size of the
parameter, from {\sc $k$-MIS} to {\sc $k$-KFVD} such that the
resulting graph has longest directed path of length at most $5$ and
K-width equal to $2$.
\end{theorem}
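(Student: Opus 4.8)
The plan is to adapt the reduction from {\sc $k$-MIS} used in~\cite{cocoon2018}, replacing the color‑class gadget (which there has unbounded directed‑path length) by a gadget of constant depth, and then checking that this flattening can be done while keeping the K‑width equal to $2$. We may assume each $V_i$ is independent in $H$, since intra‑class edges are irrelevant for {\sc $k$-MIS}. \textbf{Construction.} Given an instance $(H,V_1\cup\cdots\cup V_k)$, build $G$ with parameter $k$ as follows. For each color class $V_i$ add a \emph{color gadget} on the vertices $c_i$ and $\{d_v,p_v : v\in V_i\}$, with arcs $c_i\to d_v$, $c_i\to p_v$, $d_v\to p_v$, $p_v\to c_i$, and $p_v\to d_v$ for every $v\in V_i$. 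For each edge $vw$ of $H$ (say $v\in V_i$, $w\in V_j$ with $i\neq j$) add an \emph{edge gadget} consisting of two vertices $e_{vw},f_{vw}$, the two opposite arcs $e_{vw}\to f_{vw}$ and $f_{vw}\to e_{vw}$, and the arcs $e_{vw}\to p_v$ and $f_{vw}\to p_w$. This is all of $G$; it has $O(|V(H)|+|E(H)|)$ vertices and is built in polynomial time.

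\textbf{Structural facts (checked first).} Each color gadget is strongly connected and no arc leaves it — every arc entering it has the form $e_{xy}\to p_x$ — so it is a knot of $G$, and the $k$ color gadgets are pairwise vertex‑disjoint. The longest directed path of $G$ has length at most $5$: since every arc entering a color gadget lands on a $p$‑vertex and a color gadget has no out‑arc, a path entering a color gadget stays there and from the $p$‑vertex it uses at most $3$ more internal arcs (e.g.\ $p_v\to c_i\to d_{v'}\to p_{v'}$), while the portion before that $p$‑vertex lies inside one edge gadget and uses at most $2$ arcs ($f_{vw}\to e_{vw}\to p_v$); a path never touching an edge gadget stays inside one color gadget and has length at most $4$. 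Finally $\kappa(G)=2$: inside a color gadget every pair $(s,t)$ has at most two $st$‑paths (a path either avoids $c_i$, forcing at most one, or goes through $c_i$, whose outgoing branch to $t$ offers at most two options, and it cannot leave $c_i$ and return without repeating a vertex), and some pairs, e.g.\ $(p_v,d_v)$ with the paths $p_v\to d_v$ and $p_v\to c_i\to d_v$, attain exactly two; the edge gadgets add only the two ``initial'' branches out of $e_{vw}$ and out of $f_{vw}$, which lead into two distinct color gadgets and therefore never recombine.

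\textbf{Correctness.} ($\Rightarrow$) Given a multicolored independent set $\{v_1,\dots,v_k\}$ of $H$, delete $S=\{p_{v_1},\dots,p_{v_k}\}$. Then each $d_{v_i}$ becomes a sink, every other vertex of color gadget $i$ reaches it (directly, or via $p_v\to c_i\to d_{v_i}$, or $d_v\to p_v\to c_i\to d_{v_i}$), and for every edge $vw$ at least one of $p_v,p_w$ survives (independence), so $e_{vw}$ and $f_{vw}$ still reach a sink; hence $G_{\bar S}$ is knot‑free and $|S|=k$. ($\Leftarrow$) Let $S$ be a solution with $|S|\le k$. Since the color gadgets are $k$ disjoint knots, $S$ contains exactly one vertex of each and nothing else. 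For gadget $i$, deleting $c_i$ leaves the $|V_i|$ two‑cycles $\{d_v,p_v\}$ as knots, and deleting a $d_v$ leaves $\{c_i,p_v\}\cup\{d_u,p_u:u\neq v\}$ as a knot; hence $S$'s vertex in gadget $i$ must be some $p_{v_i}$ with $v_i\in V_i$. If two selected vertices $v_i,v_j$ were adjacent in $H$, then $e_{v_iv_j}\to p_{v_i}$ and $f_{v_iv_j}\to p_{v_j}$ are both absent, so $\{e_{v_iv_j},f_{v_iv_j}\}$ is a knot of $G_{\bar S}$ untouched by $S$ — impossible. So $\{v_1,\dots,v_k\}$ is a multicolored independent set.

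\textbf{Main obstacle.} The heart of the proof is the color gadget. The obvious ``pick one of $|V_i|$'' gadget is a directed cycle through the $d_v$'s, whose directed‑path length grows with $|V_i|$, and essentially every attempt to flatten it — adding chords, using a shared hub, or running several internal paths in parallel — either breaks the property ``one deletion destroys the knot'' or drives the K‑width up to $\Theta(|V_i|)$. Thus the real work is to find a constant‑depth, K‑width‑$2$ knot whose \emph{only} single deletion leaving it knot‑free is one of the $p_v$'s; this is exactly what forces the auxiliary arcs $p_v\to d_v$ (so that removing $c_i$ still leaves the $2$‑cycles $\{d_v,p_v\}$) and $c_i\to p_v$ (so that removing a $d_v$ still leaves a knot), and afterwards one must verify, as above, that gluing on the edge gadgets creates neither a sixth vertex on some directed path nor a third $st$‑path.
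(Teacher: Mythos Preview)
Your proof is correct and takes essentially the same approach as the paper: your color gadget is isomorphic to the paper's (with $c_i\leftrightarrow u_j$, $d_v\leftrightarrow w_i$, $p_v\leftrightarrow z_i$), your edge gadget matches the paper's $X_p$, and the correctness argument proceeds identically. If anything, your verification of the path-length and K-width bounds is more careful than the paper's one-line assertion.
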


\begin{proof}
Let $(G',k)$ be an instance of {\sc Multicolored Independent Set}, and let $V^1, V^2, \dots, V^{k}$ be the color classes of $G'$. We construct an instance $(G, k)$ of {\sc Knot-Free Vertex Deletion} with bounded longest path length and K-width as follows.


\begin{enumerate}
\item for each $v_i\in V(G')$, create a directed cycle of size two with the vertices $w_i$ and $z_i$ in $G$;
\item for a color class $V^j$ in $G'$, create one vertex $u_j$;
\item for each vertex $z_i$ in $G$ corresponding to a vertex $v_i$ of
  the color class $V^j$ in $G'$, create an arc from $z_i$ to $u_j$ and
  from $u_j$ to $z_i$.
\item for each vertex $w_i$ in $G$ corresponding to a vertex $v_i$ of
  the color class $V^j$ in $G'$, create an arc from $u_j$ to $w_i$
\item for each edge $e_p=(v_i,v_l)$ in $G'$ create a set $X_p$ with
  two artificial vertices $x^i_p$ and $x^l_p$ and the arcs
  $x^i_px^l_p$ and $x^l_px^i_p$;
\item for each artificial vertex $x^i_p$, create an edge from $x^i_p$
  towards $z_i$ in $G$.
\end{enumerate}

Finally, set $Y_j=\{w_i,z_i\ :\ v_i\in V^j\}\cup \{u_j\}$, $Y_j$ is the set of vertices of $G$ corresponding to the vertices of $G'$ in the same color class $V^j$.
Notice that, the longest path of $G$ has at most $5$ vertices, and for any pair $s,t$ in $V(G)$ there are at most $2$ distinct directed $st$-paths in $G$. 

Now, suppose that now $S'$ is a $k$-independent set with exactly one vertex
of each set $V^j$ of $G'$. By construction, $G$ has $k$ knots which
are $G[Y_1],\dots ,G[Y_k]$. Thus, at least $k$ vertex removals are
necessary to make $G$ free of knots. We set $S=\{z_i~|~v_i \in S'\}$
and show that $G[V \setminus S]$ is knot-free.  For $j=1,\dots ,k$ the
vertex $w_j$ is a sink in $G\setminus S$, and every vertex of $Y_j\setminus S$
still reaches $w_j$. Now, as $S'$ is a
$k$-independent set of $G'$ each set $X_p$ in $G$ is adjacent to at
least one vertex that is not in $S$. Hence, each $X_p$ will still have at
least one arc pointing outside $X_p$, i.e., no new knots are created, and
$G\setminus S$ is knot-free.

Conversely, suppose that $G$ has a set of vertices $S$ of size $k$
such that $G[V\setminus S]$ is knot-free. In particular $S$ has to
contain exactly one vertex of each of the knot $Y_j$, for $j=1,\dots
,k$. Since at least one sink has to be created in order to untie the
knot $Y_j$, and since the only vertices of $Y_j$ with only one
out-neighbor are the $w$'s ones, $S$ has to contain a vertex $z_i$
of each set $Y_1,\dots ,Y_k$. Moreover by deleting one vertex $z_i$ in
a knot $Y_j$, the vertex $w_j$ is turned into sink and every other
vertex of the same knot still has a path to $w_j$. Since $G[V\setminus S]$ is
knot-free, no new knots are created by the deletion of $S$; thus,
every SCC $X_p$ will still have at least one arc pointing outside it. So, we
set $S'=\{v_i~|~z_i \in S\}$. Since each SCC $X_p$ corresponds to an
edge of $G'$, and at least one vertex of each edge of $G'$ is not in
$S'$, the set $S'$ contains no pair of adjacent vertices. Moreover,
$S'$ is composed by one vertex of each knot, which corresponds to a color
of $G'$. Therefore, $S'$ is a multicolored independent set of $G'$.
\end{proof}

Since {\sc $k$-MIS} is W[1]-hard, the following holds.

\begin{corollary}\label{kfw1hard}
{\sc $k$-KFVD} is W[1]-hard even if the input graph has longest
directed path of length at most $5$ and K-width equal to $2$.
\end{corollary}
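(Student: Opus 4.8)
The plan is to derive this immediately from Theorem~\ref{reduction} together with the well-known W[1]-hardness of {\sc $k$-Multicolored Independent Set} parameterized by $k$ (see, e.g., \cite{cygan2015parameterized}). First I would recall that a polynomial-time reduction that maps an instance with parameter $k$ to an instance whose parameter is again $k$ (or, more generally, bounded by a computable function of $k$) is a parameterized reduction; hence it transports W[1]-hardness from the source problem to the target problem. Theorem~\ref{reduction} supplies exactly such a reduction: it runs in polynomial time, keeps the parameter equal to $k$, and outputs a digraph $G$ in which every directed path has at most $5$ vertices and in which, for every ordered pair $s,t$, there are at most $2$ arc-distinct $st$-paths, i.e.\ $\kappa(G)=2$.

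Concretely, I would argue as follows. Suppose for contradiction that {\sc $k$-KFVD} restricted to digraphs with $p\le 5$ and $\kappa\le 2$ were FPT, say solvable in time $f(k)\cdot n^{O(1)}$. Given an instance $(G',k)$ of {\sc $k$-MIS}, apply the reduction of Theorem~\ref{reduction} to obtain in polynomial time an equivalent instance $(G,k)$ of {\sc $k$-KFVD} with $p(G)\le 5$ and $\kappa(G)=2$, then run the assumed algorithm on $(G,k)$; since the reduction preserves both the yes/no answer and the value of the parameter, this decides $(G',k)$ in time $f(k)\cdot |G'|^{O(1)}$, contradicting the W[1]-hardness of {\sc $k$-MIS} (unless FPT $=$ W[1]). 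I do not expect a genuine obstacle here — the only points that need care are checking that the reduction is indeed parameter-preserving (already asserted in Theorem~\ref{reduction}) and citing the standard W[1]-hardness of {\sc $k$-MIS}; everything else is the routine "hardness transfers along parameterized reductions" argument.
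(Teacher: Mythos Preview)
Your proposal is correct and follows exactly the paper's approach: the corollary is stated immediately after Theorem~\ref{reduction} with the single justification that {\sc $k$-MIS} is W[1]-hard, and you have simply spelled out the routine ``parameterized reductions transport W[1]-hardness'' argument that the paper leaves implicit.
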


After the introduction of the notion of directed treewidth
(dtw)~\cite{jhonson2001}, a large number of width measures in digraphs
were developed, such as: cycle rank~\cite{gruber2012} (cr); directed
pathwidth~\cite{barat2006} (dpw); zig-zag number~\cite{Oliveira2013}
(zn); Tree-Zig-Zag number~\cite{Oliveira2016} (Tzn); Kelly-width~\cite{hunter2008} (Kelw);
DAG-width~\cite{berwanger2012} (dagw); D-width~\cite{safari2005} (Dw);
weak separator number~\cite{Oliveira2016} (s);
entanglement~\cite{berwanger2005} (ent); DAG-depth~\cite{ganian2014} (ddp). However, if a graph problem is hard when both the longest
directed path length and the K-width are bounded, then it is hard for
all these measures (see~Figure~\ref{widthhieranch}).

\vspace{.5cm}
\begin{figure}[ht]
    \centering
    \def\svgwidth{1.25\linewidth}
    	  \resizebox{0.85\textwidth}{!}{\input{widthhier.tex}}
  \caption{A hierarchy of digraph width measure parameters. $\alpha \rightarrow \beta$ indicates that $\alpha(G) \leq f(\beta(G))$ for any digraph $G$ and some function $f$. More details about the relationships between these parameters can be found in the references corresponding to each arrow.}
  \label{widthhieranch}
\end{figure}

Therefore, from the reduction presented in Theorem~\ref{reduction} we
can observe that {\sc KFVD} is para-NP-hard with respect to all these
width measures, and {\sc $k$-KFVD} is W[1]-hard even on inputs where
such width measures are bounded.
Thus, it seems to be extremely hard to identify nice width parameters
for which {\sc KFVD} can be solved in FPT-time or even in XP-time.
Fortunately, there remain some parameters for which, at
least, XP-time solvability is achieved. One of them is the
  {\em directed feedback vertex set number} ($dfv$). This invariant is
  an upper bound on the size of a minimum knot-free vertex deletion
  set, so XP-time algorithms are trivial. This parameter is discussed
  in more detail in Section~\ref{sec:feedback}.

Another interesting width parameter for directed graphs $G$ that is
not bounded by a function of the K-width and the length of a longest
directed path is the clique-width of $G$. Courcelle et
al.~\cite{courcelle2000linear} showed that every graph problem
definable in $\textsc{LinEMSOL}$ can be solved in time $f(w)\times
n^{O(1)}$ on graphs with clique-width at most~$w$, when a
$w$-expression is given as input. Using a result of Oum~\cite{Oum08},
the same follows even if no $w$-expression is given.

\begin{proposition}[see \cite{courcelle2012graph}]\label{Xpath}
There is a monadic second-order formula expressing the following
property of vertices $x,y$ and of a set of vertices $X$ of a directed
graph $G$: 
$$\mbox{``$x,y \in X$ and there is a directed path from $x$ to $y$ in the subgraph induced by $X$.''}$$
\end{proposition}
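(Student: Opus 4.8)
The plan is to express reachability through a universally quantified ``closure'' condition, which is the standard trick for capturing transitive-closure-type properties in monadic second-order logic. The key observation is the following: for $x,y \in X$, there is a directed path from $x$ to $y$ inside $G[X]$ if and only if $y$ belongs to \emph{every} set $Y \subseteq X$ that contains $x$ and is closed under taking out-neighbors within $X$, i.e.\ such that whenever $u \in Y$ and $uv \in E$ with $v \in X$, then $v \in Y$. Intuitively, the smallest such $Y$ is exactly the set of vertices reachable from $x$ in $G[X]$, so membership of $y$ in all of them is equivalent to $y$ being reachable.

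First I would write the formula explicitly. Letting $\mathrm{reach}_X(x,y)$ denote
\[
x \in X \;\wedge\; y \in X \;\wedge\; \forall Y\Big( \big( x \in Y \wedge Y \subseteq X \wedge \forall u\, \forall v\, ((u \in Y \wedge v \in X \wedge uv \in E) \rightarrow v \in Y) \big) \rightarrow y \in Y \Big),
\]
this is a monadic second-order formula: apart from the two free vertex variables $x,y$ and the free set variable $X$, the only quantifiers are a single set quantifier over $Y$ and first-order quantifiers over vertices, using only the membership predicates and the (directed) edge relation $uv \in E$. The conjunct $Y \subseteq X$ together with the conjuncts $v \in X$ inside the closure condition ensure that only vertices and arcs of the induced subgraph $G[X]$ are used.

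Then I would verify the two implications. For the ``only if'' direction, suppose $P = (x = w_0, w_1, \dots, w_\ell = y)$ is a directed path with all $w_i \in X$, and let $Y$ be any set satisfying the hypothesis of the inner implication. A straightforward induction on $i$ shows $w_i \in Y$ for all $i$: the base case $w_0 = x \in Y$ is given, and the induction step uses $w_{i-1}w_i \in E$ and $w_i \in X$ together with the closure condition; hence $y = w_\ell \in Y$. For the ``if'' direction, take $Y_0$ to be the set of all vertices reachable from $x$ by a directed path lying entirely in $X$ (including $x$ itself); then $Y_0 \subseteq X$, $x \in Y_0$, and $Y_0$ is closed under out-neighbors within $X$, so the universal statement forces $y \in Y_0$, i.e.\ $y$ is reachable from $x$ in $G[X]$.

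I do not expect a serious obstacle here; the construction is entirely standard (it is essentially the encoding of transitive closure used throughout Courcelle--Engelfriet). The only points needing a little care are (a) faithfully encoding the induced-subgraph restriction, which is handled as noted above, and (b) deciding whether ``directed path from $x$ to $y$'' should allow the trivial length-$0$ path when $x = y$: the formula above allows it, and if a nonempty path is wanted one can instead write $\exists z\,(z \in X \wedge xz \in E \wedge \mathrm{reach}_X(z,y))$, which is still monadic second-order. Either variant suffices for the subsequent uses of this proposition.
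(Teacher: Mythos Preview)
Your proof is correct and is precisely the standard encoding of transitive closure in MSO. Note, however, that the paper does not actually give its own proof of this proposition: it is quoted as a known fact from Courcelle--Engelfriet~\cite{courcelle2012graph}, so there is nothing to compare against beyond observing that your construction is exactly the one that reference uses.
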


From Proposition~\ref{Xpath} one can show that {\sc KFVD} is
LinEMSOL-definable. Thus Theorem~\ref{the:cwd} holds.

\begin{theorem}\label{the:cwd}
\textsc{KFVD} is {FPT} when parameterized by clique-width of the underlying undirected graph $G$.
\end{theorem}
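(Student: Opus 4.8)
The plan is to show that \textsc{KFVD} can be expressed as a \textsc{LinEMSOL} optimization problem, and then invoke the Courcelle--Makowsky--Rotics machinery together with Oum's result (stated just above) to obtain the FPT algorithm parameterized by clique-width. Concretely, given the input digraph $G$, the goal is to find a set $S \subseteq V(G)$ of minimum cardinality such that $G_{\bar{S}}$ is knot-free; the decision version then compares the optimum to $k$. So the real work is to write down a monadic second-order formula $\varphi(S)$ over the vertex set such that $\varphi(S)$ holds if and only if $G[V \setminus S]$ is knot-free, after which "minimize $|S|$ subject to $\varphi(S)$" is a LinEMSOL-definable problem.

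The key step is the observation recorded right after the problem definition: $G$ is knot-free if and only if every vertex has a directed path to a sink. So I would define, using Proposition~\ref{Xpath}, the formula
$$\mathrm{reach}(x,y,X) \;:\equiv\; \text{``}x,y \in X \text{ and there is a directed path from } x \text{ to } y \text{ in } G[X]\text{''},$$
which is monadic second-order by the cited proposition. A vertex $y$ is a sink of $G[X]$ exactly when $y \in X$ and there is no $z \in X$ with $(y,z) \in E(G)$; call this $\mathrm{sink}(y,X)$, which is first-order. Then
$$\varphi(S) \;:\equiv\; \forall x\, \Bigl( \neg (x \in S) \rightarrow \exists y\, \bigl( \neg(y\in S) \wedge \mathrm{sink}(y, V\setminus S) \wedge \mathrm{reach}(x,y,V\setminus S) \bigr) \Bigr),$$
where "$x \notin S$" and membership in $V \setminus S$ are handled by negation of the free set variable $S$ (a single free monadic variable suffices; the induced subgraph $G[V\setminus S]$ is what $\mathrm{reach}$ and $\mathrm{sink}$ take as their set argument). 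Note we must allow the degenerate directed path of length zero, i.e.\ $x$ itself may be the sink, so $\mathrm{reach}(x,x,X)$ should be read as true whenever $x \in X$; this is a trivial adjustment to the formula from Proposition~\ref{Xpath}. Thus "find a minimum-size $S$ with $\varphi(S)$" is LinEMSOL-definable, and by~\cite{courcelle2000linear} it is solvable in time $f(w)\cdot n^{O(1)}$ on digraphs of clique-width $w$ given a $w$-expression; by Oum~\cite{Oum08} a suitable expression can be computed in FPT time when none is given. Since clique-width of a digraph is at most a constant times the clique-width of its underlying undirected graph (the orientation of edges can be recorded by the recursive composition labels), the same bound holds when parameterized by the clique-width of the underlying undirected graph.

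The only subtle point, and the step I would be most careful about, is checking that knot-freeness really is captured by the "path-to-a-sink" condition in the presence of isolated vertices and length-zero paths, and that the clique-width of the \emph{directed} graph used in the LinEMSOL statement is genuinely bounded in terms of the clique-width of the underlying undirected graph (so that the theorem as stated, about the underlying undirected graph, is what we get). Both are standard: the first is exactly the remark quoted after the \textsc{KFVD} definition, and the second follows because a $k$-expression for the underlying undirected graph can be refined into an $O(k)$-expression (or $2^{O(k)}$, depending on the convention) for the digraph by splitting each label into an in-label and an out-label to record arc directions. Everything else is a direct appeal to Proposition~\ref{Xpath} and the cited meta-theorems, so no further calculation is needed.
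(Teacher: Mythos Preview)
Your core approach is the same as the paper's: express ``$S$ is a knot-free vertex deletion set'' by an $\mathrm{MSO}_1$ formula (via Proposition~\ref{Xpath}) and invoke the Courcelle--Makowsky--Rotics/Oum machinery. Your formula avoids the auxiliary existentially quantified set $Z$ that the paper uses, but the two formulations are trivially equivalent, so on this point your argument is fine and matches the paper.

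There is, however, a genuine error in your final step. You assert that the directed clique-width of $G$ is bounded by a function of the clique-width of the underlying undirected graph, arguing that one can ``split each label into an in-label and an out-label to record arc directions''. This is false. Tournaments give the standard counterexample: the underlying undirected graph of any tournament on $n$ vertices is $K_n$, which has clique-width~$2$, yet the directed clique-width of tournaments is unbounded (random tournaments have directed clique-width linear in $n$). The label-splitting idea fails because a single undirected ``add all edges between label $i$ and label $j$'' operation may correspond to arcs going in \emph{both} directions in the digraph, and no finite refinement of labels can encode an arbitrary orientation pattern.

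The paper's own proof does not make this claim; it simply writes $f(cw)\cdot n^{O(1)}$ without distinguishing the two notions, so what the $\mathrm{MSO}_1$ argument actually delivers (in both your write-up and the paper's) is fixed-parameter tractability with respect to the \emph{directed} clique-width of $G$. Whether the theorem as literally stated---parameterized by the clique-width of the underlying \emph{undirected} graph---holds would require a separate argument that neither you nor the paper supplies; you should drop the incorrect bounding claim rather than try to bridge this gap with it.
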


\begin{proof}
From Proposition~\ref{Xpath}, we can construct (using shortcuts) a formula~$\psi(G,S)$ such that~``$S$ is knot-free vertex deletion set of $G$'' $\Leftrightarrow \psi(G,S)$, as follows:	
	\begin{equation*}\label{MSOL1}
		\begin{split}
			\exists~ Z\subset V~[ \\
			[~\forall ~v\in Z (~\forall ~w\in V (arc(v,w) \implies w\in S)] ~\wedge~~~\\
			[~\forall ~u\in \{V\setminus S\} (~\exists ~s\in Z (~\mbox{there is a directed $\{V\setminus S\}$-path from u to s}~)~]~~~ \\
			]
		\end{split}
	\end{equation*}

Since $\psi(G,Z)$ is an $\textsc{MSOL}_1$-formula, the problem of finding $\min(Z): \psi(G,Z)$ is definable in $\textsc{LinEMSOL}$. Thus we can find $\min(Z)$ satisfying $\psi(G,Z)$ in time $f(cw)\times n^{O(1)}$.
\end{proof}

The fixed-parameter tractability for clique-width parameterization
implies fixed-parameter tractability of {\sc KFVD} for many other
popular parameters. For example, it is well-known that the
clique-width of a directed graph $G$ is at most $2^{2tw(G)+2}+1$,
where $tw(G)$ is the treewidth of the underlying undirected graph
(see~\cite[Proposition~2.114]{courcelle2012graph}).  However, although
Theorem~\ref{the:cwd} implies the FPT-membership of the problem
parameterized by the treewidth of the underlying undirected graph, the
dependence on $tw(G)$ provided by the model checking framework is
huge. So, it is still a pertinent question whether such a parameterized problem admits a 
single exponential algorithm, which is discussed in
Section~\ref{sec:tw}.



\section{The treewidth of the underlying undirected graph as parameter}\label{sec:tw}

Given a tree decomposition $\mathcal{T}$, we denote
by $t$ one node of~$\mathcal{T}$ and by $X_t$ the vertices contained
in the \emph{bag} of $t$. We assume, without loss of generality, that $\mathcal{T}$ is a
\emph{nice} tree decomposition (see~\cite{cygan2015parameterized}),
that is, we assume that there is a special root node~$r$ such that
$X_t = \emptyset$ and all edges of the tree are directed towards $r$
and each node~$t$ has one of the following four types: {\em Leaf},
{\em Introduce vertex}, {\em Forget vertex}, and {\em Join}.

Based on the following results we can assume that we are given a nice tree decomposition of $G$.

\begin{theorem}~\cite{bodlaender2016c}
There exists an algorithm that, given an $n$-vertex graph $G$ and an
integer $k$, runs in time $2^{O(k)}\times n$ and either outputs that
the treewidth of $G$ is larger than $k$, or constructs a tree
decomposition of $G$ of width at most $5k + 4$.
\end{theorem}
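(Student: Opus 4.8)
The plan is to reproduce the constant‑factor treewidth approximation of Bodlaender, Drange, Dregi, Fomin, Lokshtanov and Pilipczuk, whose engine is a recursive procedure built around balanced separators. Concretely, I would design a subroutine $\textsc{Decompose}(G,W)$ that receives a graph $G$ together with a set $W\subseteq V(G)$ with $|W|\le 4k+4$, and in time $2^{O(k)}\cdot n$ either reports ``$tw(G)>k$'' or outputs a tree decomposition of $G$ of width at most $5k+4$ one of whose bags contains $W$. Calling $\textsc{Decompose}(G,\emptyset)$ then yields the theorem.

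For the recursive step I would use the classical fact that a graph of treewidth at most $k$ has, for every vertex weighting, a separator of size at most $k+1$ that is balanced for it; applied to the indicator of $W$ this gives a set $X$ with $|X|\le k+1$ such that every connected component of $G-X$ meets $W$ in at most $\tfrac12|W|$ vertices. The step is then: (i) try to compute such a separator $X$ (chosen, to guarantee progress, so that it also shrinks the set of vertices outside $W$); if none exists, return ``$tw(G)>k$''; (ii) make $W\cup X$, of size at most $5k+5$, the root bag; (iii) for every component $C$ of $G-X$, recurse on $G[\,C\cup(N(C)\cap X)\,]$ with new boundary $W_C=(W\cap C)\cup(N(C)\cap X)$ — balancedness gives $|W_C|\le \tfrac12|W|+(k+1)\le 4k+4$, so the precondition is preserved; (iv) attach each returned sub‑decomposition to the root node. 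Since $W_C\subseteq W\cup X$, verifying the tree‑decomposition axioms — every vertex and edge is covered, and the bags containing any fixed vertex induce a subtree — is routine, as is the fact that $W$ lies in the root bag. Handling connected components separately and discarding subgraphs that are already of constant size completes the correctness argument.

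The genuine difficulty — and the real content of the cited paper — is the $2^{O(k)}\cdot n$ running time. First, one cannot afford to compute a minimum balanced separator exactly (that problem is NP‑hard), so step (i) must be replaced by a $2^{O(k)}\cdot n$‑time routine that either produces a balanced separator of size at most $k+1$ for $(G,W)$ or certifies $tw(G)>k$; this exploits that $W$ has size $O(k)$, uses flow/important‑separator machinery to restrict attention to only $2^{O(k)}$ candidate separators, and — crucially — reuses the work done in ancestor calls rather than recomputing from scratch. Second, the recursion must be simultaneously balanced enough (so that the total size of the instances on each level stays under control) and cheap enough per node that the additive $O(k)$ overhead amortizes to $2^{O(k)}\cdot n$ over the whole recursion tree; this forces a careful interleaving of the $W$‑balanced and $V(G)$‑balanced separator choices and a delicate amortized accounting of how vertices migrate between the ``boundary'' set and the ``free'' part. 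I expect this combined running‑time analysis to be the main obstacle, and for it I would follow the cited paper.
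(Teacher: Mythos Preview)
The paper does not prove this theorem at all: it is quoted verbatim from Bodlaender, Drange, Dregi, Fomin, Lokshtanov and Pilipczuk (the reference \texttt{bodlaender2016c}) and used as a black box to obtain a nice tree decomposition before running the dynamic program. There is therefore no ``paper's own proof'' to compare your attempt against.

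Your sketch is a fair high-level outline of the Robertson--Seymour style recursion underlying constant-factor treewidth approximations, and you correctly identify that the entire substance of the result lies in achieving the $2^{O(k)}\cdot n$ running time rather than in the recursive decomposition itself. That said, your write-up is not a proof but a plan that explicitly defers the hard part (``for it I would follow the cited paper''); in the context of the present paper that is exactly the right attitude, since the theorem is imported, not established here. If you actually intended to reprove the cited result, be aware that the linear-time bound in \texttt{bodlaender2016c} does not follow from the vanilla balanced-separator recursion you describe: the original paper uses a considerably more intricate data-structural and amortization argument (compressing the graph, maintaining sketches across recursive calls, and a two-phase scheme) that your outline only gestures at.
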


\begin{lemma}~\cite{cygan2015parameterized}
Given a tree decomposition $(T, \{X_t\}_{t\in V(T)})$ of $G$ of width
at most $k$, one can in time $O(k^2\cdot \max(|V(T)|,|V(G)|))$ compute
a nice tree decomposition of $G$ of width at most $k$ that has at most
$O(k|V(G)|)$ nodes.
\end{lemma}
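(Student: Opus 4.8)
The plan is to turn the given width-$k$ tree decomposition $(T,\{X_t\}_{t\in V(T)})$ into a nice one by a short sequence of purely local surgeries, keeping track of both the number of nodes and the work performed at each step. First I would root $T$ at an arbitrary node and make it \emph{small}: as long as there is an edge $tt'$ with $X_t\subseteq X_{t'}$, contract it while keeping the bag $X_{t'}$. Each such contraction preserves the three tree-decomposition axioms (every vertex and every edge of $G$ is still covered, and the set of nodes whose bag contains a given vertex stays connected) and strictly decreases $|V(T)|$, so the process terminates. In the resulting tree, every non-root node $t$ with parent $p$ satisfies $X_t\setminus X_p\ne\emptyset$; picking one vertex $f(t)$ in this set yields an injection $V(T)\setminus\{r\}\to V(G)$, because the nodes containing a fixed vertex form a subtree with a unique topmost node. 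Hence $|V(T)|\le |V(G)|+1$. One bottom-up sweep realizes this phase in time $O(k)$ per examined edge, i.e.\ $O(k\cdot|V(T)|)$, after which the working tree has $O(|V(G)|)$ nodes.

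Next I would binarize the tree, replacing each node $t$ having $d\ge 3$ children by a binary tree of extra copies of $t$ (all carrying the bag $X_t$) whose leaves are the original children; the total number of new nodes is at most $|V(T)|$, so the node count stays $O(|V(G)|)$. Then I would prepare \emph{join} nodes: for each node $t$ with two children $t_1,t_2$, splice a fresh copy of $t$ (bag $X_t$) immediately above each $t_i$, so that every node with two children sees two children having exactly its own bag; and append one copy with empty bag below each leaf and above the root. Finally, for each remaining edge $tt'$ with $t'$ the child and $X_t\ne X_{t'}$, subdivide it by a path that forgets the vertices of $X_{t'}\setminus X_t$ one at a time and then introduces the vertices of $X_t\setminus X_{t'}$ one at a time; this path has at most $|X_t|+|X_{t'}|\le 2(k+1)$ internal nodes. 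After deleting the occasional node whose bag equals that of its unique child, every node is now a leaf, introduce-vertex, forget-vertex, or join node, the width never changed, and the root and all leaves have empty bags; so the decomposition is nice.

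For the size and time bounds: the working tree had $O(|V(G)|)$ edges, each of which is expanded into a path of $O(k)$ nodes, so the nice decomposition has $O(k\,|V(G)|)$ nodes; each created node is produced with $O(k)$ work (one set difference per edge, then $O(1)$ per vertex handled), giving $O(k)\cdot O(k\,|V(G)|)=O(k^2|V(G)|)$ for the surgery, and together with the initial reduction pass this is $O(k^2\cdot\max(|V(T)|,|V(G)|))$. The only step that is not completely mechanical is the linear bound $|V(T)|\le|V(G)|+1$ on a small tree decomposition; the rest is routine verification that each local move preserves the tree-decomposition axioms and the target width, and that the edge-subdivision gadget produces exactly the four allowed node types.
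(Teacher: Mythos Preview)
The paper does not actually prove this lemma: it is quoted verbatim as a known result from the textbook of Cygan et~al.~\cite{cygan2015parameterized}, with no accompanying argument. Your proposal is correct and is essentially the standard construction given in that reference (shrink to a small decomposition, binarize, insert join copies, then subdivide each edge by a forget/introduce chain of length $O(k)$), so there is nothing to compare beyond noting that you have reproduced the textbook proof.
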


Now we are ready to use a nice tree decomposition in order to obtain
an FPT-time algorithm with single exponential dependency on $tw(G)$
and linear with respect to $n$.

\begin{theorem}
{\sc Knot-Free Vertex Deletion} can be solved in $2^{O(tw)} \times n$
time, but assuming ETH there is no $2^{o(tw)}n^{O(1)}$ time algorithm
for {\sc KFVD}, where $tw$ is the treewith of the underlying
undirected graph of the input $G$.
\end{theorem}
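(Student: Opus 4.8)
The plan is to prove the two halves separately: a dynamic-programming algorithm over a nice tree decomposition for the upper bound, and a reduction from a known ETH-hard problem for the lower bound. For the algorithmic part, the first step is to fix what information a partial solution restricted to $X_t$ must carry. The natural guess—just remembering which vertices of $X_t$ are in $S$—is insufficient, because whether a surviving vertex eventually reaches a sink depends on paths that may leave and re-enter the current subtree's "processed" region through the bag. So for each node $t$ I would keep a state on $X_t$ that records, for every $v \in X_t$: (a) whether $v \in S$; and (b) if $v \notin S$, a token from a constant-size alphabet indicating the "reachability status" of $v$ with respect to sinks. Concretely I would use tokens such as \emph{already-safe} ($v$ has, within the part of the graph already introduced, a directed path to a vertex that is guaranteed to be a sink in $G_{\bar S}$), \emph{is-a-sink} ($v$ is slated to be a sink, i.e. all its out-neighbours are in $S$), and \emph{waiting} ($v$ is neither, and its safety must still be certified through a path that currently exits via the bag). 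The crucial point making this finite is that we never need to remember \emph{which} external vertex a waiting vertex is hoping to reach, only that its obligation is still open; consistency is enforced by the transitions. This gives $2^{O(tw)}$ states per bag.

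The transitions follow the four node types of a nice decomposition. At an \emph{introduce} node for $v$, we branch on $v \in S$ or not; if $v \notin S$ we branch on its token, checking local consistency (e.g. \emph{is-a-sink} is only legal if every out-neighbour of $v$ already present is in $S$, and conversely if all present out-neighbours are in $S$ but there exist future out-neighbours we must not yet declare $v$ a sink—this is handled at the corresponding forget node). At a \emph{forget} node for $v$, we may only drop $v$ from the bag if its obligation has been discharged: $v$ must be in $S$, or \emph{is-a-sink} with \emph{all} its out-neighbours (which are now all introduced) in $S$, or \emph{already-safe}; a \emph{waiting} vertex cannot be forgotten, since after forgetting we lose the ability to certify it. Before forgetting, we also propagate: any in-neighbour of $v$ in the current bag that was \emph{waiting} and whose only hope was $v$ can be upgraded to \emph{already-safe} if $v$ is safe. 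At a \emph{join} node we combine two child states on the same bag, taking the "better" token for each vertex (a vertex is safe in the join if it is safe in either branch, since a certifying path need only exist in one of them, but both branches must agree on $S$ and on the \emph{is-a-sink} marking). The objective is the minimum over root states (empty bag) of $|S|$, accepting iff this is at most $k$. Correctness is the usual two-way induction: every genuine solution induces a consistent family of states, and every accepting run yields a set $S$ of size $\le k$ with $G_{\bar S}$ knot-free, using Observation~\ref{viability}-style reasoning and the characterization that $G$ is knot-free iff every vertex reaches a sink. The running time is $2^{O(tw)}$ per node with a constant number of $\min$/table operations, and $O(tw\cdot n)$ nodes, giving $2^{O(tw)}\times n$ after combining with the tree-decomposition construction of~\cite{bodlaender2016c}.

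For the ETH lower bound, the plan is to revisit the reduction already built in Theorem~\ref{reduction}. That reduction goes from $k$-Multicolored Independent Set and produces, per vertex and per edge of $G'$, a constant-size gadget; one checks that the underlying undirected graph of the constructed $G$ has treewidth $O(\mathrm{tw}(G'))$, in fact linearly bounded by the treewidth (or even the vertex cover / feedback size) of $G'$ because each gadget is attached along a bounded interface. Since Multicolored Independent Set (equivalently, the standard reductions used for treewidth ETH lower bounds, e.g. from $3$-SAT or from Independent Set on bounded-degree graphs via the known $2^{o(n)}$-lower bound machinery) admits no $2^{o(n)}$ algorithm under ETH on instances where $n$ is comparable to the treewidth, composing the two bounds rules out a $2^{o(tw)}n^{O(1)}$ algorithm for KFVD. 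The cleanest route is: start from an ETH-hard problem known to have no $2^{o(tw)}n^{O(1)}$ algorithm (such as Independent Set parameterized by treewidth of the input graph), apply the gadget construction of Theorem~\ref{reduction}, and verify the treewidth blow-up is linear.

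The main obstacle will be the bookkeeping in the dynamic program, specifically getting the \emph{waiting}/\emph{already-safe} propagation exactly right across \emph{forget} and \emph{join} nodes so that no "phantom" certifying path is created (a waiting vertex must not be allowed to certify itself through a cycle with another waiting vertex, and the two sides of a join must not each lean on the other). The fix is to insist that \emph{already-safe} can only be granted through a chain ending at an \emph{is-a-sink} vertex whose sink status is fully determined by vertices already forgotten or in the current bag, making the relation well-founded; with that invariant maintained, the induction goes through. The ETH part is comparatively routine once the treewidth bound on the gadget construction is checked.
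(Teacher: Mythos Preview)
Your high-level plan for the upper bound coincides with the paper's: a four-way labelling of bag vertices (the paper calls them $S_t$/$Z_t$/$F_t$/$B_t$ for removed/sink/released/blocked) and dynamic programming over a nice tree decomposition. However, two of the transition rules you state are too restrictive and would make the DP \emph{incomplete}. First, your forget rule says ``a waiting vertex cannot be forgotten''. This is wrong: a blocked vertex $v$ must sometimes be forgotten while still blocked, namely when it has an out-neighbour in the bag that is itself blocked, so that $v$'s obligation is delegated. The paper's forget node keeps exactly this extra case (the branch $B_t\cup\{v\}$ is allowed precisely when $N^+(v)\cap B_{t'}\neq\emptyset$). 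Without it, a long directed path whose eventual sink is introduced only much later has no consistent run in your DP. Second, your join rule ``safe in the join if safe in either branch'' misses the situation where $u$ is blocked in both children but, in child~1, reaches a vertex $w$ that is released in child~2; then $u$ is released at the join even though it is released in neither child. The paper handles this by ranging over all splits $F',F''$ of the released set and recomputing in-bag reachability $A_t(F'\cup F'')$. Your closing paragraph worries about \emph{phantom} paths (false positives), but the actual bug in the rules you wrote is the opposite direction.

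For the lower bound, the paper does \emph{not} reuse the construction of Theorem~\ref{reduction}. It instead cites a separate polynomial reduction from \textsc{3-SAT} (Theorem~4 of~\cite{cocoon2018}) that outputs a graph on $2n+2m$ vertices, hence $tw=O(n+m)$, and the ETH transfer is a one-liner. Your proposed route via Theorem~\ref{reduction} is not obviously wrong, but it is not free either: the colour-vertices $u_j$ are adjacent to every vertex of their class, so you must argue that the treewidth blow-up is $O(tw(G')+k)$ and then feed in \textsc{Multicoloured Independent Set} instances for which $k$ is dominated by $tw(G')$ and a $2^{o(tw(G'))}$ lower bound holds under ETH. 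That chain can be made to work, but it is noticeably more roundabout than the paper's direct \textsc{3-SAT} reduction.
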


\begin{proof}

Let $\mathcal{T} = (T, \{X_t\}_{t \in V(T)})$ be a nice tree
decomposition of the input digraph $G$, with width equal to $tw$.
First, we consider the following additional notation and definitions:
$t$ is the index of a bag of $T$;
$G_t$ is the graph induced by all vertices $v \in X_{t'}$ such that either $t'=t$ or $X_{t'}$ is a descendant of $X_t$ in $T$;
Given a knot-free vertex deletion set $S$, for any bag $X_t$ there is a partition of $X_t$ into $S_t, Z_t, F_t, B_t$ where
\begin{itemize} 
\item $S_t$ (removed) is the set of vertices of $X_t$ that are going to be removed ($S_t=S\cap X_t$);
\item $Z_t$ (sinks) is the set of vertices of $X_t$ that are going to be turned into sinks after the removal of $S$;
\item $F_t$ (free/released) is the set of vertices of $X_t$ that, after the removal of $S$, are going to reach a sink that belongs to $V(G_t)$;
\item $B_t$ (blocked) is the set of vertices of $X_t$ that, after the removal of $S$, are going to reach \emph{no} sink that belongs to $V(G_t)$;
\end{itemize}
Let $Y\subseteq X_t$. We denote by $A_t(Y)$ the set of vertices in $F_t$ that reach some vertex of $Y$ in the graph induced by $V(G_t)\setminus S_t$.

The recurrence relation of our dynamic programming has the signature $C[t, S_t, Z_t, F_t, B_t]$, representing the minimum number of vertices in $G_t$ that must be removed in order to produce a graph such that for every remaining vertex $v$ either $v$ reaches a vertex in $B_t$ (meaning that it may still be released in the future) or $v$ reaches a vertex that became a sink (possibly the vertex itself), where every vertex in $S_t$ is removed, every vertex in $Z_t$ becomes a sink, every vertex in $F_t$ will have a path to a sink in $G_t$, and $S_t, Z_t, F_t, B_t$ form a partition of $X_t$.
Notice that the generated table has size $4^{tw}\times tw\times n$, and when $t=r$, $X_t = \emptyset$ and therefore $C[r, \emptyset, \emptyset, \emptyset, \emptyset ]$ contains the size of a minimum knot-free vertex deletion set of $G_r=G$.

The recurrence relation for each type of node is described as follows.

First, notice that if $v\in Z_t$ and there is an out-neighbor $w$ of
$v$ that is not in $S_t$, there is an inconsistency, i.e. $w$ must be
deleted (must belong to $S_t$). In addition, if $v\in B_t$ but has an
out-neighbor in $Z_t\cup F_t$, there is another inconsistency ($v$ is
not blocked), and if $v\in F_t$ but the removal of $S_t\cup B_t$ turns
$v$ into an isolated vertex, $v$ is not released, and it must belong
to $B_t$. For the inconsistent cases, $C[t, S_t,Z_t,F_t,B_t] =
+\infty$. Such cases can be recognized and treated by simple
preprocessing in linear time on the size of the table. Therefore,
we consider next only consistent cases.

\noindent {\bf Leaf Node:} If $X_t$ is a leaf node then $X_t=\emptyset$. Therefore $$C[t, \emptyset,\emptyset,\emptyset,\emptyset] = 0.$$

\noindent {\bf Insertion Node:} Let $X_t$ be a node of $T$ with a child $X_{t'}$ such that $X_t= X_{t'} \cup  \{v\}$ for some $v \notin X_{t'}$. We have the following:

$$C[t, S_t, Z_t, F_t, B_t] =
\begin{cases}
   1) ~case ~v \in S_t:\\
    \mbox{--} ~~~C[t', S_t\setminus \{v\}, Z_t, F_t, B_t] +1, &\\
   2) ~case ~v \in Z_t:\\   
     \mbox{--} ~~~ \min_{A'\subseteq A_t(v)}\{C[t', S_t,Z_t\setminus \{v\}, F_t \setminus A', B_t\cup A']\}, & \\
   3) ~case ~v \in F_t :\\    
    \mbox{--} ~~~ \min_{A'\subseteq A_t(v)}\{C[t, S_t, Z_t, F_t\setminus \{A'\cup \{v\}\}, B_t\cup A']\}, & \\   
   4) ~case ~v \in B_t:\\
     \mbox{--} ~~~C[t', S_t, Z_t, F_t, B_t\setminus \{v\}] &\\
   \end{cases}.$$
   
\noindent Recall that $A_t(v)$ is the set of vertices in $F_t$ that reach $v$ in the graph induced by $V(G_t)\setminus S_t$, i.e., the set of vertices that can be released by $v$ if it was blocked in $G_{t'}$. Also note that, for simplicity, we consider only consistent cases, thus in case 2 it holds that $N^+(v)\cap X_t\subseteq S_t$, in case 3 it holds that $N^+(v)\cap (Z_t\cup F_t)\neq \emptyset$, and in case 4 it holds that $N^+(v)\cap \{Z_t\cup F_t\}=\emptyset$.

\smallskip

\noindent {\bf Forget Node:} Let $X_t$ be a forget node with a child
$X_{t'}$ such that $X_t = X_{t'} \setminus \{v\}$, for some $v \in
X_{t'}$. The forget node selects the best scenario considering all the 
possibilities for the forgotten vertex, discarding cases that lead to
non-feasible solutions. In this problem, unfeasible cases are identified
when the forgotten vertex $v$ of $X_{t'}$ was blocked and reached no
other node in $B_t$. Hence:

\begin{itemize}
\item If $N^+(v)\cap B_{t'} \neq \emptyset$ then
$$C[t, S_t, Z_t, F_t, B_t] = \min
\begin{cases}
   C[t', S_t \cup \{v\}, Z_t, F_t, B_t],\\
    C[t', S_t, Z_t \cup \{v\}, F_t, B_t],\\
   C[t', S_t, Z_t, F_t \cup \{v\}, B_t],\\   
    C[t', S_t, Z_t, F_t, B_t \cup \{v\}]\\
   \end{cases}.$$
\item If $N^+(v)\cap B_{t'} = \emptyset$ then
$$C[t, S_t, Z_t, F_t, B_t] = \min
\begin{cases}
   C[t', S_t \cup \{v\}, Z_t, F_t, B_t],\\
   C[t', S_t, Z_t \cup \{v\}, F_t, B_t],\\
   C[t', S_t, Z_t, F_t \cup \{v\}, B_t],\\   
\end{cases}.$$
\end{itemize}

\noindent {\bf Join Node:} Let $X_t$ be a join node with children
$X_{t_1}$ and $X_{t_2}$, such that $X_t = X_{t_1} = X_{t_2}$. For any
optimal knot-free vertex deletion set $S$ of $G$ it holds that $V(G_t)\cap S
= \{V(G_{t_1})\cap S\} \cup \{V(G_{t_2})\cap S\}$. Clearly, if
$S_t\subseteq S$ then we can assume that $S_t=S_{t_1}=S_{t_2}$. In
addition, $Z_t=Z_{t_1}=Z_{t_2}$ otherwise we will have an
inconsistency. Also note that a vertex is released in $G_t$ if it
reaches a vertex (possibly the vertex itself) that is released either
in $G_{t_1}$ or $G_{t_2}$. Thus:
$$C[t, S_t, Z_t, F_t, B_t] = \min_{\forall F', F''}
\{ C[t_1, S_t, Z_t, F', B']+C[t_2, S_t, Z_t, F'', B'']\}- |S_t|,\textrm{~where~} A_t(F'\cup F'') = F_t.$$

Note that $A_t(F'\cup F'')$ is the set of vertices that either are
released in $G_{t_i}$ ($i\in\{1,2\}$) or can be released in $G_t$ by
vertices of $F'\cup F''$, even if they are blocked in both $G_{t_1}$ and
$G_{t_2}$; this can occur, for example, if a blocked vertex $v$
reaches another blocked node $w$ in $G_{t_1}$, and in $G_{t_2}$ vertex $w$ is
released.

Now, in order to run the algorithm, one can visit the bags of
$\mathcal{T}$ in a bottom-up fashion, performing the queries described
for each type of node.  Since the reachability between the vertices of
a bag can be stored in a bottom-up manner on $\mathcal{T}$, one can
fill each entry of the table in $2^{O(tw)}$ time, and as the table has
size $2^{O(tw)}\times n$, the dynamic programming can be performed in
time $2^{O(tw)}\times n$.

Regarding correctness, let $S^*$ be a minimum knot-free vertex deletion set of a digraph $G$ with a tree decomposition $\mathcal{T}$. Let $S_t^*,Z_t^*,F_t^*,B_t^*$ be a partition of the vertices of $X_t$ into removed, sinks, released and blocked, with respect to $G_t$ after the removal of $S^*$. Note that $S_t^* =X_t\cap S^*$. 

\medskip

\noindent {\bf Fact 1.} {\em There is no vertex $w\in V(G_t)\setminus
  X_t$ such that $w$ reaches a vertex $v\in B_t^{*}$ in
  $G[V(G_t)\setminus S_t]$ and $w\in S^*$. Otherwise, since every
  vertex in $B_t^{*}$ will reach a sink that is not in $G_t$, by Observation~\ref{lem:re-add} one can
  remove from $S^*$ every vertex that reaches $B_t^{*}$ in
  $G[V(G_t)\setminus S_t]$, obtaining a subset of $S^*$ which is also
  a knot-free vertex deletion set, contradicting the fact that S
  is minimum.}
  
\medskip

This fact implies that the paths considered to compute
$A_t(v)/A_t(F'\cup F')$' can in fact be used to release blocked
vertices. Similarly, Fact 2 also holds.

\medskip

\noindent {\bf Fact 2.} {\em Let $\widehat{S}$ be a set for which the
  minimum is attained in the definition of
  $C[t,S_t^*,Z_t^*,F_t^*,B_t^*]$. Then $\widehat{S}\cup (S^*\setminus
  V(G_t))$ is also a solution (which is minimum) for {\sc
    KFVD}. Otherwise, from $\widehat{S}\cup (S^*\setminus V(G_t))$ we
  can also obtain a knot-free vertex deletion set smaller than $S^*$,
  which is a contradiction.}
  
\medskip

Fact 2 implies that we have stored enough information. At this point,
the correctness of the recursive formulas is straightforward.

Finally, to show a lower bound based on ETH, we can transform an
instance $F$ of {\sc 3-SAT} into an instance $G_F$ of {\sc KFVD} using
the polynomial reduction presented in~\cite[Theorem~4]{cocoon2018},
obtaining in polynomial time a graph with $|V| = 2n+2m$, and so
$tw=O(n+m)$. Therefore, if {\sc KFVD} can be solved in
$2^{o(tw)}|V|^{O(1)}$ time, then we can solve 3-SAT in
$2^{o(n+m)}(n+m)^{O(1)}$ time, i.e., ETH fails.
\end{proof}

\section{The size of a minimum directed feedback vertex set as parameter}\label{sec:feedback}

Recall
that {\sc $k$-KFVD} is $W[1]$-hard (for fixed K-width and longest
directed path) and that, as noticed in Observation~\ref{viability}, we
can assume $k<dfv(G)$. This motivates us to determining the status
of {\sc $dfv$-KFVD}. In this section, we present two FPT-algorithms. Both with the size
of a minimum directed feedback vertex set as parameter but with an aggregate
parameter, the K-width, $\kappa (G)$, for the first one and the length
of a longest directed path, $p(G)$, for the second one. Since finding a minimum
directed feedback vertex set $F$ in $G$ can be solved in
FPT-time (with respect to $dfv$)~\cite{chen2008fixed}, we consider that $F$, a minimum DFVS, is given. Namely, we show that both {\sc $(dfv,\kappa)$-KFVD} and {\sc $(dfv,p)$-KFVD} are FPT.

At this point, we need to define the following variant of {\sc KFVD}.
\medskip

\noindent    \fbox{
        \parbox{.965\textwidth}{
\noindent
{\sc \textsc{Disjoint Knot-Free Vertex Deletion (Disjoint-KFVD)}}

\noindent
\textbf{Instance}: A directed graph $G=(V,E)$; a subset $X\subseteq V$; and a positive integer $k$.


\noindent
\textbf{Question}: Determine if $G$ has a set $S \subset V(G)$ such that $|S| \leq k$, $S\cap X=\emptyset$ and $G[V\setminus S]$ is knot-free.
}
}
\medskip

\noindent
We call {\it forbidden vertices} the vertices of the
set $X$. It is clear that {\sc Disjoint-KFVD}
generalizes {\sc KFVD} by taking $X=\emptyset$.  

Let us now define two more steps that are FPT parameterized by $dfv$ and that will be used for both $(dfv,\kappa)$-KFVD and $(dfv,p)$-KFVD. The next step will allow us to consider that the vertices of $F$ are forbidden.
We need the following straightforward observation.
\begin{obs}\label{redGuessed}
Let $(G,k)$ be an instance of {\sc KFVD} and $v \in V(G)$. 
\begin{itemize}
    \item if $(G,k)$ is a {\em yes}-instance and there exists a solution $S$ with $v \in S$, then $(G \setminus \{v\},k-1)$ is a {\em yes}-instance
    \item if $(G \setminus \{v\},k-1)$ is a {\em yes}-instance then $(G,k)$ is a {\em yes}-instance
\end{itemize}
\end{obs}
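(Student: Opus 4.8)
The plan is to unwind the definitions: I claim that deleting $v$ from $G$ and then a set of at most $k-1$ further vertices produces exactly the same induced subgraph as deleting from $G$ a set of at most $k$ vertices that contains $v$, and since knot-freeness is a property of this subgraph alone, both implications will follow immediately.

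For the first bullet I would start from a solution $S$ of $(G,k)$ with $v \in S$ and set $S' = S \setminus \{v\}$. Then $|S'| \le |S|-1 \le k-1$; moreover, since $(V(G)\setminus\{v\})\setminus S' = V(G)\setminus S$, the graph $(G\setminus\{v\})[(V(G)\setminus\{v\})\setminus S']$ coincides with $G[V(G)\setminus S]$, which is knot-free by hypothesis. Hence $S'$ witnesses that $(G\setminus\{v\},k-1)$ is a \emph{yes}-instance.

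For the second bullet I would start from a solution $S'$ of $(G\setminus\{v\},k-1)$ and set $S = S' \cup \{v\}$. Then $|S| \le k$, and by the same set identity $G[V(G)\setminus S] = (G\setminus\{v\})[(V(G)\setminus\{v\})\setminus S']$ is knot-free, so $S$ is a solution of $(G,k)$.

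I do not expect any real obstacle here: the statement is immediate once the elementary identity $G[V(G)\setminus(S'\cup\{v\})] = (G\setminus\{v\})[(V(G)\setminus\{v\})\setminus S']$ between induced subgraphs is recorded. The only point worth noting is that, in contrast with Observation~\ref{lem:re-add} and Lemma~\ref{lem:addback}, no reachability or ``add-back'' argument is required, because we are not shrinking a solution but simply moving a designated vertex into or out of it.
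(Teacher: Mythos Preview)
Your argument is correct and is exactly the natural one: the identity $G[V(G)\setminus(S'\cup\{v\})] = (G\setminus\{v\})[(V(G)\setminus\{v\})\setminus S']$ immediately yields both implications. The paper itself offers no proof of this observation, merely calling it ``straightforward'', so there is nothing to compare against.
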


\begin{branch}[On the directed feedback vertex set $F$]\label{branch:fvs} Let $(G, F, k)$ be an
  instance of {\sc $dfv$-KFVD}. In time $3^{dfv}\times O(n)$ we can build $3^{dfv}$ instances $(G^i, F_1^i, X^i, k^i)$ of {\sc $dfv$-Disjoint-KFVD} as follows. We consider all
  possible partitions of $F$ into three parts: $F_1$, the set of  vertices of $F$ that will not be removed (i.e., they become forbidden); $F_2$, the set of vertices in $F$ that will be removed; and $F_3$, the set of vertices in $F$ that will be turned into sinks. For each such a partition (indicated by the index $i$), we remove the set $Y^i = F_2^i \cup N^+(F_3^i)$ of vertices and we apply exhaustively Reduction Rules~\ref{sccsize1} and~\ref{redSCCG} (see Section~\ref{sec:redrules}). We denote by $G^i$ the obtained graph, 
  $X^i = F_1^i$,
  and $k^i=k-|Y|$.  
\end{branch}

\noindent
 According to Observation~\ref{redGuessed}, it is clear that $(G,F,k)$ is a {\em yes}-instance of {\sc $dfv$-KFVD} if and only if one of the instances $(G^i,F_1^i,X^i,k^i)$, $1 \leq i \leq 3^{dfv}$, of {\sc $dfv$-Disjoint-KFVD} is a {\em yes} instance.
 Since there are at most $3^{dfv}$ partitions of $F$, the branching reduction can be performed in FPT time. Although at this point $X^i = F_1^i$, in the next steps some vertices of $V(G)\setminus F_1$ may become forbidden and therefore should be added to $X^i$. Also, from this point forward, we assume that we are given an instance $(G,F_1,X,k)$ of {\sc $dfv$-Disjoint-KFVD}. 

Notice that after applying Reduction Rule~\ref{sccsize1} (Section~\ref{sec:redrules}), each strongly
connected component of $G$ is at least of size two. Thus, each of them
must contain at least one cycle; therefore, the number of strongly
connected components of $G$ is bounded by $dfv$. Moreover, for any strongly connected component $U$ of $G$, Reduction Rule~\ref{redSCCG} gives an upper bound for the number of vertices in $N^+(V(U))$ (i.e., vertices that are not in $U$ but it is out-neighbour of some vertex in $U$). 
This implies that $G$ has at most $dfv \times k\le dfv^2$ such vertices between its strongly connected components. This observation leads to a branching rule.


\begin{branch}[On strongly connected components] \label{branch:scc}
 Let $S_H$ be the set of
vertices that are extremities of arcs between the strongly connected
components of $G$. We have $|S_H|\le 2\times dfv\times k\le 2\times dfv^2$ and we can branch in FPT-time trying all possible partitions of $S_H$ into two sets: $S_1$, the set of vertices to be deleted in $G$ such that $|S_1|\leq k$; and $S_2=S_H\setminus S_1$, the set of vertices marked as forbidden, and then added into $X$.
\end{branch}

Notice that this step involves a $2^{|S_H|}$ branching. At this point, we may consider that we have an instance $(G,F,X,k)$ where $F \subseteq X$ and such that for any arc $uv$ between two SCC's $U_i$ and $U_j$, $\{u,v\} \subseteq X$. We call such an instance as a \emph{nice} instance.

\begin{lemma}[After cleaning of Branching~\ref{branch:scc}]
\label{weakToKnot}
If there is an algorithm running in time $g(dfv)\times poly(n)$ for {\sc $dfv$-Disjoint-KFVD} restricted to nice instances that are strongly connected, then there is an FPT algorithm running in time $g(dfv)\times poly(n)\times c.n.log(dfv)$ (where $c$ is a constant) to solve {\sc $dfv$-Disjoint-KFVD} for any nice instance. 
\end{lemma}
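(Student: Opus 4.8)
The plan is to reduce a nice instance $(G,F,X,k)$ of {\sc $dfv$-Disjoint-KFVD} to several \emph{strongly connected} nice instances, one for each \emph{sink} strongly connected component of $G$ (an SCC with no outgoing arc in the condensation of $G$). Write $U_1,\dots,U_\ell$ for the SCCs of $G$; as noted before Branching~\ref{branch:scc}, every $U_i$ has size at least two and $\ell\le dfv$. Let $\mathcal S$ be the set of $U_i$ that are sinks of the condensation, and for a digraph $H$ with forbidden set $W$ write $\mathrm{OPT}(H,W)$ for the minimum size of a set $S\subseteq V(H)\setminus W$ such that $H-S$ is knot-free ($+\infty$ if none exists). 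Observe that for $U_i\in\mathcal S$ the triple $(G[U_i],\,F\cap U_i,\,X\cap U_i)$ is a strongly connected nice instance: $G[U_i]$ is strongly connected, it has a single SCC so the inter-component arc condition is vacuous, and $F\cap U_i\subseteq X\cap U_i$ is a directed feedback vertex set of $G[U_i]$. The statement I would prove is
$$\mathrm{OPT}(G,X)\;=\;\sum_{U_i\in\mathcal S}\mathrm{OPT}\bigl(G[U_i],\,X\cap U_i\bigr).$$

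To prove this, recall from the Introduction that a digraph is knot-free iff every vertex has a directed path to a sink. For the inequality $\le$ (\emph{composition}), given optimal solutions $S_i$ of $(G[U_i],X\cap U_i)$ for all $U_i\in\mathcal S$, I would show that $S:=\bigcup_{U_i\in\mathcal S}S_i$ solves $(G,X)$, by induction along a topological ordering of the condensation processing sink components first: a vertex of a sink component $U_i$ reaches inside $G[U_i]-S_i$ some sink, which remains a sink of $G-S$ because $U_i$ has no outgoing arc; a vertex of a non-sink component $U_j$ is untouched by $S$, hence reaches (inside the strongly connected $G[U_j]$) an exit vertex, which by niceness is forbidden and points to a forbidden --- hence surviving --- vertex of a component earlier in the order, to which the induction hypothesis applies. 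For the inequality $\ge$ (\emph{decomposition}), starting from an optimal solution $S^\star$ the same sink-first induction shows that $S':=\bigcup_{U_i\in\mathcal S}(S^\star\cap U_i)\subseteq S^\star$ is still a solution (morally: the part of $S^\star$ living in non-sink components is redundant, since those vertices are saved through an exit vertex --- this is the flavour of Observation~\ref{lem:re-add} and Lemma~\ref{lem:addback}); minimality then forces $S^\star=S'$, so $S^\star$ is the disjoint union of the sets $S^\star\cap U_i$ over $U_i\in\mathcal S$, and since $U_i$ has no outgoing arc, $S^\star\cap U_i$ on its own makes $G[U_i]$ knot-free, whence $|S^\star\cap U_i|\ge\mathrm{OPT}(G[U_i],X\cap U_i)$.

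Given the identity, the algorithm is immediate: compute the SCCs and the condensation of $G$ in polynomial time; for each of the $|\mathcal S|\le\ell\le dfv$ sink components $U_i$, determine $o_i:=\min\bigl(\mathrm{OPT}(G[U_i],X\cap U_i),\,k+1\bigr)$ by binary search over $\{0,1,\dots,k\}$, where each probe is a single call of the hypothesised $g(dfv)\cdot poly(n)$-time algorithm on the strongly connected nice instance $(G[U_i],F\cap U_i,X\cap U_i,k')$; finally answer \emph{yes} iff $\sum_{U_i\in\mathcal S}o_i\le k$. Since the preprocessing never increases the budget and we started (Observation~\ref{viability}) from an instance with $k<dfv$, each binary search uses $O(\log dfv)$ calls, hence $O(dfv\log dfv)\le O(n\log dfv)$ calls in total, each of cost $g(dfv(G[U_i]))\cdot poly(n)\le g(dfv)\cdot poly(n)$ (taking $g$ non-decreasing without loss of generality and using $dfv(G[U_i])\le dfv$); together with the polynomial preprocessing this gives the announced running time $g(dfv)\cdot poly(n)\cdot c\cdot n\cdot\log(dfv)$.

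The main obstacle is the decomposition direction of the identity: one must make sure an optimal solution may be taken disjoint from every non-sink component, and that for a sink component $U_i$ the set $S^\star\cap U_i$ is \emph{forced} to make $G[U_i]$ knot-free as a standalone graph. Both points rest on the two defining properties of a nice instance --- sink components have no outgoing arc, and every inter-component arc has both endpoints forbidden --- which is exactly what decouples the sink components from the rest of $G$. The composition direction and the time accounting are then routine.
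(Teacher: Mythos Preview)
Your proof is correct and follows essentially the same route as the paper: your ``sink SCCs'' are exactly the paper's knots $\mathcal{K}$ (Reduction Rule~\ref{sccsize1} guarantees every SCC has size at least two, so a sink SCC is a knot), and both arguments establish $\mathrm{OPT}(G,X)=\sum_{U_i\in\mathcal K}\mathrm{OPT}(G[U_i],X\cap U_i)$ and then binary-search each of the at most $dfv$ components. The only cosmetic difference is that the paper gets the $\ge$ direction via the direct chain $\mathrm{OPT}(G)\le\sum_i\mathrm{OPT}(G[U_i])\le\sum_i|S^\star\cap U_i|\le|S^\star|=\mathrm{OPT}(G)$ (the middle inequality because $S^\star\cap U_i$ solves $G[U_i]$, as $U_i$ has no outgoing arc), whereas you take the equivalent detour of first arguing that an optimal $S^\star$ may be assumed to lie entirely inside the sink SCCs.
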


\begin{proof}
Let $(G,F,X,k)$ be a nice instance and $S$ be a solution.
Let $\mathcal{U}=\{U_1,\dots,U_s\}$ be the partition of $V(G)$ where each $U_i$ is an SCC, and let $\mathcal{K}=\{U_i :\mbox{ $U_i$ is a knot} \}$. Without loss of generality  we can assume that $\mathcal{K}=\{U_1,\dots ,U_t\}$ for some $t\le s$.
Let $S_i = S \cap U_i$. 
Notice that if $S$ is a solution then for any $i \in [t]$, $S_i$ is a solution of $(G[U_i],F \cap U_i,X \cap U_i, |S_i|)$. Moreover, for any solutions $S'_i$ to $(G[U_i],F \cap U_i,X \cap U_i, |S'_i|)$ where $\sum_{i=1}^t |S'_i| \le k$, $S' = \bigcup_{i=1}^t S'_i$ will be a solution to $(G,F,X,k)$ because vertices of some $U_j \notin \mathcal{K}$ will still have a path to a set $U_i \in \mathcal{K}$ in $G_{\bar{S'}}$ since any arc between two SCC's has forbidden endpoints.
Thus, given a nice instance $(G,F,X,k)$ and an algorithm $A$ for a nice instance restricted to one SCC, for any $U_i \in \mathcal{K}$ we perform a binary search to find the smallest $k_i$ such that $A(G[U_i],F \cap U_i,X \cap U_i, k_i)$ answers {\em yes}, and we answer {\em yes} iff $\sum_{i=1}^t k_i \le k$.
\end{proof}

From Lemma~\ref{weakToKnot}, we may assume that we have an instance $(G,F,X,k)$ such that $F \subseteq X$ and $G$ is strongly connected (there is only one SCC). We call such an instance as a \emph{super nice} instance. 






\subsection{Combining DFVS-number and K-width}
In this section, we prove that $(dfv,\kappa)$-Disjoint-KFVD restricted to super nice instances is FPT.

Let $F = \{v_1,\dots ,v_{dfv}\}$. For every pair of integers
$i,j$ with $1\le i, j\le dfv$ we define $H_{i,j}$ as the
\textit{$(i,j)$-connectivity set}, that is, the set of vertices which
are contained in a directed path from $v_i$ to $v_j$ in the induced
subgraph $G[V\setminus (F\setminus \{v_i,v_j\})]$ (if $i=j$ then
$H_{i,i}$ is the set of vertices contained in a cycle in $G[V\setminus
  (F\setminus \{v_i\})]$). Let us define a set $B$ on which we will
later branch in a way to ensure connectivity among different
connectivity sets. We start with $B = \{\emptyset\}$, and then, for
each possible pair of connectivity sets $H_{i,j}$, $H_{i',j'}$ we increase $B$ as follows:
\smallskip
\begin{enumerate}
\item[i)] add $N^+(H_{i,j} \setminus H_{{i',j'}}) \cap H_{{i',j'}}$ to $B$.

\item[ii)] add $N^+(H_{i,j} \cap H_{{i',j'}}) \cap (H_{{i',j'}}
  \setminus H_{{i,j}})$ to $B$.

\item[iii)] add 
$N^+(H_{i',j'} \setminus H_{i,j}) \cap H_{i,j}$ to $B$.

\item[iv)] add $N^+(H_{{i',j'}} \cap H_{i,j}) \cap (H_{{i,j}}
  \setminus H_{{i',j'}})$ to $B$.
\end{enumerate}
\smallskip
For a given pair of connectivity sets, in each of the items $i)$, $ii)$, $iii)$ and $iv)$ the number of added vertices to $B$
is at most $\kappa$. For instance,let  $y_1,\dots ,y_l$ be the vertices added by item $i)$, where each $y_s \in N^+(H_{i,j} \setminus H_{{i',j'}}) \cap H_{{i',j'}}$. By definition, there exist vertices $x_1,\dots ,x_l$ of $H_{i,j} \setminus H_{{i',j'}}$ such that $x_sy_s$ are arcs of $G$ for $s=1,\dots ,l$. Notice that while the $y_s$'s are distinct, the $x_s$'s are not forced to be so. For any $s \in \{1,\dots ,l\}$, there exists a path $P_s$ in $H_{i',j'}$ from $y_s$ to $v_{j'}$, and such a path does not intersect $H_{i,j} \setminus H_{{i',j'}}$. In the same way, by finding a path $Q_s$ from $v_i$ to $x_s$ for every $s \in \{1,\dots ,l\}$, we form $l$ distinct paths $Q_sP_s$ from $v_i$ to $v_{j'}$, implying 
$l \le \kappa$, the K-width of $G$.
 So, as there are $dfv^2$ different connectivity sets, at
the end of the process $B$ contains at most $\kappa \times~dfv^4$ vertices.  Figure~\ref{figcs} shows
examples of vertices to be added in $B$ regarding the interaction of
two different connectivity sets.

\begin{figure}[!ht]
	\centering
		\includegraphics[width=.9\textwidth]{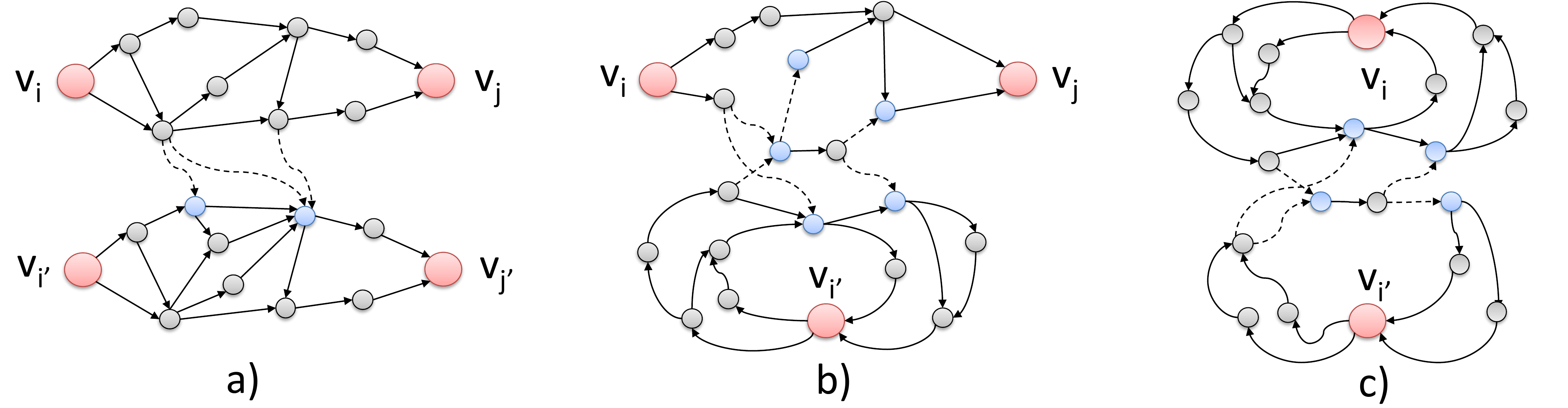}
		\caption{a) two connectivity sets with no intersection. b) an intersection with two vertices belonging to both connectivity sets. c) two connectivity sets $H_{i,j}$ with $i=j$. Vertices to be added in $B$ are marked in blue.}  
  \label{figcs}
\end{figure}

Next we establish our last branching rule.

\begin{branch}[On the connectivity sets]\label{feedSet}
 We branch by partitioning $B$
into three parts: $B_1$, the set of vertices that will not be removed
(ie. they become forbidden); $B_2$, the set of vertices that will be
removed; and $B_3$, the set of vertices that will become sinks.
Since $|B| \le  \kappa\times dfv^4$,  we branch at most $3^{\kappa.dfv^4}$ times.
\end{branch}

At this point, without loss of generality, one can assume that none of the above branching and reductions rules are applicable. Hence, the analysis boils down to
the case where $F \cup B \subseteq X$, meaning that all the vertices of $F \cup B$ are forbidden to be deleted or become sinks, and $G$ is strongly connected.


\begin{obs}[The consequences of Branching~\ref{feedSet}]\label{obsConSet}
Let $G$ be a graph for which no Reduction Rules~\ref{sccsize1} and~\ref{redSCCG} or Branching Rules~\ref{branch:fvs} to~\ref{feedSet} can be applied. Let
$H_{{i,j}}$ and $H_{{i',j'}}$ be two different connectivity arc sets
in $G$. If there is an arc from $H_{{i,j}} \setminus H_{{i',j'}}$ to
$H_{{i',j'}} \setminus H_{{i,j}}$ or $H_{{i,j}} \cap H_{{i',j'}}$ to
$H_{{i',j'}} \setminus H_{{i,j}}$ in $G[H_{{i,j}} \cup H_{{i',j'}}]$,
then the head vertex of such an arc is a forbidden vertex.
\end{obs}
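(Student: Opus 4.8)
The plan is to prove Observation~\ref{obsConSet} directly by contradiction: assume there is an arc $ab$ of the described type (so $b \in H_{i',j'}\setminus H_{i,j}$) whose head $b$ is \emph{not} forbidden, and derive a contradiction with the fact that Branching~\ref{feedSet} (and the construction of $B$) has already been carried out exhaustively. The key point is that $b$ must have been inserted into $B$ by one of the four items in the construction of $B$ for the pair $(H_{i,j},H_{i',j'})$, and every vertex placed in $B$ is forbidden after Branching~\ref{feedSet} (since $F\cup B\subseteq X$ in the surviving branch).

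First I would split into the two cases stated. Case~(1): there is an arc $ab$ with $a\in H_{i,j}\setminus H_{i',j'}$ and $b\in H_{i',j'}\setminus H_{i,j}$. Then $b\in N^+(H_{i,j}\setminus H_{i',j'})$, and since $b\in H_{i',j'}$, we get $b\in N^+(H_{i,j}\setminus H_{i',j'})\cap H_{i',j'}$, which is exactly the set added in item~i). Hence $b\in B$. Case~(2): there is an arc $ab$ with $a\in H_{i,j}\cap H_{i',j'}$ and $b\in H_{i',j'}\setminus H_{i,j}$. Then $b\in N^+(H_{i,j}\cap H_{i',j'})$ and $b\in H_{i',j'}\setminus H_{i,j}$, so $b\in N^+(H_{i,j}\cap H_{i',j'})\cap(H_{i',j'}\setminus H_{i,j})$, the set added in item~ii). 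Again $b\in B$. In both cases $b\in B\subseteq X$, so $b$ is forbidden, contradicting the assumption; hence $b$ is forbidden, which is the claim.

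The one subtlety I would make explicit is the bookkeeping of which pair orientation triggers which item. Items i)–iv) are written for an ordered pair $(H_{i,j},H_{i',j'})$, but the construction ranges over \emph{all} pairs of connectivity sets, so both $(H_{i,j},H_{i',j'})$ and $(H_{i',j'},H_{i,j})$ are processed; items iii) and iv) are just items i) and ii) with the roles of the two sets swapped. This guarantees that regardless of which of the two connectivity sets plays the role of ``$H_{i,j}$'' in the statement, the appropriate item fires and places $b$ in $B$. I would also note that $H_{i,j}\ne H_{i',j'}$ (they are ``two different connectivity sets'') is what lets us speak of the set differences $H_{i',j'}\setminus H_{i,j}$ as potentially nonempty and meaningful.

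I do not expect a real obstacle here: the statement is essentially an unpacking of the definition of $B$ together with the fact that $B$ becomes forbidden after Branching~\ref{feedSet}. The only place one must be a little careful is to ensure that the hypothesis ``no Reduction Rules~\ref{sccsize1} and~\ref{redSCCG} or Branching Rules~\ref{branch:fvs} to~\ref{feedSet} can be applied'' is used correctly: it is precisely Branching~\ref{feedSet} having been applied that forces $F\cup B\subseteq X$ in the branch under consideration, and it is this inclusion — not any structural property of the graph — that makes the head vertex forbidden. So the proof is: identify the item of the $B$-construction that captures $b$, conclude $b\in B\subseteq X$, done.
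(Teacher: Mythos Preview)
Your argument is correct and is exactly the intended one: the paper does not spell out a proof for this observation, treating it as an immediate consequence of the construction of $B$ together with the fact that after Branching~\ref{feedSet} one has $F\cup B\subseteq X$. Your two cases identifying items i) and ii) (and your remark that items iii) and iv) are the swapped-pair versions) are precisely the unpacking the paper leaves implicit.
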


We now aim to show that, for any vertex $v^*$ such that $v^*$ can be
turned into a sink, that is, $N^+(v^*)\cap X=\emptyset$ and $d^+(v^*) \le k$,
the deletion of $N^+(v^*)$ is sufficient for $G$ to become
knot-free.


\begin{lemma}\label{sinkv*}
  Let $(G, F, X, k)$ be an instance of {\sc $(dfv, \kappa)$-Disjoint-KFVD} such
  that $G$ is strongly connected and none of the branching and reduction rules can be applied. If there is a vertex $v^*$ with no forbidden out-neighbors, then $G[V
    \setminus N^+(v^*)]$ is knot-free.
\end{lemma}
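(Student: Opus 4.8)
The plan is to argue by contradiction: suppose that $G':=G[V\setminus N^+(v^*)]$ contains a knot $Q$, and derive a contradiction with the fact that none of the reduction and branching rules apply.

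First I would record the cheap structural facts about such a $Q$. Since every out-neighbour of $v^*$ has been deleted, $v^*$ is a sink of $G'$; because no vertex of a knot is a sink, $v^*\notin V(Q)$, and in fact $V(Q)\cap N^+(v^*)=\emptyset$. Since $Q$ is strongly connected on at least two vertices, $G'[V(Q)]$ contains a cycle, which is a cycle of $G$; as $F$ is a directed feedback vertex set of $G$ this forces $V(Q)\cap F\neq\emptyset$, so fix some $v_c\in V(Q)\cap F$. Finally, $G$ is strongly connected and $V(Q)\subsetneq V(G)$, so some arc of $G$ leaves $V(Q)$; since no arc leaves $V(Q)$ in $G'$ (that is what $Q$ being a knot of $G'$ means), every arc of $G$ out of $V(Q)$ has its head in $N^+(v^*)$, i.e.\ $\emptyset\neq N^+_G(V(Q))\setminus V(Q)\subseteq N^+(v^*)$.

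Next I would bring in the connectivity sets. Strong connectivity of $G$ together with $F$ being a DFVS imply that every vertex of $G$ lies in some connectivity set $H_{i,j}$ (cut a cycle through the vertex at its consecutive visits to $F$), and that a vertex of $F$ can only be an \emph{endpoint}, never an interior vertex, of a connectivity set. Working inside $Q$ (which is strongly connected and meets $F$ at $v_c$), I would build a connectivity set $H_{i,j}$ with $v_i,v_j\in V(Q)$ that contains a vertex $a\in V(Q)$ possessing an out-neighbour $b\in N^+_G(V(Q))\setminus V(Q)\subseteq N^+(v^*)$. Because $b\in N^+(v^*)$ and $N^+(v^*)\cap X=\emptyset$ while $F\cup B\subseteq X$, the vertex $b$ is \emph{not} forbidden; hence, applying Observation~\ref{obsConSet} to $H_{i,j}$ and a connectivity set containing $b$, we must have $b\in H_{i,j}$ (otherwise $b$, the head of an arc from $H_{i,j}$, would be forbidden). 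The same argument applied to the arc $v^*\to b$ shows that $N^+(v^*)$ lies inside every connectivity set containing $v^*$, so the connectivity sets avoiding $v^*$ survive intact in $G'$.

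The main obstacle is turning $b\in H_{i,j}$ (with $v_j\in V(Q)$) into a contradiction with $Q$ being a knot of $G'$; this is exactly where Branching~\ref{feedSet}, through Observation~\ref{obsConSet}, does the real work. The idea is that $b\in H_{i,j}$ yields a path from $b$ to $v_j\in V(Q)$ inside $G[V\setminus(F\setminus\{v_i,v_j\})]$; stitching such a path together with a path inside $Q$ from $v_j$ back to $a$ and the arc $a\to b$, and then repeatedly replacing each "first re-entry into $N^+(v^*)$'' along it by a detour through the connectivity sets that survive in $G'$ (each such detour lands on a non-forbidden, hence present, vertex by Observation~\ref{obsConSet}), one eventually exhibits an arc of $G$ leaving $V(Q)$ whose head is \emph{not} in $N^+(v^*)$; that arc is present in $G'$, contradicting that $Q$ is a knot of $G'$. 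The delicate part is the control needed in this last step, namely an induction on the number of vertices of $N^+(v^*)$ met by the constructed walk and the routine extraction of simple paths through the prescribed vertices from the concatenated walks; once those are handled the contradiction, and hence knot-freeness of $G[V\setminus N^+(v^*)]$, follows.
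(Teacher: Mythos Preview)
Your setup is fine and several of your intermediate claims are correct: $v^*\notin V(Q)$, $V(Q)\cap F\neq\emptyset$, every arc of $G$ leaving $V(Q)$ has its head in $N^+(v^*)$, every vertex lies in some connectivity set, the deduction $b\in H_{i,j}$ from Observation~\ref{obsConSet}, and the statement that any connectivity set avoiding $v^*$ is disjoint from $N^+(v^*)$. The problem is the last paragraph, which you yourself flag as ``delicate'' and leave unfinished. That iterative rerouting does not work as you describe it: the connectivity sets you would need for the detours are precisely those meeting $N^+(v^*)$, and by your own ``survive intact'' observation these are exactly the ones that \emph{do} contain $v^*$, so they do not survive in $G'$. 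There is no terminating induction hiding there.

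What you are missing is that your two correct observations already finish the proof with no iteration. You have $b\in H_{i,j}$ with $b\in N^+(v^*)$; by your ``survive intact'' claim, $H_{i,j}$ therefore contains $v^*$. Since $v_i,v_j\in V(Q)$, there is a $v_iv_j$-path through $v^*$ in $G[V\setminus (F\setminus\{v_i,v_j\})]$; take its prefix $P'$ from $v_i$ to $v^*$, which avoids $F\setminus\{v_i\}$. If $V(P')\cap N^+(v^*)=\emptyset$ then $P'$ lies in $G'$ and exhibits an arc out of $V(Q)$ in $G'$ (since $v_i\in V(Q)$ and $v^*\notin V(Q)$), contradicting that $Q$ is a knot of $G'$. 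If some $w\in V(P')\cap N^+(v^*)$ exists then the suffix of $P'$ from $w$ to $v^*$ together with the arc $v^*w$ is a cycle avoiding $F$ (note $w\neq v_i$ because $v_i\in F\subseteq X$ while $N^+(v^*)\cap X=\emptyset$), contradicting that $F$ is a directed feedback vertex set. This is essentially the paper's argument: it builds one connectivity set through $v^*$ and $y$ and another inside the knot through $x$, shows $v^*$ is \emph{not} in the knot connectivity set by the case split above, and then a single direct application of the items of Branching~\ref{feedSet} to the arc $xy$ forces $y$ to be forbidden, a contradiction. No walk-surgery is needed.
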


\begin{proof}
Let $(G, F, k, X)$ and $v^*$ as stated. Denote by $G'$ the resulting graph, i.e, $G' = G[V \setminus  N^+(v^*)]$. For contradiction, assume that $G'$ contains a knot $K$. 
 As $G$ is strongly connected, $K$ was not a knot in $G$, implying that there exists an arc $xy$ of $G$ such that $x\in V(K)$ and $y\in N^+(v^*)$. Notice that $v^* \notin F$ since vertices from $F$ cannot become sinks and $y \notin X$, since $y$ has to be deleted in order to $v^*$ to become a sink.
 Let us now define a connectivity set containing both $y$ and $v^*$.
 Let $s$ be any source of the DAG $G[V \setminus F]$ such that there is a $sv^*$ path in $G[V \setminus F]$, 
 and let $z$ be any sink of $G[V \setminus F]$ such that there is a $yz$ path in $G[V \setminus F]$.
 As $G$ is strongly connected, there exist arcs $v_is$ and $zv_j$ where $\{v_i,v_j\} \subseteq F$ and we get
 that $\{v^*,y\} \subseteq H_{i,j}$.
 Notice that $i=j$ is possible. Similarly, as $G[V(K)]$ is strongly connected, it contains a cycle $C'$ containing $x$ and thus there exists a connectivity set $H_{k,l}$ containing a path $P$ from $v_k$ to $v_l$ which is a subpath of $G[V(K)]$ containing $x$, and with $\{v_k,v_l\} \subseteq V(K)$. Notice first that $v^*,y\notin F$. In addition, $v^*$ is not a vertex of $H_{k,l}$, otherwise there would exist a path $P'$ from $v_k$ to $v^*$ containing no vertex of $F\setminus \{v_k\}$, which is not possible. Indeed, either $V(P') \cap N^+(v^*) = \emptyset$ and we would get that $K$ is not a knot, or $V(P') \cap N^+(v^*) \neq \emptyset$, implying that there is a cycle with no vertex of $F$.
 Thus, as $y$ was not a forbidden vertex, it means that $y\notin H_{k,l}$ otherwise the arc $v^*y$ would go from $H_{i,j}\setminus H_{k,l}$ to $H_{i,j}\cap H_{k,l}$ and $y$ should be forbidden by Branching~\ref{feedSet} item $i)$.
Then we have $y\in H_{i,j}\setminus H_{k,l}$.
Similarly, we have $x\notin H_{i,j}\cap H_{k,l}$, otherwise by item $ii)$ of 
Branching~\ref{feedSet}, vertex $y$ would be forbidden. Finally $x\in H_{k,l}\setminus H_{i,j}$ and $y \in H_{i,j}\setminus H_{k,l}$, since $(H_{i,j}\setminus H_{k,l}) \subseteq H_{i,j} $, and by item $iii)$ of  Branching~\ref{feedSet}, vertex $y$ would again be a forbidden vertex, a contradiction.
\end{proof}

In conclusion, by Lemma~\ref{sinkv*}, 
we can find in polynomial time the optimum solution for $G$: we choose a vertex $v^*$ with minimum out-degree.


\begin{theorem}
{\sc Knot-free Vertex Deletion} can be solved in $2^{O(\kappa dfv^{5})}\times n^{O(1)}$.
\end{theorem}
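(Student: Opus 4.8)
The plan is to chain together the branching and reduction steps of this section into one algorithm for {\sc KFVD} parameterized by $dfv+\kappa$. First I would compute, in FPT-time in $dfv$ via Chen et al.~\cite{chen2008fixed}, a minimum directed feedback vertex set $F$; by Observation~\ref{viability} we may assume $k<dfv$. Applying Branching~\ref{branch:fvs} to $(G,F,k)$ produces $3^{dfv}$ instances of {\sc Disjoint-KFVD} in which the vertices of $F$ are forbidden and Reduction Rules~\ref{sccsize1} and~\ref{redSCCG} have been applied exhaustively, and by Observation~\ref{redGuessed} the original instance is a {\em yes}-instance iff one of them is. On each such instance every SCC has size at least two, so there are at most $dfv$ of them; Branching~\ref{branch:scc} on the endpoints of inter-SCC arcs ($|S_H|\le 2dfv^2$ since $k<dfv$) produces $2^{O(dfv^2)}$ {\em nice} instances. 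By Lemma~\ref{weakToKnot}, solving a nice instance reduces, up to a polynomial overhead from binary-searching the optimal budget in each knot-component, to solving {\em super nice} instances: strongly connected graphs with $F\subseteq X$.

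For a super nice instance $(G,F,X,k)$ I would compute the $dfv^2$ connectivity sets $H_{i,j}$ and the set $B$, which has $|B|\le\kappa\cdot dfv^4$ by the counting argument preceding Branching~\ref{feedSet}, and apply Branching~\ref{feedSet}; this yields $3^{\kappa dfv^4}$ instances in each of which $F\cup B\subseteq X$, $G$ is strongly connected, and no branching or reduction rule is applicable, so Observation~\ref{obsConSet} holds. On such an instance $G$ is itself a knot, and since $X\supseteq F\neq\emptyset$, in any solution $S$ some vertex $v\notin S$ must become a sink; this forces $N^+(v)\subseteq S$, hence $N^+(v)\cap X=\emptyset$ and $d^+(v)\le k$. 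Conversely, Lemma~\ref{sinkv*} shows that deleting $N^+(v^*)$ already makes $G$ knot-free for \emph{any} vertex $v^*$ with no forbidden out-neighbour. Combined with Corollary~\ref{lem:NZ} (a minimal solution deletes exactly the out-neighbourhood of its sink set), this shows that the optimum for the super nice instance equals $\min\{d^+(v^*):N^+(v^*)\cap X=\emptyset\}$ (set to $+\infty$ if no such $v^*$ exists), which is computable in polynomial time.

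It remains to sum the costs: the DFVS computation together with Branching~\ref{branch:fvs} contributes a factor $2^{O(dfv\log dfv)}$, Branching~\ref{branch:scc} a factor $2^{O(dfv^2)}$, Lemma~\ref{weakToKnot} a polynomial factor, and the per-component treatment (Branching~\ref{feedSet} plus the polynomial terminal step, over at most $dfv$ knot-components) a factor $2^{O(\kappa dfv^4)}$; the product is $2^{O(\kappa dfv^4)}\times n^{O(1)}$, within the claimed $2^{O(\kappa dfv^{5})}\times n^{O(1)}$. I expect the delicate point to be the correctness of the terminal step — verifying that every instance produced by Branching~\ref{feedSet} genuinely satisfies the hypotheses of Lemma~\ref{sinkv*} (in particular that one may indeed assume $G$ strongly connected there, re-applying the earlier steps in the branches that actually delete vertices), and that Corollary~\ref{lem:NZ} legitimizes reducing optimality to the minimum-out-degree choice — together with the routine but unavoidable bookkeeping that the whole chain of branchings and reductions preserves {\em yes}/{\em no} status and correctly recombines partial solutions as in Lemma~\ref{weakToKnot}.
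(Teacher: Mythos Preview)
Your overall strategy coincides with the paper's, but the running-time analysis has a genuine gap at Branching~\ref{feedSet}. You assert that each of the $3^{\kappa dfv^4}$ branches it produces is an instance in which ``$F\cup B\subseteq X$, $G$ is strongly connected, and no branching or reduction rule is applicable''. This holds only for the \emph{single} branch with $B_1=B$ and $B_2=B_3=\emptyset$. In every other branch at least one vertex is deleted (either a vertex of $B_2$, or some vertex of $N^+(B_3)$ removed to create the guessed sinks), and these deletions can destroy strong connectivity, re-enable Reduction Rules~\ref{sccsize1} and~\ref{redSCCG}, and force a new application of Branching~\ref{branch:scc}; the resulting super-nice pieces then require a \emph{fresh} computation of $B$ and another round of Branching~\ref{feedSet}. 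You flag exactly this as a ``delicate point'' and even mention ``re-applying the earlier steps in the branches that actually delete vertices'', but your cost estimate (a single factor $2^{O(\kappa dfv^4)}$) is computed as if Branching~\ref{feedSet} were applied once per knot-component and then terminated.

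The paper closes the loop by observing that every non-trivial branch of Branching~\ref{feedSet} deletes at least one vertex and hence \emph{strictly decreases} $k$; since $k<dfv$, the cycle (Branching~\ref{branch:scc} $+$ reductions $+$ Branching~\ref{feedSet}) can iterate at most $k$ times before one either exhausts the budget or lands in the all-forbidden branch to which Lemma~\ref{sinkv*} applies. This yields $3^{dfv}\times(2^{2dfv^2}\cdot 3^{\kappa dfv^4})^{k}\times n^{O(1)}=2^{O(\kappa dfv^5)}\times n^{O(1)}$, and it is precisely this $k$-fold iteration that accounts for the extra factor of $dfv$ in the exponent. Your claimed $2^{O(\kappa dfv^4)}$ would be stronger, but your argument as written does not establish it.
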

\begin{proof}
Let us now compute the running time of the overall algorithm. First notice that applying Branchings 1 and 2 results in $3^{dfv}\times~2^{2dfv^2}$ branches. Branching 3 can be done in time $3^{\kappa.dfv^4}$, but may re-create several SCC's, forcing us to apply again Branching 2 and reduction rules again, but decreasing $k$.
This implies that the total running time is $3^{dfv}\times~(2^{2dfv^2}3^{\kappa.dfv^4})^k \times n^{O(1)}$, thus the result holds.\end{proof}

\subsection{Combining DFVS-number and length of a longest directed path}
In this subsection we investigate the length of a longest path and $dfv(G)$ as aggregate parameters.


\begin{lemma}
$(dfv,p)$-Disjoint-KFVD on super nice instances can be solved in 
$2^{O(dfv^3)}p^{O(dfv)}\times n^{O(1)}$.
\end{lemma}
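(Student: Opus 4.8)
Set $d:=dfv$ and $F=\{v_1,\dots,v_d\}$. The plan is to follow the template of the K-width case — branch until the optimum can be computed by a polynomial-time routine — but to replace the branching on the boundary set $B$ (whose size we can no longer control by $\kappa$) by a $2^{O(d^3)}$-branching describing how the feedback vertices interact, together with a $p^{O(d)}$-branching that exploits the bounded longest directed path of the DAG $G[V\setminus F]$.

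I would start from two structural facts. By Observation~\ref{viability} we may assume $k<d$, so every solution $S$ has $|S|<d$; and since in a super nice instance $S\subseteq V\setminus F$ and $G[(V\setminus F)\setminus S]$ is still a sub-DAG of $G[V\setminus F]$, every nontrivial strongly connected component of $G\setminus S$ contains a vertex of $F$. Hence $G\setminus S$ has at most $d$ nontrivial SCCs and, its condensation being a DAG, $G\setminus S$ is knot-free if and only if each of these SCCs has an outgoing arc (equivalently, is not a sink of the condensation), after which every vertex reaches a sink by following the condensation. It therefore suffices, for each partition of $F$ into the $F$-parts of the surviving nontrivial SCCs together with a compatible topological order of the condensation, to decide whether at most $k$ deletions can realize that partition while giving every surviving SCC an escaping arc that leads to a sink; there are at most $d^{O(d)}\le 2^{O(d^2)}$ such skeletons.

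Next I would refine each skeleton by branching, for every ordered pair $(v_i,v_j)\in F\times F$, on the \emph{$F$-profile} of the surviving part of the connectivity set $H_{i,j}$: whether an internally $F$-free $v_iv_j$-path survives in $G\setminus S$, which subset of $F$ a witnessing path is allowed to pass through, and whether that path is used to route the SCC of $v_i$ to a sink. Each profile is a constant-size choice plus a subset of $F$, i.e.\ $2^{O(d)}$ options per pair, hence $2^{O(d^3)}$ overall; this is the $p$-analogue of Branching~\ref{feedSet}, after which (as in Observation~\ref{obsConSet}) every arc crossing from one connectivity set into another may be assumed to have a forbidden head. Since $G[V\setminus F]$ is a DAG whose longest directed path has at most $p$ vertices, I would then layer it by $\ell(v):=$ (the maximum number of vertices on a directed path of $G[V\setminus F]$ ending at $v$), so $\ell(v)\le p$ and every arc inside $V\setminus F$ strictly increases $\ell$, and I would guess, for each $v_i\in F$, a level $\ell_i\le p$ at which the chosen path from $v_i$ to a sink first enters $V\setminus F$; this is the source of the $p^{O(d)}$ factor. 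Given all these guesses, the remaining question — can one delete at most $k$ vertices of $(V\setminus F)\setminus X$ so as to realise the prescribed SCCs, the prescribed $F$-profiles and the prescribed escape routes — becomes a polynomial-time reachability/cut computation on $G[V\setminus F]$ (in the spirit of the polynomial final step of the K-width case: cut away every forbidden connection, keep one witnessing path per surviving connection, and check that the $\le d$ escape arcs and sinks can be installed within budget). We accept iff some branch admits such a completion; summing over all $2^{O(d^3)}p^{O(d)}$ branches gives the claimed running time.

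\textbf{The hard part} is proving that the guesses really capture everything, i.e.\ that the polynomial completion check is correct. As in the K-width case, the crux is a structural lemma playing the role of Lemma~\ref{sinkv*}: once all arcs crossing between connectivity sets have forbidden heads and the level stamps are fixed, any would-be knot $K$ of $G\setminus S$ must admit an arc leaving the connectivity set of its $F$-vertex into another connectivity set, and the head of that arc would have to be forbidden — contradicting that it was scheduled for deletion; hence no such $K$ exists. Establishing this $p$-flavoured analogue of Lemma~\ref{sinkv*}, and the attendant bookkeeping (that deleting the out-neighbourhood of a chosen vertex to open one SCC never silently closes another, that SCCs re-created by deletions respect the layering, and that the budget $k<d$ is not exceeded), is where the real work lies; once it is in place, correctness follows the same pattern as the K-width argument.
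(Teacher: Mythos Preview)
Your proposal has a genuine gap and also diverges substantially from the paper's argument.

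\textbf{The gap.} The crucial step you leave open is the claim that, after your $2^{O(d^3)}p^{O(d)}$ guesses, the residual problem is solvable in polynomial time. In the K-width case this works because Branching~\ref{feedSet} fixes the boundary set $B$ \emph{vertex by vertex}, and Lemma~\ref{sinkv*} then shows that creating \emph{any} single sink already makes the whole strongly connected graph knot-free. Your guesses are of a very different nature: an ``$F$-profile'' records only \emph{whether} an internally $F$-free $v_iv_j$-path survives and some combinatorial data about it, and a ``level stamp'' $\ell_i\le p$ only names a layer, not a vertex. Neither pins down the actual crossing vertices between connectivity sets, so you cannot conclude (as in Observation~\ref{obsConSet}) that every cross arc has a forbidden head. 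Consequently the ``$p$-flavoured analogue of Lemma~\ref{sinkv*}'' you invoke is not available, and the polynomial completion check you describe (``cut away every forbidden connection, keep one witnessing path per surviving connection'') is not a well-posed polynomial problem: deciding whether at most $k$ deletions realise a prescribed pattern of surviving/severed $v_iv_j$-reachabilities is, a priori, a multiway-cut-type question. You acknowledge this is ``where the real work lies'', but no mechanism is given.

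\textbf{What the paper does instead.} The paper's proof is far more direct and does not try to mimic the K-width branching. It argues by induction on $|F|$. Since $G$ is strongly connected and $|F|\ge 2$, there are directed paths $P_1$ from $v_1$ to $v_2$ and $P_2$ from $v_2$ to $v_1$, each of length at most $p$; set $C=G[V(P_1)\cup V(P_2)]$, a strongly connected subgraph on at most $2p$ vertices containing $v_1,v_2$. One then branches on $C$: either some vertex of $C$ is deleted (at most $2p$ choices, and $k$ drops by one), or no vertex of $C$ is deleted. In the latter branch all of $C$ becomes forbidden; since $C$ is strongly connected and contains two feedback vertices, one may contract $C$ to a single vertex, obtaining an equivalent instance whose feedback vertex set has size $<|F|$. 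After re-applying Branchings~\ref{branch:fvs} and~\ref{branch:scc} to restore super-niceness, the induction applies. Each step decreases $k$ or $|F|$, both bounded by $dfv$, so the branching tree has depth at most $2\,dfv$, giving the stated $(3^{dfv}2^{2dfv^2}(2p+1))^{2dfv}=2^{O(dfv^3)}p^{O(dfv)}$ bound. The point is that $p$ enters only through the trivial fact that a shortest path between two fixed vertices has at most $p$ vertices; no analogue of the connectivity-set machinery is needed.
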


\begin{proof}
Let $(G,F,X,k)$ be a super nice instance. 
Recall that the directed feedback vertex set $F$ is a set of forbidden vertices ($F \subseteq X$) and $G$ is strongly connected. 
The proof is by induction on $|F|$.
If $|F| = 1$, then, for any vertex $v$ of $V(G) \setminus F$ that can be turned into a sink, $N^+(v)$ will be a solution set for $G$. Therefore, the optimum solution can be found in polynomial time.
Assume now that $|F|\ge 2$ and denote $F$ by $\{v_1,\dots ,v_{dfv}\}$.
As $G$ is strongly connected, there exists a path $P_1$ of length at most $p$ from $v_1$ to $v_2$ and a path $P_2$ of length at most $p$ from $v_2$ to $v_1$. Denote by $C$ the digraph $G[V(P_1)\cup V(P_2)]$; it is strongly connected, contains $v_1$ and $v_2$ and at most $2p$ vertices. Since the number of vertices in $C$ is bounded, we may branch $2p+1$ times by trying to guess a vertex that will be deleted in $C$.
Each time a vertex of $C$ will be guessed as deleted, the parameter $k$ will decrease by one. So, $k$ will decrease in all branches, except in the one where we guess that no vertex is deleted, and then where all the vertices of $C$ are forbidden.
In this case, $C$ is a strongly connected component whose vertices are all forbidden and containing at least two vertices of $F$. So, we contract $C$ to obtain a new instance $G'$.
Formally, we remove $V(C)$ from $G$, add a new vertex $v_C$, and for all vertices 
of $G\setminus C$ having at least one in-neighbor (resp. out-neighbor) in $C$, we add an arc from $v_C$ (resp. to $v_C$) to this vertex. Let $F'$ be the set $\{v_C\}\cup F\setminus V(C)$ and notice that $F'$ is a directed feedback vertex set of $G'$ and that $|F'|<|F|$.
Similarly, let $X'$ be the set $(X \setminus V(C)) \cup \{v_C\}$. We claim that both instances $(G,F,k,X)$ and $(G',F',k, X')$ are equivalent. Indeed, it suffices to notice that as $V(C)$ contains only forbidden vertices in $G$ and that $v_C$ is forbidden in $G'$, then any solution to the KFVD problem for $G$ is a solution of $G'$, and conversely.
Then, we apply Branchings 1 and 2 to obtain a super nice instance equivalent to  $(G',F',k, X')$, and we can apply the induction hypothesis.\\
So at each branching, either the parameter $k$ decreases by at least one or the size of $F$ decreases by at least one. As both values are bounded above by $dfv$, we branch consecutively at most $2dfv$ times. And since Branching rules 1 and 2 create at most 
$3^{dfv}\times 2^{2dfv^2}$ branches, and branching on cycle $C$ creates $2p+1$ branches, the total number of branches is $(3^{dfv}\times~2^{2dfv^2} \times (2p+1))^{2dfv}= 2^{O(dfv^3)}p^{O(dfv)}$, and we get the desired running time.  
%
%
\end{proof}

Given that we can obtain a super nice instance in $2^{O(dfv^3)}\times n^{O(1)}$, it holds that {\sc Knot-free Vertex Deletion} can be solved in time $2^{O(dfv^3)}p^{O(dfv)}\times n^{O(1)}$.


\if 10 

\section{Inaproximability}

\begin{theorem}
3SAT $\propto$ KFVD. Let $F$ be a instance of 3SAT and $G$ a graph obtained by $F$, if $F$ is satisfiable, $opt(G) \leq n$, otherwise  $opt(G) \geq n+p(n)-1$.
\end{theorem}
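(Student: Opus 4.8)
The plan is to amplify the basic $3$-\textsc{Sat}-to-\textsc{KFVD} reduction of~\cite[Theorem~4]{cocoon2018} into a \emph{gap}-reduction by replicating its verification part, so that a ``yes'' instance stays cheap while a ``no'' instance pays one extra deletion per copy. First, given a $3$-CNF formula $F$ on variables $x_1,\dots ,x_n$ and clauses $c_1,\dots ,c_m$, together with a polynomial $p$ controlling the gap, I would build $G$ as follows. For each variable $x_i$ create a vertex-disjoint \emph{choice gadget} $Y_i$, a colour-class-type knot built exactly as in Theorem~\ref{reduction}, so that $Y_i$ is a knot of $G$ and its only single-vertex deletions making it knot-free are two designated vertices $z_i^{\top}$ and $z_i^{\bot}$ (removing $z_i^{\top}$ will encode $x_i=\mathtt{true}$, removing $z_i^{\bot}$ will encode $x_i=\mathtt{false}$). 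Then, for every copy index $t\in\{1,\dots ,p(n)\}$ and every clause $c_j=(\ell_{j,1}\vee \ell_{j,2}\vee \ell_{j,3})$, create a \emph{clause gadget} $X_j^{t}$: a short directed cycle on three fresh vertices, with an arc from its $s$-th vertex to the choice-gadget vertex that is deleted exactly when $\ell_{j,s}$ becomes false (to $z_a^{\bot}$ if $\ell_{j,s}=x_a$, to $z_a^{\top}$ if $\ell_{j,s}=\bar x_a$). Since every inter-gadget arc points \emph{into} a choice gadget, each $Y_i$ remains a knot of $G$ while no clause gadget is a knot of $G$; the construction is clearly polynomial.

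For the completeness direction, assuming $F$ has a satisfying assignment, I would delete the set $S$ consisting of $z_i^{\top}$ when $x_i=\mathtt{true}$ and $z_i^{\bot}$ otherwise, so $|S|=n$. Exactly as in Theorem~\ref{reduction}, each $Y_i$ becomes knot-free; and in every $X_j^{t}$ the out-arc of the vertex corresponding to a literal satisfied by the assignment points to a \emph{non-deleted} vertex of a resolved choice gadget, so all of $X_j^{t}$ keeps a path to a sink. Hence $G[V\setminus S]$ is knot-free and $opt(G)\le n$.

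For the soundness direction, suppose $F$ is unsatisfiable and let $S$ be any knot-free deletion set. The $n$ gadgets $Y_i$ are vertex-disjoint knots, so $|S\cap Y_i|\ge 1$ for each $i$; moreover, by the design of $Y_i$ the only single deletion resolving it is $z_i^{\top}$ or $z_i^{\bot}$, and (using Observation~\ref{lem:re-add}) deleting two vertices inside one $Y_i$ never pays off, since it only falsifies extra literals without releasing anything more cheaply. So one may assume $S$ hits each $Y_i$ in exactly one of $z_i^{\top},z_i^{\bot}$, i.e.\ $S$ encodes a truth assignment $\alpha$. As $\alpha$ falsifies $F$, in every copy $t$ some clause $c_{j(t)}$ is falsified, so all three out-arcs of $X_{j(t)}^{t}$ point to deleted vertices and $X_{j(t)}^{t}$ becomes a knot of $G[V\setminus S]$ unless $S$ deletes one of its vertices. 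As the gadgets $X_{j(1)}^{1},\dots ,X_{j(p(n))}^{p(n)}$ lie in pairwise-disjoint copies, $S$ must contain a vertex of all but at most one of them, yielding $|S|\ge n+p(n)-1$.

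The step I expect to be the main obstacle is precisely this soundness analysis: one must rule out economising ``non-assignment'' solutions that mix a few clause-gadget deletions with fewer-than-$n$ or more-than-$n$ choice-gadget deletions, which needs a careful case analysis of which single deletions actually release a choice gadget, together with a charging argument that attributes one deletion to each of the $p(n)$ copies except possibly one — this last slack is exactly what produces the ``$-1$''. Granting this, the resulting multiplicative separation $\frac{n+p(n)-1}{n}$, combined with the polynomial-time reduction, gives the claimed inapproximability of \textsc{KFVD}.
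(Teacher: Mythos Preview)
Your replication idea is a natural route to a gap reduction, but the paper takes a different and simpler one: it keeps a plain digon $Tx_i\leftrightarrow Fx_i$ per variable and, for each clause, builds \emph{one} bidirected clique on $p(n)$ vertices (three of them carrying the literal out-arcs). A falsified clause then leaves a $p(n)$-clique with no exit, a knot that needs $p(n)-1$ deletions to untie; this is exactly where the ``$-1$'' in the statement comes from, not from any slack in a counting argument.

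More importantly, your construction has a genuine gap in the soundness direction, precisely at the step you flagged as the main obstacle. The colour-class gadget of Theorem~\ref{reduction}, instantiated with two ``colours'', has five vertices $u,\,w^\top,\,z^\top,\,w^\bot,\,z^\bot$, and deleting the three vertices $u,w^\top,w^\bot$ turns \emph{both} $z_i^{\top}$ and $z_i^{\bot}$ into sinks. This ``super-satisfies'' variable $x_i$: every clause triangle that touches it through either literal now has a live out-arc to a sink and is released. Doing this for all $n$ variables costs $3n$ and already makes $G$ knot-free regardless of $F$, so $opt(G)\le 3n$, contradicting the claimed $opt(G)\ge n+p(n)-1$ as soon as $p(n)>2n+1$. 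Hence the normalisation ``$S$ hits each $Y_i$ in exactly one of $z_i^{\top},z_i^{\bot}$'' cannot be salvaged by a case analysis: it is simply false for this gadget. The paper's digon avoids this because there is no way to keep both of its vertices while making the digon knot-free; if you replace your $Y_i$ by that digon, your replication argument does go through (and in fact yields $opt(G)\ge n+p(n)$, so your ``all but at most one'' concession becomes unnecessary).
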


\begin{proof}
Let $F$ be an instance of 3-SAT with $n$ variables and having at most 3 literals per clause. From $F$ we build a graph $G_F = (V, E)$ which will contain a set $S\subseteq V(G)$ such that $|S|=n$ and $G_F[V\!\setminus\!S]$ is knot-free if and only if $F$ is satisfiable. The construction of $G_F$ is described below:

\begin{enumerate}
 \item For each variable $x_i$ in $F$, create a directed cycle with two vertices (``variable cycle''), $Tx_i$ and $Fx_i$, in $G_F$.

 \item For each clause $C_j$ in $F$ create a directed clique with $p(n) \geq 3$ vertices (``clause cycle''), where  each literal of $C_j$ has a corresponding vertex in the cycle.

\item For each vertex $v$ that corresponds to a literal of a clause $C_j$, create an arc from $v$ to $Tx_i$ if $v$ represents the positive literal $x_i$, and create an arc from $v$ to $Fx_i$ if $v$ represents the negative literal $\bar{x_i}$.	
\end{enumerate}


Suppose that $F$ admits a truth assignment $A$. We can determine a set of vertices $S$ with cardinality $n$ such that $G_F[V\!\setminus\!S]$ is knot-free as follows. For each variable of $F$, select a vertex of $G_F$ according to the assignment $A$ such that the selected vertex represents the opposite value in $A$, i.e., if the variable $x_i$ is true in $A$, $Fx_i$ is included in $S$, otherwise $Tx_i$ is included in $S$. Since each knot corresponds to a variable cycle, it is easy to see that $G_F[V\!\setminus\!S]$ has exactly $n$ sinks. Therefore, since $A$ satisfies $F$, at least one vertex corresponding to a literal in each clause cycle will have an arc towards a sink (vertex that matches the assignment). Thus $G_F[V\!\setminus\!S]$ will be knot-free.

Conversely, suppose that $G_F$ contains a set $S$ with cardinality $n$ such that $G_F[V\!\setminus\!S]$ is knot-free. By construction $G_F$ contains $n$ knots, each one associated with a variable of $F$. Hence, $S$ has exactly one vertex per knot (one of $Tx_i$, $Fx_i$). As each cycle of $G_F[V\!\setminus\!S]$ corresponds to a clause of $F$, and $G_F[V\!\setminus\!S]$ is knot-free, each cycle of $G_F[V\!\setminus\!S]$ has at least one out-arc pointing to a sink. Thus, we can define a truth assignment $A$ for $F$ by setting $x_i = true$ if and only if $Tx_i \in V\setminus S$. Since at least one vertex corresponding to a literal in each clause cycle will have an arc towards a sink, we conclude that $F$ is satisfiable.

Finally, if $F$ is not satisfiable, by the arguments above any deletions in the variable gadget will at the and create at least one new knot. The new knot is a directed clique of size $p(n)$, therefore, the optimum way to untie the new knot is by deleting $p(n)-1$ vertices.
\end{proof}

\fi



\if 10 

\section{No kernel / distance to CC of size 2}
We say that a digraph $G=(V,E)$ has connected components of size at most $2$ (and we denote by $cc \le 2$) iff the underlying undirected graph of $G$ (that have an edge $\{u,v\}$ iff $uv$ or $vu$ is an arc of $E$) has
connected components of size at most $2$.
We say that a set $X \subseteq V$ is a modulator to connected components of size at most $2$ iff $G[V \setminus X]$ has $cc \le 2$.
We denote by $f_{\le 2}(G)$ the minimum size of such a modulator.
\fixme{
\begin{itemize}
\item explain here or in the intro success of modulators in kernelization complexity for VC
\item voir si param c'est min size ou c'est juste size d'un modulator
\item check if $G[..]$ and [t] is defined, and if command for NP
\item add material for cross composition
\item uniformize with other reduction from sat : no need to create a triangle for clauses, and name of variable vertices
\item uniformize with notations : I used D and S like I like
\end{itemize}
}

\begin{theorem}
KFVD parameterized bye the minimum size of a modulator to $cc \le 2$ does not admit a polynomial kernel unless $NP \subseteq coNP / poly$
\end{theorem}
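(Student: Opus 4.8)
The plan is to rule out a polynomial kernel by an \emph{OR-cross-composition} from $3$-\textsc{Sat}; combined with the NP-completeness of \textsc{KFVD}~\cite{carneiro2019deadlock} and the standard cross-composition framework (see~\cite{cygan2015parameterized}), this yields the stated lower bound. First I would fix the polynomial equivalence relation that groups $3$-\textsc{Sat} instances by their number of variables and their number of clauses, so that it suffices to compose a family $\varphi_1,\dots,\varphi_t$ of formulas all sharing the variable set $\{x_1,\dots,x_n\}$ and having exactly $m$ clauses each; after padding with instances known to be unsatisfiable I may also assume $t=2^q$ with $q=\log t$. The target will be a digraph $G$ together with the implicit parameter $f_{\le 2}(G)$, and I will set the deletion budget to $k=n+q$.

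For the construction I would use three kinds of gadgets, each a directed $2$-cycle (hence a candidate knot). First, one \emph{variable gadget} $\{T_\nu,F_\nu\}$ per variable $x_\nu$, shared by all formulas. Second, one \emph{selector gadget} $\{c_\ell^0,c_\ell^1\}$ per bit position $\ell\in[q]$; a solution will be forced to keep exactly one vertex of each selector gadget, and the surviving vertices spell out an index $i^\ast\in[t]$ in binary. Third, for every formula $\varphi_i$ and every clause $C$ of $\varphi_i$ a \emph{clause gadget} $\{a_{i,C},b_{i,C}\}$, from which the only arcs leaving the $2$-cycle emanate at $a_{i,C}$: an arc to $T_\nu$ for each positive literal $x_\nu\in C$, an arc to $F_\nu$ for each negative literal $\bar x_\nu\in C$, and an arc to $c_\ell^{1-\mathrm{bin}(i)_\ell}$ for every $\ell\in[q]$, where $\mathrm{bin}(i)$ is the binary encoding of $i$. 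No further arcs are added. Taking $X$ to be the set of all variable- and selector-gadget vertices, $G[V\setminus X]$ is a disjoint union of the clause-gadget $2$-cycles, so $f_{\le 2}(G)\le|X|=2n+2q$, which is polynomial in $\max_i|\varphi_i|+\log t$, as required; the construction is clearly computable in polynomial time.

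It remains to prove the equivalence: $(G,k)$ is a \emph{yes}-instance iff some $\varphi_i$ is satisfiable. For this I would argue that the $n+q$ variable- and selector-gadget $2$-cycles are pairwise vertex-disjoint knots of $G$, that such a $2$-cycle-knot can only be untied by deleting one of its two vertices, and hence that any solution of size at most $k=n+q$ deletes exactly one vertex from each of them and nothing else; the deleted variable vertices then encode an assignment $\alpha$ and the surviving selector vertices an index $i^\ast$. After these deletions every surviving variable/selector vertex is a sink, and one checks that the only remaining candidate knots are clause gadgets, with $\{a_{i,C},b_{i,C}\}$ being a knot precisely when every out-neighbour of $a_{i,C}$ outside the $2$-cycle has been deleted, i.e.\ when every literal of $C$ is falsified by $\alpha$ \emph{and} $i=i^\ast$ (so that all the vertices $c_\ell^{1-\mathrm{bin}(i)_\ell}$ are deleted). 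Thus $G$ becomes knot-free iff $\alpha$ satisfies $\varphi_{i^\ast}$, giving the desired OR-behaviour; conversely a satisfying assignment of some $\varphi_i$ directly produces a solution by choosing $i^\ast=i$.

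I expect the main obstacle to be exactly the tension driving the design: keeping the modulator of size $\mathrm{poly}(n+\log t)$ forces the variable gadgets to be shared across all formulas, which on its own would collapse the OR into ``satisfiable by one common assignment''; the $O(\log t)$-size selector is what decouples the formulas, and the delicate part is verifying that selecting $i^\ast$ genuinely neutralises every clause of every $\varphi_i$ with $i\ne i^\ast$ (because some selector bit disagrees, leaving a sink reachable from its clause gadgets), that it does not accidentally help $\varphi_{i^\ast}$, that no deletion set of size $n+q$ can spend its budget more cleverly, and that the construction creates no unintended strongly connected subgraph that could form an extra knot.
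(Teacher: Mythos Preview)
Your proposal is correct and is essentially the paper's own proof: an OR-cross-composition from $3$-\textsc{Sat} with shared variable digons, $\log t$ selector digons, and per-instance clause digons whose single out-vertex points to the relevant literal vertices and to the selector bits, budget $k=n+\log t$, and modulator consisting of the $2n+2\log t$ variable and selector vertices. The only cosmetic difference is that you direct the clause-to-selector arcs to the \emph{complement} of each instance's bits (so that the surviving selector vertices spell $i^\ast$), whereas the paper directs them to the instance's own bits and deletes the vertices spelling $i^\ast$; this is the same construction up to relabeling.
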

\begin{proof}
  Let us define a cross composition from $3-SAT$.
  We consider $t$ instances of  $3-SAT$, and using an appropriate equivalence relation we can assume that
  all instances have the same number of variables, denoted by $n$, and the same number of clauses, denoted by $m$.
  Without loss of generality we can also assume that $t$ is a power of two by adding dummy no instances.
  For any $\ell \in [t]$ and $j \in [m]$, we denote by $\tilde{C}^\ell_j$ the $j^{th}$ clause of instance $I^\ell$.
  Our objective is to define a digraph $G=(V,E)$ and an integer $k$ such that $(G,k)$ is a {\em yes}-instance iff there exists $\ell \in [t]$ such that $I^\ell$
  is a {\em yes}-instance, and such that $f_{\le 2}(G) \le poly(n,m,log(t))$.

  We partition $V$ into $T \cup C \cup V$.
  Set $V$ has vertices $\{v_{(i,T)},v_{(i,F)}\}$ and arcs $v_{(i,T)}v_{(i,F)}$, $v_{(i,F)}v_{(i,T)}$, for $i \in [n]$.
  Observe that $G[V]$ is the disjoint union of $n$ digons.
  Set $C$ is itself partitioned into $\bigcup_{\ell \in [t]}C^\ell$.
  Now, for any $\ell \in [t]$, we will define $G$ such that $G[V \cup C^\ell]$ is the graph obtained by applying our previous reduction~\cite{} from $3-SAT$
  to instance $I^{\ell}$. More precisely, each $C^\ell$ is partitioned into $\bigcup_{j \in [m]} C^\ell_j$.
  Set $C^\ell_j$ has vertices $\{c^\ell_{(j,1)},c^\ell_{(j,2)}\}$ and arcs $c^\ell_{(j,1)}c^\ell_{(j,2)}$, $c^\ell_{(j,1)}c^\ell_{(j,2)}$, for $j \in [m]$.
  Observe that $G[C^\ell]$ is the disjoint union of $m$ digons.
  Now we add arcs from $C$ to $V$ as follows. For any $\ell \in [t]$, for each literal of clause $\tilde{C}^\ell_j$, create an arc from $c^\ell_{(j,1)}$ to $v_{(i,T)}$ if the literal is $x_i$,
  of from $c^\ell_{(j,1)}$ to $v_{(i,F)}$ if the literal is $\bar{x_i}$.
  Finally, $T$ has vertices $\{t_{(u,0)},t_{(u,1)}\}$ and arcs $t_{(u,0)}t_{(u,1)}$, $t_{(u,1)}t_{(u,0)}$, for $u \in [log(t)]$.
  Again, $G[T]$ is the disjoint union of $log(t)$ digons.
  For any $\ell \in [t]$, we denote by $(b^\ell_1,\dots,b^\ell_{log(t)})$ the binary representation of $\ell$.
  Let $\ell \in [t]$. For any $j \in [m]$ we add arcs from $c^\ell_{(j,1)}$ to $\{t_{(u,b^\ell_u)}, u \in [log(t)]\}$. Informally, each $c^\ell_{(j,1)}$ have $log(t)$ arcs pointing to the binary representation of $\ell$.   This concludes the description of $G$. Notice that $X = T \cup V$ is a modulator $cc \le 2$ as $G[V \setminus X]=G[C]$ is a collection of $mt$ digons, implying that $f_{\le 2}(G) \le |X| = 2n+2log(t)$. We define $k=log(t)+n$, and it only remains to prove that $(G,k)$ is a {\em yes}-instance iff there exists $\ell \in [t]$ such that $I^\ell$ is a {\em yes}-instance.

  \textbf{$\Leftarrow$} Suppose $I^{l_0}$ is satisfied by an assignment $A$ and let us define a set $D$ of vertices to delete.
  We start by adding vertices $\{b^{\ell_0}_1,\dots,b^{\ell_0}_{log(t)}\}$ to $D$.
  Now, for each variable, if the variable $x_i$ is true in $A$ we add $v_{(i,F)}$ in $D$, and otherwise we add $v_{(i,T)}$ in $D$. Observe that $|D|=k$.
  Let us now verify that any vertices of $V \setminus D$ is connected to a sink in $G[V \setminus D]$.
  Notice first that the $log(t)$ vertices of $T\setminus D$ become sinks,
  and that all vertices of $C^{\ell}$ for $\ell \neq \ell_0$ are connected to at least one of these sinks in $G[V \setminus D]$.
  Observe also that the $n$ vertices of $V \setminus D$ become sinks, and that for any $j \in [m]$, at least one of three out arcs from $c^\ell_{(j,1)}$ to $V$ still exists
  in $G[V \setminus D]$ (as $A$ satisfies each clause), implying that vertices of $C^\ell_j$ have a path to a sink in $G[V \setminus D]$ for any $j \in [m]$.

  \textbf{$\Rightarrow$} Let $D$ be a solution of $G$ with $|D| \le k$. Observe first that any solution must pick at least one vertex in each
  set $\{t_{(u,0)},t_{(u,1)}\}$ for $u \in [log(t)]$ and at least one vertex in each set $\{v_{(i,T)},v_{(i,F)}\}$ for $i \in [n]$ (as each of these set is a digon without
  exit arc). As $k=log(t)+n$, we even get that any solution must pick exactly one vertex in each of these sets.
  For any $u \in [log(t)]$, let $b_u$ be the vertex of $D \cap \{t_{(u,0)},t_{(u,1)}\}$ and for any $i \in [n]$ let $a_i$ be the vertex of $D \cap \{v_{(i,T)},v_{(i,F)}\}$.
  Let $A$ be the assignement such that $x_i$ is true if $a_i = v_{(i,F)}$, and false otherwise. Let $\ell_0$ such that $(b^{\ell_0}_1,\dots,b^{\ell_0}_{log(t)})=(b_1,\dots,b_{log(t)})$.
  Let us prove that $A$ satisifies $I^{\ell_0}$. Let $j \in [m]$. Notice that the $log(t)$ endpoints of the out arcs from $c^{\ell_0}_j$ to $T$ belong to $D$.
  This implies that the only possible path from $c^{\ell_0}_j$ to a sink in $G[V \setminus D]$ is just an arc from $c^{\ell_0}_j$ to a vertex $v \in V \setminus D$.
  This implies that the litteral associated to this variable satisfies $C^{\ell_0}_j$ in $A$, and thus that any clause of $I^{\ell_0}$ is satisfied.
\end{proof}

\fi

\section{Distance to DAG having a Hamiltonian path}\label{sec:hamiltonian}
In this section we present a polynomial Turing kernel when we are given a special directed feedback vertex set whose the removal returns an acyclic graph having a directed Hamiltonian path.

\if 10

\begin{defi}
  Let $R$ be a dag.
  Given any antichain $K \subseteq R$ (no path in $R$ between two vertices of $K$) and $v \in R \setminus K$ such that $v$ is accessible from any $s \in K$,
   we define the following oriented graph $G(v,K) = (K,A)$, where $s_1s_2 \in A$ iff $v$ is no longer accessible from $s_1$ in
  $R \setminus N^+(s_2)$ (either because $v \in N^+(s_2)$ or because $N^+(s_2)$ kills all the
  paths from $s_1$ to $v$). 
\end{defi}

\begin{lema}
   Let $R \subseteq V$ such that $G[R]$ is a dag.
   Let us consider an opt solution such that $K \subseteq R$ (implying that $K$ is an antichain according to previous lemma).
   For any $v \in R \setminus K$, let $K_v \subseteq K$ be the sinks that have a path (in $R$) to $v$. For any $v \in R \setminus K$, $G(v,K_v)$ has a universal (in) vertex.
\end{lema}
\begin{proof}
  Suppose by contradiction that this is not the case. Let $v \in R \setminus K$ such that $G(v,K_v)$ has no universal in vertex.
  Notice that this implies that $v \notin S$.
  Let $s$ be a source accessible from $v$ in the fixed optimal. As $K \subseteq R$, we know that $s \in R$.
  If $s \notin K_v$, then take any $s \in K_v$, let $P$ be the vertices of a $sv$-path (including $s$ and $v$), and let $X = P \cap S$.
  Removing $X$ from $S$ is safe as $s$ will remain a source. Indeed, $N^+(s) \cap X = \emptyset$, as otherwise $s \in K_v$.
  Othserwise ($s \in K_v$), as $G_v$ has no universal in vertex, let $s'$ such that $s's \notin A$.
  This immplies that there $v$ is still accessible from $s_1$ in $R \setminus N^+(s_2)$. Let $P$ be the vertices of such a path,
  and $X = P \cap S$. Again, removing $X$ from $S$ is safe as $s$ will remain a source (as $X \cap N^+(s) = \emptyset$).
  
\end{proof}

\begin{cor}
  Given a dag $R$ with a sink $v$, let $\alpha(R)$ be the largest size of an antichain $K$ such that $G(s,K)$ has a universal in vertex.
  If $\alpha(R)$ is constant, then there is a the pb is FPT.
(so for example works in any grid oriented to the right and down)
\end{cor}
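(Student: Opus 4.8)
The plan is to use the structural description of an optimal solution supplied by the preceding lemma, together with the FPT preprocessing of Section~\ref{sec:feedback}, to turn \textsc{KFVD} into a ``bounded-demand'' covering problem and then solve that problem by dynamic programming along a topological order of $R$.

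First I would apply the preprocessing of Section~\ref{sec:feedback}, in particular Branching~\ref{branch:fvs} on $F$ and Reduction Rules~\ref{sccsize1} and~\ref{redSCCG}; this costs only $f(dfv)\cdot n^{O(1)}$ and lets us assume that every vertex of the directed feedback vertex set $F$ is forbidden, so that any solution $S$ lies in $R=V(G)\setminus F$, and (as in the preceding lemma, we may assume the sink set $Z$ also lies in $R$). By Corollary~\ref{lem:NZ} an optimal $S$ equals $N^+(Z)$, hence a solution is completely determined by an antichain $Z\subseteq R$, and the task becomes: find an antichain $Z$ of $R$ minimising $|N^+(Z)|$ such that every vertex of $G[V\setminus N^+(Z)]$ has a path to $Z$.

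Next I would invoke the preceding lemma: in a minimal such solution, for every $v\in R\setminus Z$ the graph $G(v,Z_v)$ (with $Z_v$ the part of $Z$ attached to $v$, as in that lemma) has a universal in-vertex $z_v^{*}$, so $|Z_v|\le\alpha(R)=O(1)$, and deleting $N^+(z_v^{*})$ alone already disconnects $v$ from all of $Z_v$. Thus every vertex of $R$ has at most $\alpha(R)$ relevant sink-candidates, one of which controls all its escape routes. I would then process $R$ in a topological order --- for a DAG with a Hamiltonian path $u_1\to\cdots\to u_m$, the order of that path --- by a dynamic program run from $u_m$ back to $u_1$, whose state at the cut between $u_i$ and $u_{i+1}$ records the status (served or still pending) of each of the at most $dfv$ vertices of $F$, together with the set of still-active sink commitments lying to the right of the cut and, for each of them, the unique out-neighbour that the universal-in-vertex property prescribes to keep. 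Because all arcs of $R$ point forward, a vertex automatically reaches the nearest surviving sink to its right, so only a bounded window of the order is ever simultaneously relevant; combined with $|Z_v|\le\alpha(R)$ this bounds the number of active commitments, hence the table size, by a function of $\alpha(R)$ and $dfv$ times a polynomial, with polynomial-time transitions. Returning the best feasible $Z$ and comparing $|N^+(Z)|$ with $k$ (only $k<dfv$ needs testing, by Observation~\ref{viability}) then answers the instance in FPT time.

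The hard part is the last step: proving rigorously that the dynamic-programming frontier stays of bounded size, i.e. that an optimal $Z$ has bounded ``local density'' along the order of $R$. Knowing that each single vertex needs at most $\alpha(R)$ sinks does not by itself bound the union of the demands of all vertices on one side of a cut; closing that gap is exactly where one must exploit both the forwardness of the arcs (so that ``reaching a sink'' is monotone along the order) and the universal-in-vertex structure of the preceding lemma (one deleted out-neighbourhood accounting for all of a vertex's alternatives), and it is where the Hamiltonicity of $R$ is genuinely used. Once this width bound is established, the remaining steps --- testing feasibility of a candidate $Z$, the bookkeeping over $F$, and assembling the optimum --- are routine.
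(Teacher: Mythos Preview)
Your approach diverges from the paper's sketch, and the divergence is precisely where your acknowledged gap sits.

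The paper does not attempt a sweep-line dynamic program along a topological order. Instead it uses the preceding lemma \emph{globally}, at the distinguished sink $v$ of the DAG $R$: since every vertex of $R$ reaches $v$ we have $K_v=K$, so the lemma says $G(v,K)$ has a universal in-vertex, whence $|K|\le\alpha(R)$ by the very definition of $\alpha$. Thus the \emph{entire} sink set of an optimal solution inside $R$ has constant size, and one may simply enumerate the $n^{O(\alpha)}$ candidate antichains for $K$. After first guessing and propagating the modulator $F$, the DAG $R$ may split into pieces $R_i$; the paper's sketch records the remaining obligation --- show that $\alpha(R_i)$ stays bounded --- and then runs a DP across the pieces whose branching step is ``guess the subset $K\cap R_i$'' (an $n^{O(\alpha)}$-fold branch per piece), with the DP state recording which modulator vertices are already served.

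Your route --- bounding the frontier of a DP along a topological order --- is harder and, as you yourself concede, unfinished: the per-vertex bound $|Z_v|\le\alpha(R)$ does not control how many sinks are simultaneously live across a cut. Your proposed fix leans on $R$ having a Hamiltonian path, but that is \emph{not} a hypothesis of the corollary; the parenthetical example --- a grid oriented right and down --- has no directed Hamiltonian path once both dimensions exceed~$1$. So the sentence ``it is where the Hamiltonicity of $R$ is genuinely used'' appeals to an assumption you do not have. The paper's enumeration argument sidesteps the frontier question entirely by bounding $|K|$ itself rather than only its local traces $|K_v|$.
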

proof
first guess what happen in the modulator and propagate.
This may break $R$ into several CC $R_i$ which are vertex induced subgraphs (?) of $R$.
Prove that $\alpha(R_i)$ is also bounded by constant.
Then do a DP which branch and take any subset to guess $K \cap R_i$, and for each guess encodes whan happen in the modulator...

\fi

\begin{lemma}[Single sink along a path]\label{lem:singlesink}
  Let $G$ be a directed graph, and let $R \subseteq V(G)$ such that $G[R]$ is a DAG.
  Let $P$ be any path in $R$.
  Then in a minimum knot-free vertex deletion set $S$ of $G$ with set of sinks $Z$ in $G_{\bar{S}}$, we have $|Z \cap V(P)| \le 1$. 
\end{lemma}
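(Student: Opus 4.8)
The plan is to argue by contradiction: suppose $S$ is a minimum knot-free vertex deletion set, $Z$ is the set of sinks in $G_{\bar S}$, and $P$ is a path in $R$ with two distinct vertices $z_1, z_2 \in Z \cap V(P)$. Without loss of generality assume $z_1$ appears before $z_2$ along $P$, so there is a $z_1 z_2$-subpath of $P$ inside $R$. Since $z_1$ is a sink of $G_{\bar S}$, its out-neighbour on that subpath must have been deleted, i.e.\ lies in $S$. The idea is to show that we did not need to delete all of $S$: we can ``re-add'' some deleted vertex, contradicting minimality, exactly in the spirit of Observation~\ref{lem:re-add} and Lemma~\ref{lem:addback}.

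**Key steps.**

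First I would isolate the relevant structure. Let $Q$ be the subpath of $P$ from $z_1$ to $z_2$; write $Q = (z_1 = q_0, q_1, \dots, q_m = z_2)$ with all $q_i \in R$. Since $z_1 \in Z$ is a sink in $G_{\bar S}$ and $q_0 q_1$ is an arc of $G$, we have $q_1 \in S$. Now the goal is to apply Observation~\ref{lem:re-add} to $s = q_1$: we want to show that after putting $q_1$ back, $q_1$ still has a path to a sink of the new graph. The natural candidate sink is $z_2$, reached via the subpath $(q_1, q_2, \dots, q_m = z_2)$ of $Q$. Two things must be checked: (a) this subpath survives in $G_{\bar{S'}}$ where $S' = S \setminus \{q_1\}$ — that is, none of $q_2, \dots, q_m$ lies in $S$; and (b) $z_2$ is still a sink in $G_{\bar{S'}}$, i.e.\ re-adding $q_1$ does not give $z_2$ an out-neighbour, which holds automatically since $z_2 \in V(P) \subseteq R$ is a sink of $G_{\bar S}$ and adding back $q_1$ cannot create an out-arc \emph{from} $z_2$. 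For (a), observe that $G[R]$ is acyclic, so each $q_i$ for $i \ge 1$ is reachable from $z_1$ along $Q$; but more to the point, I would argue that if some $q_i \in S$ with $i \ge 2$ minimal, then I instead apply Observation~\ref{lem:re-add} to that $q_i$ with target $z_2$, and iterate — equivalently, pick the \emph{last} vertex of $Q$ that lies in $S$, call it $q_j$ (it exists, as $q_1 \in S$, and $j < m$ since $z_2 \notin S$), so that $(q_j, q_{j+1}, \dots, q_m = z_2)$ is a path avoiding $S$. Then $q_j$ has a path to the sink $z_2$ in $G_{\bar{S'}}$ for $S' = S \setminus \{q_j\}$, and $z_2$ remains a sink there, so by Observation~\ref{lem:re-add} $S'$ is a solution — contradicting the minimality of $S$.

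**Main obstacle.**

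The one subtlety to get right is ensuring $z_2$ genuinely remains a sink in $G_{\bar{S'}}$: re-adding $q_j$ to the graph reintroduces the vertex $q_j$ together with all its incident arcs, including any arc $z_2 q_j$ if it existed. But $q_j$ precedes $z_2$ on the path $Q \subseteq R$ and $G[R]$ is a DAG, so there can be no arc from $z_2$ back to $q_j$ within $R$; and an arc $z_2 q_j$ with $q_j \in R$ would be an arc inside $R$, hence forbidden by acyclicity. More generally $z_2$ was a sink in $G_{\bar S}$ and $V(G_{\bar S}) \subseteq V(G_{\bar{S'}})$ only adds the single vertex $q_j$, so the only way $z_2$ could fail to be a sink is via an arc $z_2 q_j$, which we have just excluded. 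With that settled, the contradiction with minimality of $S$ is complete, which gives $|Z \cap V(P)| \le 1$.
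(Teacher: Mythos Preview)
Your argument is correct and essentially identical to the paper's: both take the subpath between two sinks on $P$, pick the \emph{last} deleted vertex on that subpath, observe that re-adding it still leaves $z_2$ a sink (no arc $z_2 q_j$ by acyclicity of $G[R]$), and conclude via Observation~\ref{lem:re-add} that a strictly smaller solution exists. The paper additionally chooses the two sinks to be consecutive along $P$, but this is unnecessary, and it cites Lemma~\ref{lem:addback} where Observation~\ref{lem:re-add} is really what is being used; your write-up is cleaner on both points.
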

\begin{proof}
Assume by contradiction that $|Z \cap V(P)| \ge 2$. Let $P=(v_1,\dots,v_p)$, and let $i_1$, $i_2$ be the indices of two consecutive vertices of $Z \cap V(P)$, or more formally such that $i_1 \le i_2$, $\{v_{i_1}, v_{i_2}\} \subseteq Z \cap V(P)$, and
for any $i \in ]i_1,i_2[$, $v_i \in V(P) \setminus Z$. Let $P'=(v_{i_1},\dots,v_{i_2})$ be the $v_{i_1}v_{i_2}$ subpath of $P$.

Let $u = N^+(v_{i_1}) \cap V(P)$. Observe that $u \neq v_{i_2}$ (as otherwise $v_{i_2}$ would be in $S$ and not in $Z$) and that $u \in S$. 
Let $S_{P'} = S \cap V(P')$. Observe that $S_{P'} \neq \emptyset$ as it contains $u$. Let $v$ be the last (in the order of $P'$) vertex of $S_{P'}$. Notice that $v \notin N^+(v_{i_2})$ because $P$ is in the DAG $R$. Thus, we get that $v_{i_2}$ is still a sink in $G_{\bar{S'}}$ (where $S' = S \setminus \{v\}$), and by Lemma~\ref{lem:addback} we conclude that $S'$ is still a solution, which is a contradiction.
\end{proof}

\begin{lemma}[Useless vertices]\label{lem:useless}
Let $B$ be a subset of vertices of $G$. Let $G'$ be the graph obtained by applying the following operation for every $v \in B$ (in any order): remove $v$ and add all arcs between $N^-(v)$ to $N^+(v)$ (removing any loop which could appear).
\begin{enumerate}
    \item if there exists an optimal solution $S$ with set of sinks $Z$ in $G_{\bar{S}}$ such that $B \cap (S \cup Z)=\emptyset$, then $S$ is still a solution of $G$.
    \item for any solution $S$ of $G'$, $S$ is still a solution of $G$.
\end{enumerate}
\end{lemma}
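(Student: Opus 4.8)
The engine of both parts is a single structural fact: \emph{smoothing preserves reachability outside $B$}. Write $G=G_0,G_1,\dots,G_{|B|}=G'$, where $G_i$ is obtained from $G_{i-1}$ by smoothing the $i$-th chosen vertex of $B$, and let $B_i$ be the set of the first $i$ such vertices. I would first prove by induction on $i$ the invariant
\[
  E(G_i)=\bigl\{\,uw \ :\ u,w\in V(G)\setminus B_i,\ u\neq w,\ \text{$G$ has a $u$-$w$ path all of whose internal vertices lie in $B_i$}\,\bigr\},
\]
the base case ($i=0$) being clear for loop-free $G$, and the inductive step following by unfolding $N^{-}_{G_i}(v)$ and $N^{+}_{G_i}(v)$ through the invariant for $G_i$ (a newly created arc $ab$ is exactly a pair joined by a $G$-path through $B_i\cup\{v\}$, and conversely such a path either avoids $v$, so $ab$ was already an arc of $G_i$, or splits at $v$ into two arcs of $G_i$). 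A direct corollary: for every $S\subseteq V(G')$ (so $S\cap B=\emptyset$) and every $u,w\in V(G')$, there is a $u$-$w$ path in $G'_{\bar{S}}$ iff there is one in $G_{\bar{S}}$ — forward, expand each arc of the $G'_{\bar{S}}$-path into a walk through $B$-vertices, all outside $S$, and extract a path; backward, collapse each maximal $B$-internal subpath of the $G_{\bar{S}}$-path into the corresponding arc of $G'$, which the invariant guarantees is present with both endpoints outside $S$.

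\textbf{Item 1} (read as: $S$ is still a solution of $G'$). Let $S$ be an optimal solution of $G$ with sink set $Z$ in $G_{\bar{S}}$, and assume $B\cap(S\cup Z)=\emptyset$. First, each $z\in Z$ stays a sink in $G'_{\bar{S}}$: $z\notin B$, and any arc $zw$ of $G'$ arises from a $G$-path $z\to v_1\to\dots\to v_\ell\to w$ with every $v_m\in B$; if $\ell\ge 1$ then $zv_1\in E(G)$ with $v_1\in B$ would force $v_1\in S$, impossible, so $\ell=0$, $zw\in E(G)$, and $w\in S$ since $z\in Z$. Now any $u\in V(G'_{\bar{S}})=V(G)\setminus(S\cup B)$ reaches some $z\in Z$ in $G_{\bar{S}}$ (knot-freeness of $G_{\bar{S}}$), hence in $G'_{\bar{S}}$ by reachability preservation ($z\notin B$), and $z$ is a sink there. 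So $G'_{\bar{S}}$ is knot-free and $S$, whose size is unchanged, solves $G'$.

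\textbf{Item 2.} Let $S$ be any solution of $G'$, so $S\cap B=\emptyset$; I want $G_{\bar{S}}$ knot-free. Suppose not and let $Q$ be a knot of $G_{\bar{S}}$. Then $G_{\bar{S}}[Q]=G[Q]$ is strongly connected with $|Q|\ge 2$, hence has a cycle, so $Q\not\subseteq B$ — here I use that $G[B]$ is acyclic, which holds in the intended application, where $B$ sits inside the DAG part $R$. Set $Q':=Q\setminus B\neq\emptyset$. Reachability preservation applied to paths kept inside the SCC $Q$ (hence avoiding $S$, with non-$B$ vertices staying in $Q'$) puts $Q'$ inside one SCC of $G'_{\bar{S}}$; and $Q'$ has no arc of $G'_{\bar{S}}$ leaving it, for such an arc unfolds to a $G$-path $a=v_0\to\dots\to v_{\ell+1}=b$ through $B$ with $a\in Q$, $b\notin Q$, and then $v_jv_{j+1}$, for the largest $j$ with $v_j\in Q$, is a $G_{\bar{S}}$-arc leaving $Q$ — contradicting that $Q$ is a knot. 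If $|Q'|\ge 2$ this makes $Q'$ a knot of $G'_{\bar{S}}$, a contradiction. The remaining case $Q'=\{q\}$ is the delicate one: $q$ is then a sink of $G'_{\bar{S}}$, while, being in the knot $Q$, it has an out-neighbour inside $Q\cap B$; running from $q$ through $B$ to a sink $v^{\circ}$ of the (acyclic, by the hypothesis on $B$) digraph induced on the $B$-vertices reachable from $q$ through $B$, one checks that every $G$-out-neighbour of $v^{\circ}$ avoids $B$, that any such out-neighbour $w\neq q$ must lie in $S$ (the walk $q\rightsquigarrow v^{\circ}\to w$ through $B$ witnesses $qw\in E(G')$ out of the sink $q$), and one then concludes that $v^{\circ}$ is a sink of $G_{\bar{S}}$ reachable from $q$, contradicting that the knot $Q$ contains $q$.

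\textbf{Main obstacle.} Item 1 is routine once reachability preservation is in hand. The crux is item 2: smoothing can a priori promote to a sink a vertex merely ``parked behind'' $B$, so one must certify that no knot of $G_{\bar{S}}$ is erased by the projection to $G'$, and this is exactly where one must exploit that $B$ induces a DAG — the singleton case $Q'=\{q\}$, handled by chasing a maximal chain inside $B$, being the most delicate. I expect pinning down precisely the structural property of $B$ that makes that case work (and checking it holds in the application) to be the main work.
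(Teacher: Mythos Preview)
Your route differs from the paper's: the paper argues by induction on $|B|$, handling one smoothed vertex at a time and manipulating paths directly, whereas you first prove a global arc-characterisation invariant for $G'$ and then, for item~2, argue by contradiction via a hypothetical knot $Q$ of $G_{\bar S}$. Your invariant and reachability corollary are correct, and your treatment of item~1 is clean and complete.

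The genuine gap is in item~2, precisely at the singleton case $Q'=\{q\}$ you flag as delicate. You show that every $G$-out-neighbour $w\neq q$ of $v^{\circ}$ lies in $S$ and then conclude that $v^{\circ}$ is a sink of $G_{\bar S}$; but nothing rules out $q\in N^+_G(v^{\circ})$, and then $v^{\circ}$ has the out-neighbour $q\notin S$ and is \emph{not} a sink. Your added hypothesis ``$G[B]$ acyclic'' does not help here, since the offending cycle $q\to\cdots\to v^{\circ}\to q$ passes through $q\notin B$. In fact item~2 as stated is false: take $G$ to be a digon on $\{a,b\}$ and $B=\{a\}$; then $G'$ is the isolated vertex $b$, so $S=\emptyset$ solves $G'$, yet $G$ itself is a knot and $\emptyset$ does not solve it. The paper's own argument slips at the very same point: it asserts that if $z\in N^-(v)$ is a sink of $G'_{\bar{S'}}$ then $N^+(v)\subseteq S'$, but when $z\in N^-(v)\cap N^+(v)$ the loop $zz$ is discarded by the smoothing and one only obtains $N^+(v)\setminus\{z\}\subseteq S'$. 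To rescue item~2 you need a hypothesis that forbids two-cycles through vertices of $B$ (for instance, that no vertex of $B$ lies on a digon, or more strongly that $B$ together with all its neighbours sits inside an acyclic part of $G$); you should identify which such hypothesis is actually available in the application to Lemma~\ref{lemma:kernelpath} and state it explicitly.
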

\begin{proof}
The proof is by induction on $|B|$. Let us start with $B=\{v\}$.
$1)$. Let us prove first that $S$ is still a solution of $G'$. Let $Z'$ be the set of sinks of $G'_{\bar{S}}$. Notice that $v \notin S$ implies that $N^-(v) \cap Z = \emptyset$. Let $u \in V(G')$. As $S$ is a solution of $G$, let $P$ be a $uz$-path in $G_{\bar{S}}$ with $z \in Z$. Notice first that as $N^-(v) \cap Z = \emptyset$, we know that $z \notin N^-(v)$. As $v \notin Z$, we also get that $z\neq v$, implying that $z \in Z'$. If $v \notin V(P)$, then this path still exists in $G'$ and we are done. 
Otherwise, $P$ contains the subpath  
$v_1vv_2$ with $v_1 \in N^-(v)$ and $v_2 \in N^+(v)$. Replacing this subpath by $v_1v_2$ (which is an arc in $G'$), we get a $uz$-path in $G'$ with $z \in Z'$.

$2)$.
Let us now prove that for any solution $S'$ of $G'$ with set of sinks $Z'$ in $G'_{\bar{S'}}$, $S'$ is still a solution in $G$. Let $\tilde{Z}$ be the set of sinks in $G_{\bar{S'}}$. Let us first consider a vertex $u \neq v$ and prove that there exists a $u\tilde{Z}$-path in $G_{\bar{S'}}$. As $S'$ is a solution of $G'$ there exists a $uz$-path in $G'$ with $z \in  Z'$. First, remove any two consecutive vertices $v_1$, $v_2$ in $P$ such that $v_1 \in N^-(v)$ and $v_2 \in N^+(v)$, and replace them by $v_1vv_2$. This gives a $uz$-path $P'$ in $G_{\bar{S'}}$. If $z \notin N^-(v)$, then $z \in \tilde{Z}$
($z$ remains a sink in $G_{\bar{S'}}$). Otherwise, $z \in N^-(v)$ being a sink in $G'_{\bar{S'}}$ implies that $N^+(v) \subseteq S'$, implying in turn that $v \in \tilde{Z}$. In this case, we add $v$ at the end of $P'$ and we get the desired path. Let us now prove that there exists a $v\tilde{Z}$-path in $G_{\bar{S'}}$.
If  $N^+(v) \subseteq S'$ then $v \in \tilde{Z}$ is a sink in $G_{\bar{S'}}$, otherwise there exists in $G_{\bar{S'}}$ an arc $vu$, and according to the previous case we know that there exists a $u\tilde{Z}$-path in $G_{\bar{S'}}$.

Let us now suppose that $|B|>1$.
Let $v \in B$ and $G'$ the graph defined above obtained after removing $v$. According to item $1$, $S$ is still a solution of $G'$, and according to item $2$, it is still an optimal solution of $G'$. This implies that we can apply induction on $B \setminus \{v\}$.
\end{proof}



\begin{lemma}\label{lemma:kernelpath}
Let $G$ be a directed graph and $F,R$ be a partition of $V(G)$. In polynomial time we can construct a graph $G'$ with $k|F|$ vertices such that
  \begin{itemize}
      \item if $(G',k)$ is a {\em yes}-instance then $(G,k)$ is a {\em yes}-instance;
      \item if $(G,k)$ is a {\em yes}-instance and there exists an optimal solution $S$ with set of sinks $Z$ in $G_{\bar{S}}$ such that $R \cap Z = \emptyset$, then $(G',k)$ is a {\em yes}-instance 
  \end{itemize} 
  \end{lemma}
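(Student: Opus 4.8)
The plan is to obtain $G'$ from $G$ by applying the \emph{useless vertices} operation of Lemma~\ref{lem:useless} to a carefully chosen set $B\subseteq V(G)$, so that $V(G')=V(G)\setminus B$ and the two required implications become exactly items~2 and~1 of Lemma~\ref{lem:useless}, respectively. Thus the whole argument reduces to choosing $B$ so that (a) $|V(G)\setminus B|\le k|F|$, and (b) the optimal solution $S$ granted in the hypothesis of the second bullet (whose set of sinks $Z$ satisfies $Z\cap R=\emptyset$, hence $Z\subseteq F$) satisfies $(S\cup Z)\cap B=\emptyset$, which is precisely the precondition of item~1 of Lemma~\ref{lem:useless}. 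Note that no structural assumption on $G[R]$ is needed here; such structure will only be used later, to actually solve the kernel instances.

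The key step is to pin down where $Z$ and $S$ can live. I would set $F':=\{f\in F:d^+(f)\le k\}$ and $K:=F'\cup N^+(F')$, and take $B:=V(G)\setminus K$. In any optimal solution with $Z\subseteq F$ we have $Z\subseteq F'$: indeed, by Observation~\ref{obs:degree} a vertex of out-degree larger than $k$ cannot be turned into a sink when $|S|\le k$. Moreover, by Corollary~\ref{lem:NZ}, $S=N^+(Z)\subseteq N^+(F')$. Hence $S\cup Z\subseteq K$, so $(S\cup Z)\cap B=\emptyset$, which is condition (b). For the size bound, $|N^+(F')|\le\sum_{f\in F'}d^+(f)\le k|F'|\le k|F|$, so $|K|=|V(G')|\le(k+1)|F|=O(k|F|)$; to reach the stated bound $k|F|$ one can first remove the sinks and sources of $G$ via Reduction Rule~\ref{sccsize1} (so that every remaining $f\in F'$ already has an out-neighbour, avoiding double counting), or simply absorb the additive $|F|$, the constant being irrelevant for the Turing-kernel application. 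All of this is computable in polynomial time, so $G'$ is built in polynomial time.

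With $B$ fixed, the first implication is immediate from Lemma~\ref{lem:useless}(2): any solution $S'$ of $(G',k)$ of size at most $k$ is a solution of $G$, so $(G,k)$ is a yes-instance. For the second implication, let $S$ be the optimal solution of $G$ with $Z\cap R=\emptyset$; we showed $(S\cup Z)\cap B=\emptyset$, so Lemma~\ref{lem:useless}(1) gives that $S$ is still a solution of $G'$, and $|S|\le k$ since $(G,k)$ is a yes-instance, hence $(G',k)$ is a yes-instance. I expect the only genuinely delicate point to be locating $Z$ and $S$ (the combination of Observation~\ref{obs:degree} and Corollary~\ref{lem:NZ}) and, secondarily, squeezing the vertex count down to exactly $k|F|$; the transfer of solutions between $G$ and $G'$ is entirely handled by Lemma~\ref{lem:useless}.
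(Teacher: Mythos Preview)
Your proof is correct and follows essentially the same approach as the paper: identify the low-out-degree part of $F$, keep it together with its out-neighbourhood, declare everything else ``useless'' and contract it via Lemma~\ref{lem:useless}; the two bullets then become exactly items~(2) and~(1) of that lemma, and the localisation $Z\subseteq F'$, $S=N^+(Z)\subseteq N^+(F')$ is obtained from Observation~\ref{obs:degree} and Corollary~\ref{lem:NZ} just as you do. The only cosmetic differences are that the paper thresholds on $|N^+(v)\cap R|\le k$ rather than $d^+(v)\le k$ and keeps all of $F$ (setting $B\subseteq R$), whereas you also discard the high-degree part $F\setminus F'$; both choices yield $(k+1)|F|=O(k|F|)$ vertices, and your remark that the exact bound $k|F|$ stated in the lemma is off by an additive $|F|$ applies equally to the paper's own argument.
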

\begin{proof}
Let us partition $F$ into $F_{\le k}=\{v \in F \mbox{ s.t. } |N^+(v) \cap R| \le k\}$ and $F_{> k}=F\setminus F_{\le k}$.
Let $A  = R \cap N^+(F_{\le k})$ and $B = R \setminus A$. Notice that there may be some arcs from $F_{> k}$ to $A$, but not from $F_{\le k}$ to $B$.
Define $G'$ by applying Lemma~\ref{lem:useless} on set $B$. 
We get that $G'$ has $k|F|$ vertices, and that if $(G',k)$ is a {\em yes}-instance then $(G,k)$ is a {\em yes}-instance. 
Suppose now that $(G,k)$ is a {\em yes}-instance and there exists an optimal solution $S$ with set of sinks $Z$ of $G_{\bar{S}}$ such that $R \cap Z = \emptyset$.
According to Observation~\ref{obs:degree}, $F_{> k} \cap Z = \emptyset$. Thus, as 
$R \cap Z = \emptyset$ and $N^+(Z)=S$, we get that $B \cap S = \emptyset$ and 
Lemma~\ref{lem:useless} gives us the desired property.
\end{proof}

\begin{proposition}
 Suppose $V(G)$ is partitioned into $F$ and $R$ where $R$ is a DAG with a Hamiltonian path. Then, there is a polynomial Turing kernel with $\mathcal{O}(k|F|)$ vertices.
\end{proposition}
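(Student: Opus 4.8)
The plan is to turn the dichotomy ``$R$ contains at most one sink of a minimum solution'' into a small family of oracle calls on instances of $\mathcal{O}(k|F|)$ vertices. First, since $F$ is a directed feedback vertex set of $G$ we have $dfv(G)\le |F|$, so by Observation~\ref{viability} we may assume $k<|F|$; consequently any instance with $\mathcal{O}(k|F|)$ vertices has size polynomial in $|F|$. Let $P$ be a Hamiltonian path of the DAG $G[R]$; then $V(P)=R$, and Lemma~\ref{lem:singlesink} guarantees that for every minimum knot-free vertex deletion set $S$ of $G$, with set of sinks $Z$ in $G_{\bar{S}}$, we have $|Z\cap R|=|Z\cap V(P)|\le 1$. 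Hence, if $(G,k)$ is a \emph{yes}-instance, it has a minimum solution of size at most $k$ of one of two types: either $Z\cap R=\emptyset$, or $Z\cap R=\{v\}$ for a single vertex $v\in R$.

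For the first type I would apply Lemma~\ref{lemma:kernelpath} to the partition $(F,R)$, obtaining in polynomial time a graph $G_0$ with $\mathcal{O}(k|F|)$ vertices such that $(G_0,k)$ being a \emph{yes}-instance implies $(G,k)$ is a \emph{yes}-instance, and conversely $(G_0,k)$ is a \emph{yes}-instance whenever $(G,k)$ admits a minimum solution with no sink in $R$. This yields one oracle call, on $(G_0,k)$.

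For the second type I would guess $v\in R$ (at most $n$ choices); if $d^+(v)>k$ the guess is discarded, since by Observation~\ref{obs:degree} no minimum solution of size $\le k$ can turn $v$ into a sink, so assume $|N^+(v)|\le k$. Write $F_{\le k}=\{u\in F : |N^+(u)\cap R|\le k\}$ and $F_{>k}=F\setminus F_{\le k}$. For a minimum solution $S$ with $Z\cap R=\{v\}$ we have $S=N^+(Z)$ by Corollary~\ref{lem:NZ} and $Z\cap F_{>k}=\emptyset$ by Observation~\ref{obs:degree}, so $Z\subseteq F_{\le k}\cup\{v\}$ and therefore $S\cap R\subseteq A_v:=(N^+(F_{\le k})\cap R)\cup(N^+(v)\cap R)$, a set of at most $|F|k+k$ vertices. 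Let $B_v:=R\setminus(A_v\cup\{v\})$ and let $G_v$ be obtained from $G$ by contracting out $B_v$ exactly as in Lemma~\ref{lem:useless}; it has at most $|F|+|A_v|+1=\mathcal{O}(k|F|)$ vertices. Since $B_v\subseteq R$, $v\notin B_v$, and $S\cap R\subseteq A_v$, the minimum solution $S$ of this type satisfies $B_v\cap(S\cup Z)=\emptyset$; hence item~1 of Lemma~\ref{lem:useless} shows $S$ is still a solution (of size $\le k$) of $G_v$, while item~2 shows every size-$\le k$ solution of $G_v$ is a solution of $G$. This gives one oracle call per surviving guess $v$. We answer \emph{yes} iff the oracle answers \emph{yes} on $(G_0,k)$ or on some $(G_v,k)$; correctness is immediate from the dichotomy (the forward direction picks the minimum solution of the appropriate type, making the corresponding call succeed) and from the two properties above (each successful call exhibits a size-$\le k$ solution of $G$). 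There are at most $n+1$ oracle calls, each on an instance with $\mathcal{O}(k|F|)$ vertices, so this is a polynomial Turing kernel.

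The delicate point is not the case split itself but choosing the set kept in the second case: one must retain $N^+(v)\cap R$ in $G_v$ (not merely $N^+(F_{\le k})\cap R$), for otherwise $B_v$ would meet $S$ and the hypothesis of Lemma~\ref{lem:useless} would fail; and one must discard the guesses with $d^+(v)>k$, so that $|N^+(v)\cap R|\le k$ and the kernel stays within $\mathcal{O}(k|F|)$ vertices. Everything else reuses the reduction machinery of Lemmas~\ref{lem:useless} and~\ref{lemma:kernelpath} verbatim, the Hamiltonian path of $R$ being used only through Lemma~\ref{lem:singlesink}.
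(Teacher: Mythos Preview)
Your proof is correct and follows essentially the same approach as the paper: use Lemma~\ref{lem:singlesink} to bound $|Z\cap R|\le 1$, guess the potential sink $v\in R$, and contract away the ``useless'' part of $R$ via Lemma~\ref{lem:useless}. The only difference is packaging: the paper avoids your two-case split and your explicit construction of $A_v,B_v$ by simply moving the guessed vertex into $F$ (setting $F_i=F\cup\{v_i\}$, $R_i=R\setminus\{v_i\}$) and then applying Lemma~\ref{lemma:kernelpath} as a black box to $(F_i,R_i)$; this automatically keeps $N^+(v_i)\cap R$ when $|N^+(v_i)\cap R_i|\le k$ and makes your separate call on $G_0$ unnecessary (any $G'_i$ already succeeds when $Z\cap R=\emptyset$).
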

\begin{proof}
Let $S$ be an optimal solution with set of sinks $Z$ in $G_{\bar{S}}$
By Lemma~\ref{lem:singlesink}, and since $R$ has an Hamiltonian path $P$, we get that $|Z \cap V(P)|=|Z \cap R| \le 1$. Informally, we will guess (among $|R| \le n$ choices) the potential vertex in $Z \cap R$, move it to $F$, and apply the previous kernel. Thus, for each of these $n$ choices we will obtain a shrinked input of size $\mathcal{O}(|F|k)$ that we can solve using an oracle, and we will answer {\em yes} iff one of these $n$ reduced instances is a {\em yes}-instance. More formally, let $R = \{v_1,\dots,v_r\}$. For any $i$, let $F_i = F \cup \{v_i\}$, $R_i= R \setminus \{v_i\}$ and $G'_i$ the graph obtained by applying Lemma~\ref{lemma:kernelpath} on $F_i$, $R_i$.
As $|G'_i|\le \mathcal{O}(|F|k)$, we can make an oracle call on each $G'_i$ to get an answer $a_i$ and output {\em yes} iff one of the $a_i$ is {\em yes}. If $(G,k)$ is a {\em yes}-instance (and $S$, $Z$ the associated solution), then there exists $i^*$ such that $R_{i^*} \cap Z = \emptyset$, implying that $(G'_{i^*},k)$ will be a {\em yes}-instance and that we will return {\em yes}. On the other side, if a $(G'_i,k)$ is a {\em yes}-instance then by Lemma~\ref{lemma:kernelpath} we get that $(G,k)$ is a {\em yes}-instance.  
\end{proof}

\if 10
\begin{lemma}
If $R$ is an IS, then FPT (already know as fpt / tw but gives a second example of fpt at "the opposite" of having a Hamil path.
\end{lemma}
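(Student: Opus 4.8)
The plan is to reduce the statement to the single-exponential treewidth algorithm of Section~\ref{sec:tw}. The key observation is that an independent set $R$ in a digraph carries no arcs at all, so in the underlying undirected graph $G_u$ every edge has at least one endpoint in $F$; equivalently, $F$ is a vertex cover of $G_u$. Since the treewidth of a graph never exceeds its vertex cover number, this immediately yields $tw(G_u)\le |F|$, and the parameter of the statement is precisely $|F|$ (the distance to an independent set, i.e.\ the vertex cover number of $G_u$).

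First I would make the bound on $tw(G_u)$ explicit by exhibiting a witnessing tree decomposition: arrange the vertices $R=\{r_1,\dots,r_{|R|}\}$ along a path and attach to the node of $r_i$ the bag $F\cup\{r_i\}$. Each bag has size $|F|+1$; every vertex of $F$ lies in all bags (so its connectivity condition is trivial), each $r_i$ lies in a single bag, and every arc $uv$ of $G$ has an endpoint in $F$ and is therefore covered. Hence this is a valid tree decomposition of $G_u$ of width $|F|$.

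With this bound in hand I would simply invoke the $2^{O(tw)}\times n$ algorithm for \textsc{KFVD} established in Section~\ref{sec:tw}: substituting $tw(G_u)\le |F|$ gives running time $2^{O(|F|)}\times n$, which is \textsc{FPT} in $|F|$. This places the independent-set case at the opposite structural extreme from the Hamiltonian-path case of the previous proposition: there $R$ is a maximally path-connected DAG, whereas here $R$ is a DAG with no arcs, yet both modulators yield fixed-parameter tractability.

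There is no genuine obstacle along the treewidth route; the only mild drawback is the large hidden constant inherited from the width machinery. If a cleaner direct algorithm (or a kernel analogous to the Hamiltonian case) were wanted, I would instead branch on the partition of $F$ into kept/deleted/sink vertices as in Branching~\ref{branch:fvs}, fixing the fate of every vertex of $F$; since each $r\in R$ has all its out-neighbours in $F$, after this branching each surviving $r$ either already points to a vertex forced to be a sink (hence is released in one step) or points only to surviving $F$-vertices, so the remaining reachability question is governed entirely by the $O(|F|)$ vertices of $F$ together with one-step hops through $R$. The part I would treat most carefully there is the consistency of the forced sink deletions (an $F$-vertex chosen to become a sink must have no kept out-neighbour) together with the accounting of $R$-vertices that must themselves be deleted to create sinks; but as $R$ contributes no internal structure, this reduces to polynomial post-processing inside each of the $3^{|F|}$ branches.
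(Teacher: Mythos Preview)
Your treewidth argument is correct, but note that the lemma itself already concedes it (``already know as fpt / tw''): the whole point here is to exhibit a \emph{direct} argument, parallel to the Hamiltonian-path case of Section~\ref{sec:hamiltonian}, rather than to re-invoke the width machinery. So your main route, while sound, is exactly the one the paper is explicitly setting aside.

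Your alternative sketch is closer in spirit to the paper's own argument, and in fact sharper. The paper proceeds as in Lemma~\ref{lemma:kernelpath}: it sets $A=R\cap N^+(F_{\le k})$ and $B=R\setminus A$, enlarges the modulator to $X'=F\cup A$ (of size at most $(k{+}1)|F|$), observes via Observation~\ref{obs:degree} and Corollary~\ref{lem:NZ} that $S\cap B=\emptyset$ in any optimal solution, and then brute-forces $S\subseteq X'$ at cost $2^{O(k|F|)}$. Your $3^{|F|}$ branching on $F=F_1\cup F_2\cup F_3$ reaches the same conclusion more directly: once $F$ is so partitioned, independence of $R$ together with $S=N^+(Z)$ forces $S\cap R=N^+(F_3)\cap R$ exactly (sinks lying in $R$ have all their out-neighbours in $F$, hence contribute nothing to $S\cap R$), so $S=F_2\cup(N^+(F_3)\cap R)$ is fully determined and a single polynomial check per branch suffices. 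This is precisely the step you flagged as needing care; it goes through, and your running time $3^{|F|}\cdot n^{O(1)}$ improves on the paper's $2^{O(|F|^2)}\cdot n^{O(1)}$.
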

Do as before with $A$ and $B$ : move $A$ to $X$ to get $X'$ and $R'=B$. As $R$ was an IS, here we know that $R' \cap S = \emptyset$ (and not $K$).
Thus, guess $X' \cap S$ and check if it a solution.

\begin{lemma}
  If $R$ is, for example, an arbitrary large collection $R_i$ where $R_i$ is a dag with a Hamil path, I think this is ftp
\end{lemma}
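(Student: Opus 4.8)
The plan is to combine the three preceding lemmas into a guess-and-reduce Turing kernelization. First I would fix a minimum knot-free vertex deletion set $S$ of $G$, with set of sinks $Z$ in $G_{\bar S}$. Since $R$ is a DAG carrying a Hamiltonian path $P$ (so $V(P)=R$), Lemma~\ref{lem:singlesink} applied to $P$ gives $|Z \cap R| = |Z \cap V(P)| \le 1$. This is the structural fact that drives the reduction: at most one vertex of the (potentially large) part $R$ can be a sink in an optimal solution.

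The reduction itself is a branching over the unique possible sink in $R$. Writing $R=\{v_1,\dots,v_r\}$ with $r\le n$, for each $v_i$ I set $F_i = F \cup \{v_i\}$ and $R_i = R \setminus \{v_i\}$; then $R_i$ is still a DAG and $(F_i,R_i)$ is still a partition of $V(G)$, so Lemma~\ref{lemma:kernelpath} applies and produces, in polynomial time, a graph $G'_i$ on $k|F_i| = k(|F|+1) = \mathcal{O}(k|F|)$ vertices. I then issue an oracle call on each $(G'_i,k)$ and answer \emph{yes} iff at least one call answers \emph{yes}. Since by Observation~\ref{viability} we may assume $k<dfv(G)\le |F|$ (as $F$ is a directed feedback vertex set), each $G'_i$ has $\mathcal{O}(|F|^2)$ vertices, polynomial in the input, and there are only $\mathcal{O}(n)$ oracle calls, so this is indeed a polynomial Turing kernel.

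For correctness, soundness is immediate: if some $(G'_i,k)$ is a \emph{yes}-instance, the first item of Lemma~\ref{lemma:kernelpath} gives that $(G,k)$ is a \emph{yes}-instance. For completeness, suppose $(G,k)$ is a \emph{yes}-instance and let $S$, $Z$ be a minimum solution as above. Since $|Z \cap R| \le 1$, there is an index $i^*$ with $Z \cap R \subseteq \{v_{i^*}\}$, hence $Z \cap R_{i^*} = \emptyset$. The second item of Lemma~\ref{lemma:kernelpath}, applied to $(F_{i^*},R_{i^*})$, then certifies that $(G'_{i^*},k)$ is a \emph{yes}-instance, so the algorithm answers \emph{yes}.

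The main obstacle — and the reason the Hamiltonian-path hypothesis is needed — is precisely the asymmetric guarantee of Lemma~\ref{lemma:kernelpath}: its backward implication only holds when there exists an optimal solution whose sinks avoid $R$. Without controlling $Z \cap R$ one cannot invoke it, and the Hamiltonian path is exactly what reduces that control to a single guessable vertex (via Lemma~\ref{lem:singlesink}). A secondary point to verify is the bookkeeping: that removing a vertex from $R$ keeps $R_i$ acyclic and $(F_i,R_i)$ a partition, that the bound $k<dfv(G)\le|F|$ makes $\mathcal{O}(k|F|)$ polynomial, and that the number of oracle calls stays polynomial — but all of these are routine.
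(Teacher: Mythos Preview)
You have misread the statement. The hypothesis is that $R$ is an \emph{arbitrarily large collection} of pieces $R_1,R_2,\dots,R_m$, each of which is a DAG with a Hamiltonian path; it is \emph{not} assumed that $R$ itself has a Hamiltonian path. Your argument treats $R$ as a single DAG with a single Hamiltonian path $P$ spanning all of $R$, and this is exactly the Proposition already proved just above in Section~\ref{sec:hamiltonian}. For the actual statement, Lemma~\ref{lem:singlesink} only gives $|Z\cap R_i|\le 1$ for each $i$ separately, so $|Z\cap R|$ can be as large as $m$, and $m$ is unbounded in the parameter. Your guess-one-vertex-and-kernelize scheme would then require guessing one vertex per component, i.e.\ up to $\prod_i |R_i|$ branches, which is not polynomial.

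The paper's sketch takes a different route: first guess $S\cap F$ (this costs $2^{|F|}$), then process the components $R_1,R_2,\dots$ by dynamic programming. For each $R_i$ one branches over the (at most one) sink in $R_i\cap Z$, which is $O(|R_i|)$ choices, and the DP state records which vertices of $F$ have already been connected to a sink. The state space is thus $2^{O(|F|)}$ and each transition is polynomial, giving an FPT algorithm in $|F|$ regardless of how many components $R$ has. The key idea you are missing is that the independent one-sink-per-component choices must be combined via a DP whose state lives in $F$, not enumerated globally.
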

Guess $X \cap S$, then do a DP wich starts with $R_1$. Branch in $n$ to decide which is the guy in $R_1 \cap K$ (must respect the fixed deleted part in $X$), and mark accordingly
the guys in $X$ that became connected to a sink.

I don't know if we can black box (if dist to bla is FPT, then dist to an arbitrary large collection of bla is FPT)

Ideas for FPT for an arbitrary dag : partition $R$ into $A$ and $B$. Guess what happen in $X \cup A$ and clean.
It remains $X'$ and $R'$ (subdag) where $X' \cap (S \cup K) = \emptyset$.
Now try to use observations on sinks in $R'$. If we can locate one sink of S in FPT, then ok.
Let a color of $x \in R'$ be $(N^+(x) \cap X', |N^+(x) \cap R'|)$ (only card for second part as no hope to have true twins, ex in a path). There is an FPT number of colors.
Intuition that for guys in the same colors and such that there is a path from $v_1$ to $v_2$, if $v_1 \in K$ (and thus $v_2 \notin K$), it SHOULD be better to take $v_2$
as it cost the same. And thus we could simply guess the color of a sinkg and take a minimal element?
\fi


\bibliography{carneiro-et-al}

\end{document}